\newcommand{\mbl}[1]{\textcolor{blue}{#1}}
\newcommand{\nop}[1]{}
\newcommand{\todo}[1]{\textcolor{red}{#1}}
\begin{document}

\title{Mining Top-k Sequential Patterns in Database Graphs
}
\subtitle{A New Challenging Problem and a Sampling-based Approach}



\author{Mingtao Lei  \and Lingyang Chu \and Zhefeng Wang
}


\institute{Mingtao Lei \at
                Key Laboratory of Trustworthy Distributed Computing and Service (BUPT), Ministry of Education \& School of Cyberspace Security, Beijing University of Posts and Telecommunications, Beijing, China \\
                \email{leimingtao@bupt.edu.cn}           
                \and
               Lingyang Chu \at
               Simon Fraser University, Burnaby, Canada \\
               \email{lca117@sfu.ca}
               \and
               Zhefeng Wang \at
               University of Science and Technology of China, Hefei, China\\
               \email{zhefwang@mail.ustc.edu.cn}
               }

\date{Received: date / Accepted: date}

\maketitle

\begin{abstract}
In many real world networks, a vertex is usually associated with a transaction database that comprehensively describes the behaviour of the vertex.
A typical example is the social network, where the behaviour of every user is depicted by a transaction database that stores his daily posted contents.
A transaction database is a set of transactions, where a transaction is a set of items. Every path of the network is a sequence of vertices that induces multiple sequences of transactions. The sequences of transactions induced by all of the paths in the network forms an extremely large sequence database.
Finding frequent sequential patterns from such sequence database discovers interesting subsequences that frequently appear in many paths of the network.
However, it is a challenging task, since the sequence database induced by a database graph is too large to be explicitly induced and stored.
In this paper, we propose the novel notion of database graph, which naturally models a wide spectrum of real world networks by associating each vertex with a transaction database.
Our goal is to find the top-$k$ frequent sequential patterns in the sequence database induced from a database graph.
We prove that this problem is \#P-hard. 
To tackle this problem, we propose an efficient two-step sampling algorithm that approximates the top-$k$ frequent sequential patterns with provable quality guarantee.
Extensive experimental results on synthetic and real-world data sets demonstrate the effectiveness and efficiency of our method.
\nop{
Mining sequential patterns in such networks may lead to informative insights.  
However, to the best of our knowledge, sequential pattern mining in such a context has never been tackled in literature.  }

\keywords{Database Graph \and Sequential Pattern Mining \and Uniform Sampling}
\end{abstract}

\section{Introduction}
\label{sec:intro}

Graphs and networks are popularly used to model advanced applications, such as social network analysis and communication network fault detection.  More often than not, rich data exists in such graph and network applications.  Consequently, graphs and networks in such applications have to be enriched by capturing more information in vertices and edges, such as labeled graphs~\citep{Ye2017} and attributed graphs~\citep{Pfeiffer2014}, where in a labeled graph each vertex is associated with a unique label, and in an attributed graph each vertex is associated with an attribute vector.  Mining various types of patterns in labeled graphs and attributed graphs has enjoyed interesting applications~\citep{Dutta2017,Tong2007}.

In some advanced graph applications, a vertex may contain much more information than just a label or an attribute vector. Such rich information on each vertex more often than not is better captured by a transaction database.
For example, in content-rich social networks, such as Twitter, YouTube, WeChat and DBLP, a vertex modelling a user can be associated with a transaction database that stores multiple transactions of content, where each transaction stores a tweet, a video clip, a post or a publication. 
As another example, in a road network of point of interests (POIs), each vertex representing a POI can be associated with a transaction database of visitor comments, where each transaction stores the keywords of one visitor comment.

To better model such graphs with rich information in vertices, we propose a novel notion of \emph{database graph}, where each vertex is associated with a transaction database.
Comparing with the label and the attribute vector on each vertex, the transaction database associated with each vertex naturally and concisely captures much more valuable information, such as the co-occurrences of items and the frequencies of patterns (i.e., set of items).

\nop{To the best of our knowledge, mining frequent sequential patterns from graphs where vertices contain rich information has not been tackled in literature. }

Finding interesting patterns in a database graph with rich content in vertices may lead to informative analytic results. For example, by mining sequential patterns (i.e., frequent subsequences) from all possible random walks in a social network with rich content in vertices, one may find interesting interaction patterns among users. Specifically, consider a social network of users, where each vertex, as a user, is associated with a transaction database storing the daily posts composed by the user, and each transaction stores a set of keywords of one post. In this database graph, a sequential pattern $\langle (\text{AI, Alpha Go}) \rightarrow (\text{deep learning, deepmind}) \rightarrow (\text{AI, medical doctor})\rangle$ indicates that it happens frequently that a user who writes about AI and Alpha Go connects to another user who writes about deep learning and deep mind, and further connects to a third user who writes about AI and medical doctor. It is likely that those topics may stimulate one and another.

The task illustrated in the above example is very different from traditional sequential pattern mining. Specifically, instead of searching for patterns from a sequence database, here we are given a database graph, where each vertex is associated with a transaction database, and our goal is to find sequential patterns that are frequently induced by all possible random walks of the whole graph. \nop{\todo{[What do the following two sentences mean? We would like to show that, our patterns actually represent the relationships among vertices and the random walks are quite different from the sequences defined in the sequence databases.]} \mbl{We demand for that, if a sequential pattern is induced by a random walk, every vertex in this random walk must donate at least one item for the pattern. This constraint on patterns leads to a different usage conditions for pruning strategies. }}

As we want to find sequential patterns, a natural question is whether the many existing sequential pattern mining methods, such as PrefixSpan~\citep{Pei2001}, can be extended to solve this problem? Unfortunately, there is no a straightforward way to apply existing methods on a database graph, since none of those methods take a database graph as input and the number of transaction sequences induced by all possible random walks of a database graph is exponential with respect to the number of vertices.

\nop{However, in these applications, networks can not be formulated as labeled graphs and attributed graphs. The reason is that, not only a label but also an attribute vector, can not express the co-occurrence of items and the pattern frequencies.}

To the best of our knowledge, our study is the first to tackle the problem of finding top-$k$ sequential patterns in database graphs. Our major idea is to maintain a sample of random transaction sequences so that we can approximate the top-$k$ patterns with provable quality guarantees. We make several contributions. 

First, we formulate the problem of mining top-$k$ sequential patterns in database graphs. A database graph in our problem is defined as a graph where each vertex is associated with a transaction database and a transaction is a set of items. Different from the traditional sequential pattern mining problem that aims to find frequent patterns in a given sequence database, we are interested in finding frequent sequential patterns in the extremely large sequence database that is induced by all possible random walks of the whole database graph. The major challenge of our problem is that the sequence database of a large database graph is usually too large to be explicitly induced and stored.
By reducing from the conventional sequential pattern mining problem, we prove that our problem is \#P-hard. 

Second, we propose an exact sequential pattern finding algorithm, which is also used as a baseline. By fixing the path length $l$ and an integer $k$, we first collect all of the length-$l$ transaction sequences and then apply one of the state-of-the-art sequential pattern mining techniques to obtain the exact top-$k$ patterns. However, when the size of the database graph increases, this approach suffers from the expensive time and space cost, since the total number of transaction sequences is exponential with respect to the number of vertices.

Third, we carefully design a two-step sampling framework that significantly improves the sequential pattern mining efficiency. 
We first sample a set of transaction sequences from the database graph by a two-step sampling algorithm. 
Using the sampled transaction sequences, we design an unbiased estimator to approach the frequencies of sequential patterns in the database graph with provable guarantee on the estimation error.

\nop{
build an unbiased pattern estimator that uses the sampled transaction sequences to estimate the pattern frequencies. 
In the first step, we sample a length-$l$ path $p=\langle v_1,\ldots, v_q, \ldots, v_{l+1} \rangle$ from the graph. Then in the second step, we sample a transaction sequence $ts$ from the sampled path $p$, where a transaction $T_q$ of $ts$ is sampled from the database $\mathcal{T}_{{v_q}}$ of vertex $v_q$ with the probability $\frac{1}{|\mathcal{T}_{{v_q}}|}$. Notice that, this two-step sampling is not uniform for sampling length-$l$ transaction sequences. We rectify the frequencies of sampled transaction sequences in the mining process following the unbiased pattern frequency estimator. That is, for a pattern $s$ contained in the transaction sequence sampled from $p^i$, we consider that $s$ appears $M^i$ times, where $M^i=\prod^{l+1}_{i=1}|\mathcal{T}_{v_q^i}|$ is the number of all transaction sequences supported by the $i$-th sampled length-$l$ path $p^i$. To achieve the quality guarantee, we present a bound on the sample size for a fixed $l$, which is $\frac{12|I|(l+1)+12}{\varepsilon^2 a}  ln\frac{2}{\delta}$, where $\varepsilon,\delta \in (0,1)$, $a=\frac{1}{|P_l|M^*} \sum_{i=1}^{|P_l|}M^i$ and $M^*=\max_i M^i$.}

Last, we conduct extensive experiments on both synthetic and real-world data sets. The results demonstrate the effectiveness and efficiency of the proposed method. We also conduct a case study to show meaningful sequential patterns mined from the ArnetMiner data set \citep{Tang2008}.

The rest of the paper is organized as follows. Section~\ref{sec:relatedwork} reviews the related work. We formulate the problem and present a baseline in Section~\ref{sec:prob}. In Section~\ref{sec:sampleAl}, we develop our two-step sampling method and give the upper bound of sample complexity for our proposed algorithm. A systematic empirical study is reported in Section~\ref{sec:exp}. We conclude our work in Section~\ref{sec:con}.

\section{Related Work}
\label{sec:relatedwork}

To the best of our knowledge, mining sequential patterns in database graphs is a new problem, which has not been touched in literature. 
It is related to sequential pattern mining and sampling methods.

\subsection{Sequential Pattern Mining}
Sequential pattern mining is a well studied subject in data mining, which was first introduced by~\cite{Agrawal1995}. It finds all frequent subsequences from a database of sequences and has enjoyed many applications, such as analyzing shopping patterns~\citep{Shang2016}, classification~\citep{Wang2016} and understanding user behavior~\citep{Zheng2016}. 
The Apriori-based algorithms, such as GSP \citep{Srikant1996} and SPADE \citep{Zaki2001}, improve the efficiency of sequential pattern mining~\citep{Agrawal1995} by reducing the search space using the anti-monotonicity of sequential patterns. Distributed methods were also proposed to accelerate the mining processing~\citep{Ge2016}.
To avoid generating candidate sequences, \cite{Pei2000} proposed FreeSpan. Later, \cite{Pei2001} developed PrefixSpan. The two methods mine sequential patterns by sequence database projection and pattern growth. 

Choosing an appropriate support threshold for sequential pattern mining is challenging in practice~\citep{Fournier2013}. \cite{Tzvetkov2005} proposed TSP, which aims to mine top-$k$ frequent closed sequential patterns whose lengths pass a threshold. Closed sequential patterns are concise representations of sequential patterns. 

\cite{Liu2014} proposed to proactively reduce the representation of sequences to uncover significant, hidden temporal structures. Their key idea was to decrease the level of granularity of symbols in sequences by clustering the symbols. They claimed that, in this way, symbolic sequences would be represented as numerical sequences or point clouds in the Euclidean space which reduces the time cost. However, this reduction does not change the hardness of the sequential pattern mining problem and may lose some interesting patterns due to the information loss caused by embedding.

There are many existing sequential pattern mining methods.  A thorough review on the subject is far beyond the capacity of this paper.  \cite{Dong2007} provided a comprehensive review.
Mining sequential patterns from database graphs is a new problem. As explained in Section~\ref{sec:intro}, the existing methods cannot be directly applied on a database graph.

\subsection{Sampling Methods}
Sampling and approximation methods~\citep{Thompson2012, Cochran1977} have been widely used in pattern mining to tackle large data sets or expedite mining. Sampling methods have been widely used in pattern mining~\citep{Calders2010} and association rule mining~\citep{Toivonen1996}, which reduce the amount of computation dramatically and achieve provable quality guarantees.

\cite{Rassi2007} proposed to ease the sequential pattern mining operations by modeling the data as a continuous and potentially infinite stream. They used Hoeffding concentration inequalities to prove a lower bound of the sample size. For reducing static database access by constructing a random sample before mining, they extended the static sampling analysis to the data stream model. They also conducted a simple replacement algorithm using an exponential bias function to regulate the sampling. Another study about static sampling for patterns was proposed by~\cite{Riondato2012}. They applied the statistical concept of \emph{Vapnik-Chervonenkis (VC) dimension} to develop a novel technique that provides tight bounds on the sample size. The resulting sample size was linearly dependent on the VC-dimension of a range space associated with the dataset to be mined. However, their method cannot be straightforwardly extended to sample transaction sequences from a database graph, since it is not applicable to deal with the transaction sequences.

Progressive sampling methods are also widely used. \cite{Pietracaprina2010} proposed to mine top-$k$ frequent itemsets through progressive sampling. They first presented an upper bound on the sample size. \nop{which guaranteed that the top-$k$ frequent itemsets mined from a random sample, with probability larger than a specified value.}Then they devised a progressive sampling approach that extracted the top-$k$ frequent itemsets from increasingly larger samples until suitable stopping conditions were met or the upper bound was hit. \cite{Riondato2015} introduced another progressive sampling method with \emph{Rademacher Averages} \todo{~\citep{Bartlett2002}} for mining frequent itemsets. This work studied the trade-off between approximation quality and sample size using concepts and results from \emph{statistical learning theory}~\citep{Cherkassky1997}.
The algorithm started from a small sample and enlarged it until a suitable stopping condition was met and a high-quality approximation can be obtained from the sample.
The stopping condition was based on bounds to the empirical Rademacher average of the problem. However, these methods focused on the conventional frequent itemset mining problem, but cannot be directly extended to find sequential patterns in database graphs.

Sampling transaction sequences from a database graph is also related to graph path sampling methods~\citep{Zhang2015, Ribeiro2012}.
For example,~\cite{Zhang2015} proposed Panther to measure similarity between vertices by sampling paths with random walk. However, Panther does not consider transaction databases on vertices, thus it cannot be straightforwardly extended to sampling transaction sequences from the database graphs.

\section{Problem Definition and Baseline}\label{sec:prob}

In this section, we define the problem formally and then present a baseline. We also show that the problem is indeed \#P-hard.

\subsection{Problem Definition} 

Let $I$ be a set of \emph{items}.  An \emph{itemset} $X$ is a subset of $I$, that is, $X \subseteq I$.  A \emph{transaction} is a tuple $T=(tid, X)$, where $tid$ is a unique transaction-id and $X$ an itemset. A transaction $T=(tid, X)$ is said to \emph{contain} itemset $Y$ if $Y \subseteq X$. In such a case, we overload the subset symbol and write $Y \subseteq T$. A \emph{transaction database} $\mathcal{T}$ is a set of transactions.

A \emph{database graph} is a directed graph, denoted by $G=(V,E, \mathbb{T})$, where $V$ is a set of vertices, $E =\{(u, v) \mid u, v \in V\} \subseteq V \times V$ is a set of directed edges and $\mathbb{T}$ is a set of transaction databases. 
Each vertex $v \in V$ is associated with a transaction database $\mathcal{T}_v\in \mathbb{T}$. 
For the sake of simplicity, hereafter a database graph is also called a \emph{graph}.

A \emph{(directed) path} $p$ in a directed graph $G$ is a sequence of vertices $p=\langle v_1, \ldots, v_h \rangle$, where $(v_i, v_{i+1}) \in E$ ($1 \leq i < h)$, and the \emph{length} of the path is $len(p)=h-1$, the number of edges in the path. 
In this paper, we consider simple paths only, that is, a vertex appears in a path at most once. For two vertices $v_i,v_j$ in $G$, the \emph{distance} from $v_i$ to $v_j$, denoted by $dist(v_i, v_j)$, is the length of the shortest path from $v_i$ to $v_j$ in $G$. For a vertex $v \in V$, the \emph{$l$-neighborhood} of $v$ is $N_l(v)=\{v_i \mid dist(v, v_i) \leq l\}$. We define the \emph{$l$-th order degree} of $v$ as the number of vertices of distance $l$ from $v$, that is,
\begin{equation}
\label{eq:degree}
	d_l(v)=
	\begin{cases}
		1  & \text{if } l=0 \\
		\|\{v_i \mid dist(v, v_i)=l\}\|  & \text{if }l>0
	\end{cases}
\end{equation}
Apparently,  for $l > 0$, $d_l(v)=\| N_l(v) - N_{l-1}(v)\|$.

We are interested in sequential patterns carried by paths in a graph.  
Formally, a \emph{transaction sequence} in $G$, denoted by $ts = \langle T_1, \ldots, T_h \rangle$, is a sequence of transactions such that there exists a path $p=\langle v_1, \ldots, v_h \rangle$ and $T_i \in \mathcal{T}_{v_i}$ for $1 \leq i \leq h$.  
In such the case, we say $ts$ is \emph{supported} by $p$ and the length of $ts$ is $len(ts)=h-1$.

A \emph{sequential pattern} $s=\langle X_1, \ldots, X_h \rangle$ is a series of itemsets and the length of the sequential pattern is $len(s)=h-1$. 
A transaction sequence $ts = \langle T_1, \ldots, T_h \rangle$ \emph{contains} a sequential pattern $s=\langle X_1, \ldots, X_h \rangle$ if $X_i \subseteq T_i$ for $1 \leq i \leq h$. 
A \emph{sequential pattern} is also called a \emph{pattern}.

Denote $D_l$ as the set of all length-$l$ transaction sequences supported by paths in $G$, 
the \emph{frequency} of pattern $s$ in $D_l$, denoted by $f(s)$, is the proportion of unique transaction sequences in $D_l$ that contain $s$.

\begin{example}[Concepts]\label{ex:concepts}
Figure~\ref{fig:concepts} shows a graph $G$, where each vertex is associated with a transaction database. 
Consider vertex $v_1$.  The \emph{$1$-}, \emph{$2$-} and \emph{$3$-neighborhoods} of $v_1$ are $N_1(v_1)=\{v_2, v_3\}$, $N_2(v_1)=\{v_2, v_3, v_4, v_5, v_6\}$ and $N_3(v_1)=\{v_2, v_3, v_4, v_5, v_6, v_7\}$, respectively. The first, second, and third degrees of $v_1$ are $d_1(v_1)=2$, $d_2(v_1)=3$ and $d_3(v_1)=1$, respectively. Starting from $v_1$, there are 6 paths, $p_1=\langle v_1, v_2 \rangle$, $p_2=\langle v_1, v_3 \rangle$, $p_3=\langle v_1, v_2, v_4 \rangle$, $p_4=\langle v_1, v_2, v_5 \rangle$, $p_5=\langle v_1, v_3, v_6 \rangle$ and $p_6=\langle v_1, v_2, v_4, v_7 \rangle$. 
Since $p_6$ is the only length-$3$ path in $G$, the length-$3$ transaction sequence set $D_3$ in $G$ consists of all the transaction sequences supported by $p_6$.
Figure~\ref{tab:exPatterns} shows the transaction sequences in $D_3$. 
\end{example}

\begin{figure}[t]
\centering
\includegraphics[width=86mm]{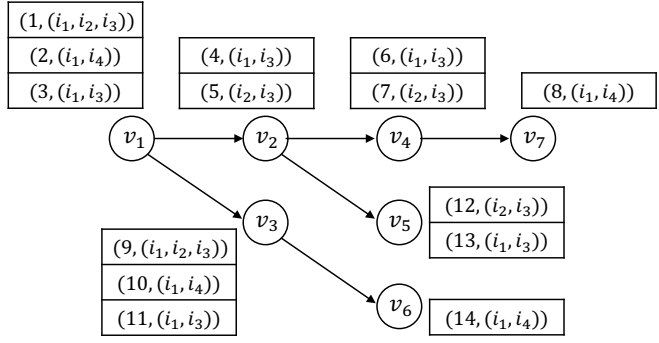}
\caption{An example of a graph $G$, where each vertex is associated with a transaction database.}
\label{fig:concepts}
\end{figure}

\begin{figure}[t]
\centering
	\begin{tabular}{|c|}
	\hline
	$\langle (i_1, i_2, i_3)(i_1, i_3)(i_1, i_3)(i_1, i_4) \rangle$ \\ 
	$\langle (i_1, i_2, i_3)(i_1, i_3)(i_2, i_3)(i_1, i_4) \rangle$ \\
	$\langle (i_1, i_2, i_3)(i_2, i_3)(i_1, i_3)(i_1, i_4) \rangle$ \\ 
	$\langle (i_1, i_2, i_3)(i_2, i_3)(i_2, i_3)(i_1, i_4) \rangle$ \\
	$\langle (i_1, i_4)(i_1, i_3)(i_1, i_3)(i_1, i_4) \rangle$ \\
	$\langle (i_1, i_4)(i_1, i_3)(i_2, i_3)(i_1, i_4) \rangle$ \\
	
	$\langle (i_1, i_4)(i_2, i_3)(i_1, i_3)(i_1, i_4) \rangle$ \\
	$\langle (i_1, i_4)(i_2, i_3)(i_2, i_3)(i_1, i_4) \rangle$ \\
	$\langle (i_1, i_3)(i_1, i_3)(i_1, i_3)(i_1, i_4) \rangle$ \\
	$\langle (i_1, i_3)(i_1, i_3)(i_2, i_3)(i_1, i_4) \rangle$ \\
	$\langle (i_1, i_3)(i_2, i_3)(i_1, i_3)(i_1, i_4) \rangle$ \\
	$\langle (i_1, i_3)(i_2, i_3)(i_2, i_3)(i_1, i_4) \rangle$ \\

	\hline
	\end{tabular}
\caption{The set of length-$3$ transaction sequences $D_3$ of the graph $G$ in Figure~\ref{fig:concepts}.\label{tab:exPatterns}}
\end{figure}

\begin{figure}[t]
\newcommand{\tabincell}[2]{\begin{tabular}{@{}#1@{}}#2\end{tabular}}
\centering
	\begin{tabular}{|c|}
	\hline
	$\langle (i_1)(i_3)(i_3)(i_1) \rangle$\\ 
	$\langle (i_1)(i_3)(i_3)(i_4) \rangle$\\
	$\langle (i_1)(i_3)(i_3)(i_1, i_4) \rangle$\\ 
	$\langle (i_3)(i_3)(i_3)(i_1) \rangle$\\
	$\langle (i_3)(i_3)(i_3)(i_4) \rangle$\\
	$\langle (i_3)(i_3)(i_3)(i_1, i_4) \rangle$\\
	$\langle (i_1, i_3)(i_3)(i_3)(i_1) \rangle$\\
	$\langle (i_1, i_3)(i_3)(i_3)(i_4) \rangle$ \\
	$\langle (i_1, i_3)(i_3)(i_3)(i_1, i_4) \rangle$\\
	\hline
	\end{tabular}
\caption{The top-$9$ length-$3$ frequent sequential patterns in $D_3$ of the graph $G$ in Figure~\ref{fig:concepts}. \label{tab:exPatterns2}}
\end{figure}

Now, we are ready to define the \emph{Top-$k$ Sequential Pattern Mining  problem in Graph (TSPMG) as follows. } 

\begin{problem}[TSPMG]
\label{the_problem}
Given a graph $G$, an integer $k > 0$ and a fixed length $l>0$, the problem of finding the top-$k$ sequential patterns is to find the top-$k$ patterns of length $l$ that have the largest frequencies in $D_l$ of $G$.
\end{problem}

\begin{example}[TSPMG]
\label{ex2}
Consider the graph $G$ in Figure~\ref{fig:concepts}. Let $k=9$ and $l=3$. The top-$9$ length-$3$ patterns, which have the largest frequencies in $D_3$ (shown in Figure~\ref{tab:exPatterns}), are 
\nop{In $G$, there is only one length-$3$ path $p_6=\langle v_1, v_2, v_4, v_7 \rangle$, which induces the length-$3$ transaction sequence set $D_3$ . }
listed in Figure~\ref{tab:exPatterns2}.

\nop{Carried by $p_6$, we can extract 12 transaction sequences as shown in Table~\ref{tab:exPatterns}. In the same table, we also show the top-9 sequential patterns sorted by their frequency.}
\end{example} 

Table~\ref{tab:notations} summarizes the frequently used notations.

\begin{table}[t]

\caption{Frequently used notations.}
\centering
\label{tab:notations}
	\begin{tabular}{|p{1.8cm}<{\centering}|p{6cm}|}
	\hline
	Notation & \multicolumn{1}{|c|}{Description}\\
	\hline
	$dist(v_i,v_j)$ & The distance from $v_i$ to $v_j$. \\ 
	\hline
	$d_i(v)$ & The $i$-th order degree of $v$. \\ 
	\hline
	$p$ & A \emph{(directed) path} in $G$, which is a sequence of vertices. \\ 
	\hline
	
	\nop{
	$ts$  & A transaction sequence, which is a sequence of transactions.  \\ 
	\hline
	$s$ & A sequence, also named as pattern. \\ 
	\hline}
	
	\nop{
	$\sigma(s)$ & The support of pattern $s$.\\
	\hline}
	
	$f(s)$ & The real frequency of pattern $s$. \\
	\hline
	$\hat{f}(s)$ & The unbiased pattern frequency estimator.\\
	\hline
	$D_l$ & The set of all length-$l$ transaction sequences in $G$.\\
	\hline
	$P_l$ & The set of all length-$l$ paths in $G$. \\
	\hline
	$S_l$ & The set of sampled length-$l$ transaction sequences. \\
	\hline
	$I$ & The set of all items in $G$. \\
	\hline
\end{tabular}
\end{table}


\subsection{Hardness of the TSPMG Problem}

In this section, we analyze the hardness of the TSPMG problem.
We define the \emph{Sequential Pattern Counting problem in Graph} (SPCG) and prove that the SPCG problem is \#P-hard.
In sequel, we prove that the TSPMG problem is at least as hard as the SPCG problem.

\begin{problem}[SPCG]
\label{count_problem}
Given a database graph $G$, a threshold $\alpha > 0$ and a fixed length $l>0$, the problem of counting sequential patterns is to count the number of length-$l$ patterns with frequencies larger than $\alpha$ in $D_l$ of $G$.
\end{problem}

\begin{theorem}
\label{th:hardness_1}
The SPCG problem is \#P-hard.
\end{theorem}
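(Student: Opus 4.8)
The plan is to prove \#P-hardness of SPCG by a polynomial reduction from the conventional problem of counting frequent sequential patterns in a sequence database, which is itself \#P-hard (it subsumes counting frequent itemsets, a classical \#P-hard counting problem, and remains \#P-hard for any fixed sequence length $l$, e.g.\ by padding each itemset-counting instance with $l$ trailing positions carrying a single dummy item). The guiding observation is that an ordinary sequence database is just a very restricted database graph, so counting frequent patterns over the induced set $D_l$ should collapse exactly to counting frequent patterns over the original sequence database.

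First I would fix the source instance: a sequence database $SDB = \{s_1, \ldots, s_m\}$ in which each $s_j = \langle X_1^j, \ldots, X_{l+1}^j \rangle$ is a sequence of $l+1$ itemsets over an item set $I$, together with a threshold $\alpha$, where the source problem asks for the number of length-$l$ patterns whose frequency in $SDB$ exceeds $\alpha$ (frequency being the fraction of sequences containing the pattern). From this I would build, in time linear in $\sum_j (l+1)$, a database graph $G = (V, E, \mathbb{T})$ as the \emph{disjoint union of $m$ directed paths}: for each $s_j$ I introduce $l+1$ fresh vertices $v_1^j, \ldots, v_{l+1}^j$ with edges $(v_i^j, v_{i+1}^j)$ for $1 \le i \le l$, and I set each vertex database to a single transaction $\mathcal{T}_{v_i^j} = \{(tid, X_i^j)\}$. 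No vertices or edges are shared across distinct $j$.

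Next I would verify that $D_l$ of $G$ coincides with $SDB$. Since every connected component is a directed path with exactly $l$ edges, the set $P_l$ of length-$l$ paths is precisely $\{p_1, \ldots, p_m\}$ with $p_j = \langle v_1^j, \ldots, v_{l+1}^j \rangle$: simple paths cannot revisit vertices and the components are disjoint, so no other length-$l$ directed path arises. Because each vertex carries exactly one transaction, the product of database sizes along any path is $1$, so each $p_j$ supports exactly one transaction sequence, namely $s_j$; hence $D_l = \{s_1, \ldots, s_m\}$ with $|D_l| = m$. Consequently, for every length-$l$ pattern $s$ the frequency computed in $D_l$ equals the frequency of $s$ in $SDB$, so the number of length-$l$ patterns with frequency above $\alpha$ in $D_l$ equals the number of such patterns in $SDB$. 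A single call to an SPCG oracle thus answers the source counting problem exactly, and since the reduction is polynomial, SPCG is \#P-hard.

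The step I expect to be the main obstacle is arguing the \emph{exact} equality $D_l = SDB$ (not a mere inclusion) while respecting the paper's definition of frequency as a proportion over \emph{unique} transaction sequences. Two points need care: (i) ruling out spurious length-$l$ transaction sequences, which is handled by making the components vertex-disjoint and of length exactly $l$ and by giving each vertex a singleton database; and (ii) the set semantics of $D_l$, under which duplicate input sequences would collapse. Point (ii) I would neutralize by taking the source problem under set semantics with pairwise-distinct sequences, which is without loss of generality for hardness since frequent-itemset counting is already \#P-hard under set semantics. The only remaining ingredient is the base \#P-hardness of conventional frequent sequential pattern counting at the fixed length $l$, which I would either cite directly or derive via the dummy-padding embedding of frequent itemset counting sketched above.
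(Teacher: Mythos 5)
Your proof is correct, but it takes a genuinely different route from the paper's. The paper also builds one vertex per transaction with a singleton database and links the transactions of each source sequence into a path; however, because it reduces from \emph{general} sequential-pattern counting, where a pattern may occur in a sequence with gaps, it additionally inserts the transitive-closure edges $(T_i,T_j)$ for $i<j$, turning each sequence into a DAG so that every gapped occurrence corresponds to some directed path, and then argues (at sketch level) that the patterns of $SDB$ and of the constructed graph coincide. You avoid the gap issue entirely by restricting the source problem to sequences and patterns of the same fixed length $l$, where gapped and positional containment coincide; your graph is then a disjoint union of bare length-$l$ paths, $D_l$ is literally $SDB$, frequencies are preserved exactly, and a single oracle call suffices. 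The price is that you must separately establish \#P-hardness of this fixed-length source problem, which you do by padding frequent-itemset-counting instances with $l$ dummy positions, whereas the paper leans on hardness of unrestricted sequential-pattern counting. What your route buys is an exact correspondence that explicitly respects SPCG's fixed-length requirement and its set semantics over unique transaction sequences; the paper's transitive-closure construction glosses over the fact that a single length-$n$ source sequence contributes up to $\binom{n+1}{l+1}$ distinct elements to $D_l$, so proportions in $D_l$ are no longer the proportions in $SDB$ and the threshold correspondence needs extra care. The two soft spots you flag yourself, distinctness of transactions in the hard itemset instances and whether pattern itemsets may be empty, are minor and each can be dispatched in a line (standard hard instances have distinct transactions, and empty positions at the $l$ padded slots only rescale the count by a computable factor of $2^l$).
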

\begin{proof}[sketch]

Given a sequence database $SDB$, for each sequence in the database that contains a list of transactions, we can construct a graph that contains a single path such that a node is set up for each transaction and the nodes are linked into a path according to the order of transactions in the sequence.  Moreover, for a single path $T_1 \rightarrow T_2 \rightarrow \cdots \rightarrow \nop{T_l}{T_{l+1}}$, we add edges $(T_i, T_j)$ for $i < j$.  $\frac{l (l-1)}2$ edges are added on a path of length $l$. In this way, a sequence in $SDB$ is transformed into a directed acyclic graph (DAC).

Then, we combine all the DACs into a database graph $G$.  That is, the graph is a collection of DACs, one representing a sequence in $SDB$. Thus, each node contains a transaction database where there is only one transaction.  Apparently, this reduction step takes polynomial time.

It can be shown easily that a subsequence is a sequential pattern in $SDB$ if and only if it is a sequential pattern in the database graph constructed.
\end{proof}

\begin{theorem}
\label{th:hardness_2}
The TSPMG problem is at least as hard as the SPCG problem.
\end{theorem}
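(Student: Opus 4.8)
The plan is to establish a polynomial-time Turing (Cook) reduction from SPCG to TSPMG, so that any algorithm solving TSPMG yields an algorithm for SPCG using only polynomially many oracle calls. Given an SPCG instance $(G, \alpha, l)$, I would feed the \emph{same} graph $G$ and length $l$ to a hypothetical TSPMG oracle, treating the parameter $k$ as the only quantity I vary. The key structural observation is that, once all length-$l$ patterns are sorted by frequency as $f_1 \ge f_2 \ge \cdots$, the set of patterns with frequency exceeding $\alpha$ is precisely an initial segment of this order: if $c$ denotes the SPCG answer (the number of length-$l$ patterns with frequency larger than $\alpha$), then $f_k > \alpha$ holds exactly when $k \le c$, and $f_k \le \alpha$ exactly when $k > c$. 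This biconditional holds regardless of ties, because ``frequency $> \alpha$'' defines a fixed set of patterns that sorting necessarily places before all others.

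Given this monotone threshold behaviour, I would recover $c$ by binary search on $k$. For a candidate value $k$, I call the TSPMG oracle to obtain the top-$k$ length-$l$ patterns together with their frequencies, inspect the frequency $f_k$ of the least frequent returned pattern, and compare it against $\alpha$: if $f_k > \alpha$ then $c \ge k$, otherwise $c < k$. Locating the largest $k$ with $f_k > \alpha$ pinpoints $c$ exactly, which is the SPCG answer.

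To argue the reduction is polynomial, I would bound the search range. The answer $c$ is at most the number of length-$l$ patterns of positive frequency, and since every transaction sequence in $D_l$ contributes at most $2^{|I|(l+1)}$ sub-patterns, we have $c \le |D_l| \cdot 2^{|I|(l+1)}$. Although $|D_l|$ may be exponentially large, its logarithm $\log|D_l| \le (l+1)(\log|V| + \log\max_v |\mathcal{T}_v|)$ is polynomial in the input size, so a binary search over the interval $[1, |D_l|\cdot 2^{|I|(l+1)}]$ uses only $O(\log|D_l| + |I|(l+1))$ oracle calls. Combined with Theorem~\ref{th:hardness_1}, this reduction shows the TSPMG problem is at least as hard as the \#P-hard SPCG problem.

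The main obstacle I anticipate is the frequency-comparison step: the reduction must read off $f_k$ from the oracle's output, so I must either assume that the TSPMG oracle reports the frequencies of the patterns it returns (the natural output of any top-$k$ miner), or show that computing $f_k$ for the identified $k$-th pattern is tractable once the top-$k$ set is known. I would resolve this by adopting the former, standard convention, and would take care that the strict inequality in SPCG ($> \alpha$) is matched against the sorted frequencies so that boundary ties, and the regime where fewer than $k$ patterns have positive frequency, are handled correctly.
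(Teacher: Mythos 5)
Your proposal is correct and matches the paper's proof essentially step for step: both give a Cook reduction that binary-searches over $k$ (exploiting the monotone threshold behaviour of the sorted frequencies) using $\mathcal{O}(|I|(l+1))$ oracle calls over a range of at most $2^{|I|(l+1)}$ patterns, and returns the located $k$ as the SPCG answer. The only divergence is your frequency-access caveat: rather than assuming the oracle reports frequencies, the paper extracts $s_k$ and $s_{k+1}$ as set differences of consecutive oracle outputs $\Theta(D_l,k-1)$, $\Theta(D_l,k)$, $\Theta(D_l,k+1)$ and computes $f(s_k)$, $f(s_{k+1})$ by a linear scan of $D_l$ in $\mathcal{O}(|V|^{(l+1)}C^{(l+1)})$ time, which is polynomial since $l$ is a constant --- precisely the second resolution you flagged as an alternative.
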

\begin{proof}
We prove this by a Cook reduction from the SPCG problem.
Denote by $\Theta(D_l, k)$ the set of top-$k$ length-$l$ patterns in $G$. Let $s_k=\Theta(D_l, k)\setminus \Theta(D_l, k-1)$ be the pattern with the $k$-th largest frequency $f(s_k)$ in $D_l$ of $G$.

Given the oracle of the TSPMG problem, we can search for the parameter $k$ that satisfies $f(s_{k}) > \alpha \geq f(s_{k+1})$ by querying the oracle multiple times. 
Such $k$ is exactly the answer to the SPCG problem.

Let $I$ be the set of all items in $G$. We have $k\in[1, 2^{|I|(l+1)}]$. Thus, the time of a binary search for $k$ is in $\mathcal{O}(|I|(l+1))$.
For each $k$, we can obtain $s_{k}$ and $s_{k+1}$ by querying the oracle to get $\Theta(D_l, k-1)$, $\Theta(D_l, k)$ and $\Theta(D_l, k+1)$; $f(s_k)$ and $f(s_{k+1})$ can be computed by linearly searching $D_l$ in $\mathcal{O}(|V|^{(l+1)}C^{(l+1)})$ time.
Therefore, the overall time to solve the SPCG problem is in $\mathcal{O}(|I|(l+1)  |V|^{(l+1)}C^{(l+1)})$. 
Recall that $l$ is the constant path length, we know that the SPCG problem is Cook reducible to the TSPMG problem. 
\end{proof}

\subsection{Baseline}
\label{sec:exactAl}
Given a graph $G$, how can we find interesting patterns on directed paths? 
Intuitively, with a given path length $l$, we can enumerate all paths in $G$ and then extract all transaction sequences supported by the paths. Then, by applying traditional sequential pattern mining techniques~\citep{Pei2001,Agrawal1995}, we can find the exact top-$k$ sequential patterns among transaction sequences. Algorithm~\ref{al:groundTruth} gives the details of this method. 

\begin{algorithm}[t]
\caption{The Baseline Method}
\label{al:groundTruth}
\KwIn{A database graph $G$, path length $l$ and $k$.}
\KwOut{The set of top-$k$ length-$l$ sequential patterns $\Theta(D_l, k)$.}
\BlankLine
\begin{algorithmic}[1]
    \STATE Initialize length-$l$ transaction sequence set $D_l\leftarrow\emptyset$.
    \FOR{each length-$l$ path $p$ in $G$}
    	\FOR{each transaction sequence $ts$ induced by $p$}
		\STATE $D_l \leftarrow D_l\cup ts$.
	\ENDFOR
    \ENDFOR
    \STATE Apply sequential pattern mining techniques on $D_l$ to obtain $\Theta(D_l, k)$.
\BlankLine
\RETURN $\Theta(D_l, k)$.
\end{algorithmic}
\end{algorithm}

The volume of $D_l$, denoted by $|D_l|$, is in $\mathcal{O}(|V|^{(l+1)}C^{(l+1)})$, where $|V|$ is the number of vertices in $G$ and $C=\max_{v_i\in V} |\mathcal{T}_{v_i}|$ is the maximum number of transactions in the transaction database of a single vertex of $G$. 
As a result, the baseline method is computationally expensive due to the large amount of time and space needed to enumerate and store the transaction sequences in $D_l$. 

In the next section, we tackle this problem by an efficient sampling method that significantly reduces the amount of transaction sequences needs to search, and thus achieves high-quality approximation results.

\section{A Fast Sampling-based Method}
\label{sec:sampleAl}

The key idea of our fast sequential pattern mining method is to first estimate the pattern frequencies using a set of transaction sequences sampled from $G$, then obtain the top-$k$ sequential patterns using the estimated pattern frequencies. In this section, we first introduce our \emph{two-step sampling framework} that randomly samples transaction sequences. Then, we provide an unbiased estimator to estimate pattern frequency using the sampled transaction sequences. Last, we prove the upper bound of the complexity for the sampling-based algorithm.

\subsection{A Two-step Sampling Framework}
The two-step sampling framework involves a path sampling method that uniformly samples a set of length-$l$ paths from $G$ and a transaction sequence sampling method that uniformly samples transaction sequences from each sampled path. 

Notice that, although the first step sampling perform uniform sampling of the paths on the graph and the second step sampling perform uniform sampling of transaction sequence on the sampled paths,  respectively, the overall two-step sampling is not uniform with respect to the length-$l$ transaction sequences, since the numbers of transactions are different in different vertices.
As a result, we cannot directly mine frequent sequential patterns from the sampled transaction sequences by simply using the support of patterns in the sampled transaction sequences.
However, as illustrated in Section~\ref{sec:upfe}, we can still mine the top-$k$ sequential patterns by estimating the pattern frequencies with bounded estimation error using a carefully designed unbiased estimator.

Denote by $P_l$ the set of all length-$l$ paths in $G$. The \emph{path sampling method} samples a length-$l$ ($l \geq 1$) path $p=\langle v_1,\ldots, v_q, \ldots, v_{l+1} \rangle$ from $P_l$ by progressively sampling an ordered sequence of $(l+1)$ vertices, where each vertex $v_q$ in $p$ is sampled with probability
\begin{equation}
\label{eq:pathSampling}
	\mathbb{P}(v_q)=
	\begin{cases}
	\mathbb{P}(v_1)=\frac{d_l(v_1)}{\sum_{v_j \in V}d_l(v_j)}  & q=1\\
	\mathbb{P}(v_q|v_{q-1})=\frac{d_{l-q+1}(v_q)}{d_{l-q+2}(v_{q-1})}  & 2\leq q \leq (l+1)
	\end{cases}
\end{equation}
  
\begin{algorithm}[t]
\caption{Path Sampling Method}
\label{al:pathSampling}
\KwIn{Graph $G$ and path length $l$.}
\KwOut{A length-$l$ path $p$.}
\BlankLine
\begin{algorithmic}[1]
\STATE $p\leftarrow\emptyset$.
         \FOR{$q=1 \to l$}
                	\STATE Sample vertex $v_q$ with the probability given by Equation~\ref{eq:pathSampling}.
                	\STATE $p\leftarrow p \cup v_q$.
         \ENDFOR
\BlankLine
\RETURN $p$.
\end{algorithmic}
\end{algorithm}

Algorithm~\ref{al:pathSampling} gives the pseudocode of the path sampling step. We prove that Algorithm~\ref{al:pathSampling} conducts a uniform sampling on paths.

\begin{theorem}
\label{th:pathSampling}
Given a graph $G$ and a path length $l$, Algorithm~\ref{al:pathSampling} uniformly samples a path $p$ from $P_l$.
\end{theorem}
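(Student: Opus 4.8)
The plan is to compute, in closed form, the probability that Algorithm~\ref{al:pathSampling} returns a fixed target path $p=\langle v_1,\ldots,v_{l+1}\rangle$, and then to show that this probability is the same for every $p$ and that its reciprocal is exactly $|P_l|$. Since the $l+1$ vertices are drawn sequentially, with $v_q$ depending only on the already-chosen prefix, the output probability factorizes into the per-step probabilities of Equation~\ref{eq:pathSampling}:
\begin{equation*}
\mathbb{P}(p)=\mathbb{P}(v_1)\prod_{q=2}^{l+1}\mathbb{P}(v_q\mid v_{q-1})
=\frac{d_l(v_1)}{\sum_{v_j\in V}d_l(v_j)}\prod_{q=2}^{l+1}\frac{d_{l-q+1}(v_q)}{d_{l-q+2}(v_{q-1})}.
\end{equation*}
The first step I would carry out is the telescoping cancellation. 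The numerator $d_{l-q+1}(v_q)$ of the $q$-th factor coincides with the denominator $d_{l-(q+1)+2}(v_{(q+1)-1})=d_{l-q+1}(v_q)$ of the $(q+1)$-th factor, and the leading $d_l(v_1)$ cancels the denominator $d_l(v_1)$ of the $q=2$ factor. After all intermediate terms disappear, only $d_0(v_{l+1})$ remains in the numerator and $\sum_{v_j\in V}d_l(v_j)$ in the denominator, so that $\mathbb{P}(p)=d_0(v_{l+1})/\sum_{v_j\in V}d_l(v_j)=1/\sum_{v_j\in V}d_l(v_j)$, using $d_0(\cdot)=1$ from Equation~\ref{eq:degree}. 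This value is manifestly independent of the particular path $p$, which is the heart of uniformity.

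It then remains to identify the common value $1/\sum_{v_j\in V}d_l(v_j)$ with $1/|P_l|$; equivalently, I must show $\sum_{v_j\in V}d_l(v_j)=|P_l|$. The key observation is that $d_l(v)$ counts exactly the length-$l$ paths starting at $v$. I would make this precise through the recurrence $d_m(u)=\sum_{(u,w)\in E}d_{m-1}(w)$, which expresses every length-$m$ path from $u$ as an edge $(u,w)$ followed by a length-$(m-1)$ path from $w$. Summing over the starting vertex partitions $P_l$ according to its first vertex, yielding $|P_l|=\sum_{v_j\in V}d_l(v_j)$ and hence $\mathbb{P}(p)=1/|P_l|$ for every $p\in P_l$. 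The same recurrence simultaneously certifies that each conditional law in Equation~\ref{eq:pathSampling} is a genuine probability distribution, since $\sum_{v_q:(v_{q-1},v_q)\in E}\mathbb{P}(v_q\mid v_{q-1})=\frac{1}{d_{l-q+2}(v_{q-1})}\sum_{(v_{q-1},v_q)\in E}d_{l-q+1}(v_q)=1$. This also shows the algorithm never reaches a dead end: a neighbor $v_q$ from which the path cannot be completed has $d_{l-q+1}(v_q)=0$ and is assigned probability zero, so every realization extends all the way to length $l$.

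The step I expect to be the main obstacle is reconciling the stated definition of $d_l(v)$ as the number of vertices \emph{at distance} $l$ with the path-counting recurrence $d_m(u)=\sum_{(u,w)\in E}d_{m-1}(w)$, because the simple-path constraint means a length-$l$ path need not be a shortest path, and in principle distinct paths could reach the same vertex or a length-$l$ path could end at a vertex of distance less than $l$. I would resolve this either by restricting to the structure actually exploited in the paper — the instances arising here (for example the DAG-of-transactions construction and the toy example) are such that the length-$l$ paths from $v$ are in bijection with the vertices at distance exactly $l$ — or, more cleanly, by reading $d_l(v)$ throughout as the number of length-$l$ paths emanating from $v$ and checking that this quantity obeys Equation~\ref{eq:degree} on the relevant graphs. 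Once that identification is secured, the telescoping computation and the partition argument combine at once to give $\mathbb{P}(p)=1/|P_l|$ for all $p\in P_l$, establishing that Algorithm~\ref{al:pathSampling} samples uniformly from $P_l$.
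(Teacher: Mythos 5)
Your proof takes essentially the same route as the paper's: factorize $\mathbb{P}(p)$ into the per-step probabilities of Equation~\ref{eq:pathSampling}, telescope the product to $\mathbb{P}(p)=1/\sum_{v_j\in V}d_l(v_j)$, and identify $\sum_{v_j\in V}d_l(v_j)$ with $|P_l|$. If anything, yours is more complete: the paper asserts that final identity without comment, whereas you both derive it (via the recurrence $d_m(u)=\sum_{(u,w)\in E}d_{m-1}(w)$, which also certifies that each conditional law sums to one) and correctly flag that it requires reading $d_l(v)$ as the number of length-$l$ paths starting at $v$ rather than the distance-based count of Equation~\ref{eq:degree} --- a subtlety the paper's own proof silently steps over.
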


\begin{proof}
Since $P_l$ is the set of all length-$l$ paths in $G$, to show that the path sampling in Algorithm~\ref{al:pathSampling} is uniform, we only need to prove that the probability of sampling a length-$l$ path $p$ is $\frac{1}{|P_l|}$.

Denote by $\mathbb{P}(p)$ the probability of sampling a path $p=\langle v_1,\ldots, v_{l+1} \rangle$ in $G$. According to Algorithm~\ref{al:pathSampling} and Equation~\ref{eq:pathSampling}, we have 
\begin{equation}
\label{eq:pathsample_proof}
\begin{aligned}
	\mathbb{P}(p)=& \mathbb{P}(v_1) \times \mathbb{P}(v_2) \times \cdots \times \mathbb{P}(v_{l+1}) \\
	=& \mathbb{P}(v_1) \times \mathbb{P}(v_2|v_1) \times \cdots \times \mathbb{P}(v_{l+1}|v_l)\\
	\nop{=& \frac{d_l(v_1)}{\sum_{v_j \in V}d_l(v_j)} \times \frac{d_{l-1}(v_2)}{d_{l}(v_1)} \times \cdots \times \frac{d_{0}(v_{l+1})}{d_{1}(v_{l})}\\}
	=& \frac{d_l(v_1)}{\sum_{v_j \in V}d_l(v_j)} \times \frac{d_{l-1}(v_2)}{d_{l}(v_1)} \times \cdots \times \frac{1}{d_{1}(v_{l})}\\
	=& \frac{1}{\sum_{v_j \in V}d_l(v_j)}\\
	=& \frac{1}{|P_l|}
\end{aligned}
\end{equation}
The theorem holds.
\end{proof}
 
Algorithm~\ref{al:pathSampling} uniformly samples one path from all length-$l$ paths in $G$. To sample a set of length-$l$ paths with replacement, we simply run Algorithm~\ref{al:pathSampling} multiple times. 

Next, we introduce the transaction sequence sampling method.
Denote by $p^i=\langle v^i_1,\ldots,$ $ v^i_q,\ldots,v^i_{l+1} \rangle$ the $i$-th length-$l$ path sampled by Algorithm~\ref{al:pathSampling}, the \emph{transaction sequence sampling method} in Algorithm~\ref{al:seqSampling} uniformly samples a transaction sequence $ts^i$ with probability $\frac{1}{\prod_{q=1}^{l+1}|\mathcal{T}_{{v_q}^{i}}|}$ from all  transaction sequences induced by $p^i$.

\begin{algorithm}[t]
\caption{Transaction Sequence Sampling Method}
\label{al:seqSampling}
\KwIn{The $i$-th sampled length-$l$ path $p^i=\langle v^i_1,\ldots,$ $ v^i_q,\ldots,v^i_{l+1} \rangle$.}
\KwOut{A sampled transaction sequence $ts^i$.}
\begin{algorithmic}[1]
\STATE $ts^i \gets \emptyset$.
        	\FOR{$q=1 \to (l+1)$}
        	    	\STATE Sample transaction $T^i_q$ from vertex $v_q^i$ with the probability $\frac{1}{|\mathcal{T}_{{v_q^i}}|}$.
	    	\STATE $ts^i\gets ts^i \cup T^i_q$.
        	\ENDFOR
\BlankLine 
\RETURN $ts^i$.
\end{algorithmic}
\end{algorithm}

\begin{algorithm}[t]
\caption{Two-step Sampling Framework}
\label{al:sampling}
\KwIn{Graph $G$, path length $l$ and sample size $m$.}
\KwOut{A set $S_l$ of uniformly sampled length-$l$ transaction sequences.}
\begin{algorithmic}[1]
	\FOR{$each \ v_j \in V$}
            	\STATE Calculate degrees $\{d_q(v_j)|1 \leq q \leq l\}$ using Equation~\ref{eq:degree}.
        	\ENDFOR       
	\STATE $\psi\gets \emptyset$, $S_l\gets \emptyset$
        	\FOR{$i=1 \to m$}
            	\STATE Sample a path $p^{i}=\langle v_1^{i}, \ldots, v_{l+1}^{i} \rangle$ from $G$ using Algorithm~\ref{al:pathSampling}.
            	\STATE $\psi\gets \psi\cup p^i$.
        \ENDFOR
        
        	\FOR{each $p^i$ in $\psi$}
            	\STATE Sample a transaction sequence $ts^{i}$ from $p^i$ using Algorithm~\ref{al:seqSampling}.
            	\STATE $S_l\gets S_l\cup ts^i$.
        \ENDFOR
 \BlankLine
 \RETURN $S_l$.
 \end{algorithmic}
\end{algorithm}

The \emph{two-step sampling framework} is summarized in Algorithm~\ref{al:sampling}, which samples a length-$l$ transaction sequence set $S_l$ from $G$ in two steps.
First, we uniformly sample $m$ length-$l$ paths from $G$ using Algorithm~\ref{al:pathSampling}. Then, for each sampled path $p^i$, we uniformly sample one transaction sequence $ts^i$ using Algorithm~\ref{al:seqSampling}. 

\subsection{Unbiased Pattern Frequency Estimator}
\label{sec:upfe}
In this section, we first define some useful notations. Then, we introduce an unbiased estimator of pattern frequency.

For a length-$l$ path $p^i=\langle v_1^{i}, \ldots, v_q^i, \ldots, v_{l+1}^{i} \rangle$, we assign a unique index to each transaction sequence supported by $p^i$, and denote by $ts^i_j$ the $j$-th transaction sequence supported by $p^i$.
Apparently, $M^i=\prod^{l+1}_{i=1}|\mathcal{T}_{v_q^i}|$ is the number of all transaction sequences supported by $p^i$.

We further define a random variable $Y^i_j(h)$  
\begin{equation}
\label{eq:Y}
	Y^i_j(h)=
	\begin{cases}
	1 & \text{{if the $h$-th sample draws }}ts^i_j\\
	0  & \text{otherwise}
	\end{cases}
\end{equation}
where $1 \leq h \leq |S_l|$ is the index of a sample and $|S_l|$ is the volume of $S_l$. Since $ts^i_j$ is drawn with probability $\frac{1}{|P_l| M^i}$, we have $\mathbb{E}(Y^i_j(h))=\frac{1}{|P_l| M^i}$.

Now we present an unbiased pattern frequency estimator.
Denote by $P_l$ the set of all length-$l$ paths in graph $G$. For a pattern $s= \langle X_1\ldots, X_q, \ldots, X_{l+1} \rangle$, an estimator of the real pattern frequency $f(s)$ is
\begin{equation}
\label{eqn:fhat}
\hat{f}(s)=
\frac{\sum_{i=1}^{|P_l|}\sum_{j=1}^{M^i} \sum_{h=1}^{|S_l|} Y^i_j(h) W^i_j}
{\frac{|S_l|}{|P_l|}
\sum_{i=1}^{|P_l|}M^i
}
\end{equation}
where $W^i_j$ is defined as
\begin{equation}
\label{eq:IncW}
	W^i_j=
	\begin{cases}
	M^i & s \subseteq ts^i_j\\
	0  & \text{otherwise}
	\end{cases}
\end{equation}
Note that, since $\mathbb{P}(s\subseteq ts^i_j)=f(s)$, we have $\mathbb{E}(W^i_j)=f(s)M^i$.

Now, we prove that $\hat{f}(s)$ is an unbiased estimator of $f(s)$.

\begin{theorem}
For a pattern $s$, $\hat{f}(s)$ is an unbiased estimator of $f(s)$.

\label{th:patternEs2}
\end{theorem}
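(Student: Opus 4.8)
The plan is to establish the identity $\mathbb{E}(\hat{f}(s)) = f(s)$ by taking the expectation of the right-hand side of Equation~\ref{eqn:fhat}. The first move is to observe that the denominator $\frac{|S_l|}{|P_l|}\sum_{i=1}^{|P_l|}M^i$ is a \emph{deterministic} constant: $|S_l|$ is the fixed sample size, $|P_l|$ is a fixed property of $G$, and $\sum_{i=1}^{|P_l|}M^i$ is the (fixed) total number of length-$l$ transaction sequences supported by all paths of $G$. Because the denominator carries no randomness, the expectation of the ratio splits as $\mathbb{E}(\hat{f}(s)) = \mathbb{E}(\text{numerator})/(\text{denominator})$. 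This is precisely why the estimator is deliberately normalized by a constant rather than by a random quantity — for a random denominator the expectation would not factor through the division, and the whole argument would break.

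Next I would compute $\mathbb{E}(\text{numerator})$ for the triple sum $\sum_{i=1}^{|P_l|}\sum_{j=1}^{M^i}\sum_{h=1}^{|S_l|} Y^i_j(h)\,W^i_j$. The key bookkeeping point is to separate what is random from what is not: $ts^i_j$ denotes the fixed $j$-th transaction sequence supported by $p^i$, so whether $s \subseteq ts^i_j$ holds is a fixed fact and $W^i_j$ is deterministic, equal to $M^i$ or $0$ non-randomly (I would treat the pre-theorem remark $\mathbb{E}(W^i_j)=f(s)M^i$ as an informal shorthand and work instead with $W^i_j$ held fixed, which is cleaner and correct). The only randomness lives in $Y^i_j(h)$, with $\mathbb{E}(Y^i_j(h)) = \frac{1}{|P_l|M^i}$ as established just above the theorem. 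By linearity of expectation I would push $\mathbb{E}$ through the three sums, replacing each $Y^i_j(h)$ by $\frac{1}{|P_l|M^i}$ and leaving $W^i_j$ untouched; the sum over $h$ contributes a factor $|S_l|$, giving $\mathbb{E}(\text{numerator}) = \frac{|S_l|}{|P_l|}\sum_{i=1}^{|P_l|}\frac{1}{M^i}\sum_{j=1}^{M^i}W^i_j$. Using the definition of $W^i_j$, the inner sum equals $M^i$ times the number of indices $j$ with $s \subseteq ts^i_j$, so the factor $M^i$ cancels and the double sum collapses to the total count $n_s$ of length-$l$ transaction sequences in $G$ that contain $s$, yielding $\mathbb{E}(\text{numerator}) = \frac{|S_l|}{|P_l|}\,n_s$.

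Finally, dividing by the denominator cancels the common factor $\frac{|S_l|}{|P_l|}$ and leaves $\mathbb{E}(\hat{f}(s)) = n_s / \sum_{i}M^i$. To close the argument I would identify this ratio with $f(s)$: since transaction ids are unique, each length-$l$ transaction sequence is supported by exactly one path, so $D_l$ is the disjoint union of the sequence sets of the individual paths, whence $|D_l| = \sum_i M^i$ and the number of distinct sequences in $D_l$ containing $s$ is exactly $n_s$; by the definition of frequency this ratio is $f(s)$, completing the proof. I expect the main obstacle to be not any single calculation but the careful accounting of randomness — in particular resisting the temptation to treat the denominator or $W^i_j$ as random — together with the last identification $n_s/\sum_i M^i = f(s)$, which quietly relies on transaction sequences not being double-counted across distinct paths.
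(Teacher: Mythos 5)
Your proof is correct, and it takes a genuinely different --- and in fact more careful --- route than the paper's. The paper treats $W^i_j$ as a \emph{random} variable: it asserts $\mathbb{P}(s \subseteq ts^i_j) = f(s)$, hence $\mathbb{E}(W^i_j) = f(s)M^i$, invokes independence of $Y^i_j(h)$ and $W^i_j$ to get $\mathbb{E}(Y^i_j(h)W^i_j) = \frac{f(s)}{|P_l|}$ for every term, and then sums and cancels. You instead hold $W^i_j$ fixed (which is the literal reading of the definitions, since $ts^i_j$ is a deterministically indexed sequence), push expectation only through the $Y^i_j(h)$ factors, and do the counting explicitly: $\sum_j W^i_j = M^i n^i_s$, so the numerator's expectation is $\frac{|S_l|}{|P_l|} n_s$, and the ratio is $n_s/\sum_i M^i$. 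The two computations agree in aggregate precisely because $\sum_{i,j} W^i_j/M^i = n_s = f(s)\sum_i M^i$; but the paper's per-term claim $\mathbb{E}(W^i_j)=f(s)M^i$ is only true ``on average'' over a uniformly random index pair $(i,j)$, not for each fixed pair, so your version repairs a hand-wave rather than merely restating it. Your approach also surfaces an assumption the paper leaves implicit: identifying $n_s/\sum_i M^i$ with $f(s)$ (defined as a proportion of \emph{unique} sequences in $D_l$) requires that sequence sets of distinct paths be disjoint, which you justify via uniqueness of transaction ids; without that, $\sum_i M^i$ could overcount $|D_l|$ and the estimator would target a multiplicity-weighted frequency instead. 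What the paper's route buys is brevity --- two lines of algebra once the two probabilistic claims are granted; what yours buys is rigor and an explicit record of where the structure of $G$ (unique tids, disjoint per-path supports) actually enters the argument.
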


\begin{proof}
First, since $Y^i_j(h)$ and $W^i_j$ are independent random variables, we have 
\begin{equation}\nonumber
\mathbb{E}(Y^i_j(h) W^i_j)
=\mathbb{E}(Y^i_j(h)) \mathbb{E}(W^i_j)
=\frac{f(s)}{|P_l|}
\end{equation}
Then, we have
\begin{equation}\nonumber
\begin{aligned}
	\mathbb{E}[\hat{f}(s)]
	=& 
\frac
{
\sum_{i=1}^{|P_l|} \sum_{j=1}^{M^i} \sum_{h=1}^{|S_l|} \mathbb{E}(Y^i_j(h) W^i_j)
}
{
\frac{|S_l|}{|P_l|}
\sum_{i=1}^{|P_l|}M^i
}	
\\
	=& 
\frac
{
\sum_{i=1}^{|P_l|} \sum_{j=1}^{M^i} 
\frac{|S_l|}{|P_l|}f(s)
}
{
\frac{|S_l|}{|P_l|}
\sum_{i=1}^{|P_l|}M^i
}	
\\
	=& f(s)
\end{aligned}
\end{equation}
The proof follows.
\end{proof}

Using the estimator $\hat{f}(s)$ (Equation~\ref{eqn:fhat}) to calculate the top-$k$ most frequent patterns is efficient. In our problem, we only care about the order of the pattern frequencies. As we have shown above, the the denominator of $\hat{f}(s)$, that is, ${\frac{|S_l|}{|P_l|}\sum_{i=1}^{|P_l|}M^i}$, won't affect the order of pattern frequencies. Therefore, the patterns can be sorted by the nominator of $\hat{f}(s)$, which is the weighted sum of the count of transactions in $S_l$ that contains pattern $s$. For example, for a pattern $s$, if it is contained in $ts^1$ and $ts^2$, the weighted sum is equal to $M^1+M^2.$

\nop{Using the estimator $\hat{f}(s)$ (Equation~\ref{eqn:fhat}) to calculate the top-$k$ most frequent patterns is efficient. 
Since we only care about the order of pattern frequencies, we can ignore the denominator of $\hat{f}(s)$, that is, ${\frac{|S_l|}{|P_l|}
\sum_{i=1}^{|P_l|}M^i
}$, which is constant to all patterns. The nominator is simply the weighted sum of the count of transactions in $S_l$ that contains pattern $s$.}

\subsection{Bounding the Sample Size}
In this section, we analyze the approximation error of $\hat{f}(s)$ in Theorem~\ref{thm:hoeffding} and derive \nop{a maximal lower bound of sample size}an upper bound of sample complexity for the proposed sampling-based algorithm in Theorem~\ref{eq:sampleSize} and Theorem~\ref{eq:lemma}. \nop{\todo{That is, we just need to obtain a \emph{maximum} sample size, which is not much bigger than a given error, with high probability.}}

\begin{theorem}[Chernoff's inequality \citep{Chernoff1952}]
\label{th:chernoff}
Let $Z_1,\ldots,Z_n$ be \emph{independent} random variables. They need not have the same distribution. Assume that $0 \leq Z_i \leq 1$ always, for each $i$. Let $Z=Z_1+\ldots+Z_n$. Write $\mu=\mathbb{E}[Z]=\mathbb{E}[Z_1]+\ldots+\mathbb{E}[Z_n]$. Then for any $\varepsilon \geq 0$,
\begin{equation}\nonumber
\mathbb{P}[|Z-\mu| \geq \varepsilon\mu] \leq 2\exp(-\frac{\varepsilon^2}{3}\mu)
\end{equation}
\end{theorem}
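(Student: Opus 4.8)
The plan is to prove this multiplicative Chernoff bound via the classical exponential moment method, handling the upper and lower tails separately and then combining them by a union bound. First I would fix a parameter $t>0$ and apply Markov's inequality to the nonnegative random variable $e^{tZ}$, which gives $\mathbb{P}[Z \geq (1+\varepsilon)\mu] \leq \mathbb{E}[e^{tZ}] / e^{t(1+\varepsilon)\mu}$. Because the $Z_i$ are independent, the moment generating function factorizes as $\mathbb{E}[e^{tZ}] = \prod_{i=1}^{n} \mathbb{E}[e^{tZ_i}]$, so the task reduces to controlling each single-variable factor.

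The key per-variable estimate exploits the constraint $0 \leq Z_i \leq 1$. By convexity of the exponential on $[0,1]$ we have $e^{tZ_i} \leq 1 + (e^t-1)Z_i$, and taking expectations together with the elementary inequality $1+x \leq e^x$ yields $\mathbb{E}[e^{tZ_i}] \leq \exp\bigl((e^t-1)\mathbb{E}[Z_i]\bigr)$. Multiplying over $i$ gives $\mathbb{E}[e^{tZ}] \leq \exp\bigl((e^t-1)\mu\bigr)$. Substituting this into the Markov bound and optimizing the free parameter by setting $t = \ln(1+\varepsilon)$ produces the canonical upper-tail estimate $\mathbb{P}[Z \geq (1+\varepsilon)\mu] \leq \left( e^{\varepsilon} / (1+\varepsilon)^{1+\varepsilon} \right)^{\mu}$. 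A symmetric argument applied to $-Z$ (equivalently, taking $t<0$) gives the lower-tail estimate $\mathbb{P}[Z \leq (1-\varepsilon)\mu] \leq \left( e^{-\varepsilon} / (1-\varepsilon)^{1-\varepsilon} \right)^{\mu}$.

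The remaining step, which I expect to be the main technical obstacle, is to simplify these two exact tail expressions into the clean common form $\exp(-\varepsilon^2\mu/3)$. This amounts to establishing the analytic inequalities $e^{\varepsilon}/(1+\varepsilon)^{1+\varepsilon} \leq \exp(-\varepsilon^2/3)$ and $e^{-\varepsilon}/(1-\varepsilon)^{1-\varepsilon} \leq \exp(-\varepsilon^2/2)$ on the relevant range $\varepsilon \in (0,1]$. I would prove these by taking logarithms and Taylor-expanding $\ln(1+\varepsilon)$ and $\ln(1-\varepsilon)$, then bounding the resulting alternating series. The upper-tail case is the delicate one: the inequality $\varepsilon - (1+\varepsilon)\ln(1+\varepsilon) \leq -\varepsilon^2/3$ must be verified to hold \emph{uniformly} on the whole interval $(0,1]$ rather than merely asymptotically as $\varepsilon \to 0$, and indeed it fails for $\varepsilon > 1$, so the argument is genuinely restricted to this regime. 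Finally, since $-\varepsilon^2/3 \geq -\varepsilon^2/2$, both tails are dominated by $\exp(-\varepsilon^2\mu/3)$, and a union bound over the two one-sided events $\{Z \geq (1+\varepsilon)\mu\}$ and $\{Z \leq (1-\varepsilon)\mu\}$ yields $\mathbb{P}[|Z-\mu| \geq \varepsilon\mu] \leq 2\exp(-\varepsilon^2\mu/3)$, as claimed.
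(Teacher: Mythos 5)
The paper never proves this statement: it imports Chernoff's inequality as a known result, citing \cite{Chernoff1952}, and uses it as a black box in the proof of Theorem~\ref{thm:hoeffding}. So there is no internal proof to compare against, and your proposal must stand on its own --- which it does. The route you take (Markov's inequality applied to $e^{tZ}$, factorization of the moment generating function by independence, the chord bound $e^{tZ_i} \leq 1 + (e^t-1)Z_i$ from convexity on $[0,1]$ followed by $1+x \leq e^x$, optimization at $t=\ln(1+\varepsilon)$ for the upper tail and $t=\ln(1-\varepsilon)$ for the lower tail, then the calculus reductions to $\exp(-\varepsilon^2/3)$ and $\exp(-\varepsilon^2/2)$, and a final union bound) is the standard Cram\'er--Chernoff argument, and every step is sound.

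Your observation about the admissible range of $\varepsilon$ is the most valuable part of the proposal, and it deserves to be stated more strongly: the theorem as printed in the paper, ``for any $\varepsilon \geq 0$,'' is not merely unprovable by this method --- it is false. Take a single Bernoulli variable with $\mathbb{P}[Z_1=1]=0.01$, so $\mu=0.01$, and take $\varepsilon=99$; then $\mathbb{P}[|Z-\mu|\geq \varepsilon\mu]=\mathbb{P}[Z_1=1]=0.01$, while $2\exp(-\frac{\varepsilon^2}{3}\mu)=2\exp(-32.67)\approx 10^{-14}$. The correct statement restricts to $0\leq\varepsilon\leq 1$, which is exactly the interval on which your proof operates (and the restriction is not academic here, since the paper later applies the theorem with relative deviation $t=\varepsilon/f(s)$, which exceeds $1$ for infrequent patterns). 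Two small corrections to your sketch. First, the inequality $\varepsilon-(1+\varepsilon)\ln(1+\varepsilon)\leq -\varepsilon^2/3$ does not fail immediately past $\varepsilon=1$; it actually persists up to $\varepsilon\approx 1.8$ before failing, so ``fails for $\varepsilon>1$'' should read ``fails for $\varepsilon$ sufficiently large.'' Second, the Taylor series of $\ln(1-\varepsilon)$ is not alternating, so for the lower tail it is cleaner to set $g(\varepsilon)=(1-\varepsilon)\ln(1-\varepsilon)+\varepsilon-\varepsilon^2/2$ and note that $g(0)=0$ and $g'(\varepsilon)=-\ln(1-\varepsilon)-\varepsilon\geq 0$ on $[0,1)$, which yields the claimed $\exp(-\varepsilon^2/2)$ bound directly; the upper-tail inequality can be handled by the analogous second-derivative argument rather than series manipulation.
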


\begin{theorem}
\label{thm:hoeffding}
Given a set of length-$l$ sampled transaction sequences $S_l$ and an arbitrary pattern $s$, for a fixed threshold $\varepsilon\geq 0$, we have:
\begin{equation}\nonumber
\mathbb{P}(|\hat{f}(s) - f(s)|\geq \varepsilon) \leq    2\exp(-\frac{\varepsilon^2 a}{3}|S_l|)
\end{equation}
where $a=\frac{1}{|P_l|M^*} \sum_{i=1}^{|P_l|}M^i$ and $M^*=\max_i M^i$.
\end{theorem}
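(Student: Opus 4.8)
The plan is to reduce the statement to a single application of Chernoff's inequality (Theorem~\ref{th:chernoff}) by repackaging the numerator of $\hat{f}(s)$ as a sum of independent, $[0,1]$-valued random variables. The key observation is that for each fixed sample index $h$ exactly one indicator $Y^i_j(h)$ equals $1$, so the per-sample contribution $\sum_{i,j} Y^i_j(h)\,W^i_j$ is just $W^{i^*}_{j^*}$ for the drawn sequence $ts^{i^*}_{j^*}$, a quantity that is either $0$ or $M^{i^*}\le M^*$. First I would therefore define the normalized per-sample weight $Z_h=\frac{1}{M^*}\sum_{i,j}Y^i_j(h)\,W^i_j$ and check that $Z_h\in[0,1]$ and that the $Z_h$ are mutually independent, the latter because the $|S_l|$ samples are drawn independently (with replacement) and $Z_h$ depends only on the sequence drawn at step $h$.

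Next I would set $Z=\sum_{h=1}^{|S_l|}Z_h$, which equals $\frac{1}{M^*}$ times the numerator of $\hat{f}(s)$. Using $\mathbb{E}(Y^i_j(h)W^i_j)=\frac{f(s)}{|P_l|}$ (already shown in the proof of Theorem~\ref{th:patternEs2}) and counting the $\sum_i M^i$ nonzero terms, I would compute $\mu=\mathbb{E}[Z]=f(s)\,a\,|S_l|$. I would also collapse the denominator of $\hat{f}(s)$, namely $\frac{|S_l|}{|P_l|}\sum_i M^i=|S_l|\,a\,M^*$, so that $\hat{f}(s)=\frac{Z}{a\,|S_l|}$ and, consistently with Theorem~\ref{th:patternEs2}, $\mathbb{E}[\hat{f}(s)]=f(s)$.

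With this bookkeeping, the event $|\hat{f}(s)-f(s)|\ge\varepsilon$ is identical to $|Z-\mu|\ge\varepsilon\,a\,|S_l|$, and since $\mu=f(s)\,a\,|S_l|$ this is precisely the Chernoff event $|Z-\mu|\ge\varepsilon'\mu$ with $\varepsilon'=\varepsilon/f(s)$ (the degenerate case $f(s)=0$ forces $\hat{f}(s)=0$, so the bound holds trivially there). Theorem~\ref{th:chernoff} then gives $\mathbb{P}(|\hat{f}(s)-f(s)|\ge\varepsilon)\le 2\exp\!\big(-\tfrac{\varepsilon'^2}{3}\mu\big)=2\exp\!\big(-\tfrac{\varepsilon^2 a}{3f(s)}|S_l|\big)$, and since a frequency satisfies $0<f(s)\le1$ we have $\frac{1}{f(s)}\ge1$, which only enlarges the exponent and yields the stated bound $2\exp(-\tfrac{\varepsilon^2 a}{3}|S_l|)$. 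I expect the main obstacle to be the clean identification of the $Z_h$ as bounded, independent variables --- in particular justifying the $[0,1]$ range through the ``exactly one indicator fires'' argument and confirming that the independence of $Y^i_j(h)$ and $W^i_j$ is what licenses $\mathbb{E}[Z_h]=f(s)a$; the remaining arithmetic (the denominator collapse, solving for $\varepsilon'$, and the final monotonicity step in $f(s)$) is routine.
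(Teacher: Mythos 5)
Your proof is correct, and its skeleton matches the paper's: both apply Chernoff's inequality (Theorem~\ref{th:chernoff}) to the normalized numerator of $\hat{f}(s)$, identify its mean as $f(s)\,a\,|S_l|$ after dividing by $M^*$, translate the relative-deviation event into the absolute-deviation event via $\varepsilon = t f(s)$, and finish with the monotonicity step $f(s)\le 1$ (the paper phrases this as $f(s)\ge f^2(s)$, which is the same inequality). The genuine difference is the decomposition fed into Chernoff. The paper applies Theorem~\ref{th:chernoff} directly to the triple sum $U=\sum_{i,j,h} Y^i_j(h)W^i_j$, i.e., it treats each $U^i_j(h)/M^*$ as one of the independent $[0,1]$-valued summands; but for a fixed $h$ these summands are \emph{not} independent, since exactly one indicator $Y^i_j(h)$ fires per sample, so the independence hypothesis of the stated Chernoff inequality is violated as written. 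Your regrouping by sample index, $Z_h=\frac{1}{M^*}\sum_{i,j}Y^i_j(h)W^i_j$, repairs exactly this defect: the $Z_h$ are genuinely independent (the $|S_l|$ samples are drawn independently with replacement) and lie in $[0,1]$ by the ``exactly one indicator fires'' argument, while $\sum_h Z_h = U/M^*$, so the same numerical bound comes out. You also dispose of the degenerate case $f(s)=0$ explicitly, which the substitution $t=\varepsilon/f(s)$ silently requires in both arguments. In short, your route buys rigor --- it is the same Chernoff argument, but with the hypotheses of the concentration inequality actually satisfied --- whereas the paper's presentation buys only brevity.
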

\begin{proof}
We define a random variable $U^i_j(h)=Y^i_j(h)W^i_j$.
According to Equation~\ref{eq:Y} and Equation~\ref{eq:IncW}, we have $U^i_j(h)\in [0, M^i]$. Thus, $\frac{U^i_j(h)}{M^*}\in [0, 1]$.

Now, consider random variable $$U=\sum_{i=1}^{|P_l|} \sum_{j=1}^{M^i} \sum_{h=1}^{|S_l|} U^i_j(h).$$ Applying Theorem~\ref{th:chernoff}, it follows
\begin{equation}\nonumber
\mathbb{P}(|\frac{U}{M^*} - \frac{\mathbb{E}(U)}{M^*}| \geq t \frac{\mathbb{E}(U)}{M^*}) \leq 2 \exp(-\frac{t^2}{3} \frac{\mathbb{E}(U)}{M^*})
\end{equation}

Since $\mathbb{E}(U) = f(s) \frac{|S_l|}{|P_l|} \sum_{i=1}^{|P_l|}M^i $ and $\frac{U}{\frac{|S_l|}{|P_l|} \sum_{i=1}^{|P_l|}M^i}=\hat{f}(s)$ (see Equation~\ref{eqn:fhat}), it follows
\begin{equation}\nonumber
\mathbb{P}(|\hat{f}(s) - f(s)| \geq t f(s)) \leq 2 \exp(- \frac{t^2 f(s)}{3} \frac{|S_l|}{|P_l|M^*} \sum_{i=1}^{|P_l|}M^i)
\end{equation}

Since $f(s) \in [0,1]$, it follows
\begin{equation}\nonumber
\mathbb{P}(|\hat{f}(s) - f(s)| \geq t f(s)) \leq 2 \exp(- \frac{t^2 f^2(s)}{3} \frac{|S_l|}{|P_l|M^*} \sum_{i=1}^{|P_l|}M^i)
\end{equation}

Let $\varepsilon= t f(s)$ and $a=\frac{1}{|P_l|M^*} \sum_{i=1}^{|P_l|}M^i$. It follows
\begin{equation}\nonumber
\mathbb{P}(|\hat{f}(s) - f(s)|\geq \varepsilon) \leq    2\exp(-\frac{\varepsilon^2 a}{3}|S_l|)
\end{equation}
The proof follows.
\end{proof}

In Theorem~\ref{eq:sampleSize}, we further derive \nop{a maximal lower bound of sample size}an upper bound of sample complexity for the proposed sampling-based algorithm to make sure for each pattern $s$, $|\hat{f}(s) - f(s)| \leq \varepsilon$ holds with a high probability. 

\begin{theorem}
\label{eq:sampleSize}
Denote by $Q_l$ the set of all length-$l$ patterns in $G$. If $|S_l|\geq \frac{12}{\varepsilon^2 a} ln\frac{2|Q_l|}{\delta}$, then $\mathbb{P}(\forall s\in Q_l, |\hat{f}(s) - f(s)| < \frac{\varepsilon}{2}) \geq 1 - \delta$.
\nop{
For fixed $\varepsilon,\delta \in (0,1)$, by setting the number of sample size $|S| \geq \frac{12}{\varepsilon^{2}}ln\frac{2|Q|}{\delta}$, with 
probability at least $1-\delta$, the following condition holds. (1)For each $s$, $|\frac{f_S(s)}{|S|}-\frac{f_D(s)}{|D|}| \leq \frac{\varepsilon}{2}$ ($Q$ is the set of all sequential patterns.)}
\end{theorem}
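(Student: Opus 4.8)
The plan is to combine the single-pattern concentration bound of Theorem~\ref{thm:hoeffding} with a union bound over the whole pattern set $Q_l$. First I would fix an arbitrary pattern $s \in Q_l$ and instantiate Theorem~\ref{thm:hoeffding} at the tighter threshold $\varepsilon/2$ in place of $\varepsilon$, which gives
\begin{equation}\nonumber
\mathbb{P}\left(|\hat{f}(s) - f(s)| \geq \tfrac{\varepsilon}{2}\right) \leq 2\exp\left(-\frac{\varepsilon^2 a}{12}|S_l|\right).
\end{equation}
This substitution $\varepsilon \mapsto \varepsilon/2$ is precisely what converts the constant $3$ in the exponent of Theorem~\ref{thm:hoeffding} into the $12$ appearing in the claimed sample size, so it is worth flagging explicitly rather than folding into the calculation.

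Next I would control the probability that \emph{some} pattern is poorly estimated. The event $\{\exists s \in Q_l : |\hat{f}(s) - f(s)| \geq \varepsilon/2\}$ is the union over $s \in Q_l$ of the individual failure events, so Boole's inequality yields
\begin{equation}\nonumber
\mathbb{P}\left(\exists s \in Q_l : |\hat{f}(s) - f(s)| \geq \tfrac{\varepsilon}{2}\right) \leq 2|Q_l|\exp\left(-\frac{\varepsilon^2 a}{12}|S_l|\right).
\end{equation}
I would stress that this step needs \emph{no} independence among the estimators $\hat{f}(s)$ for distinct patterns: the union bound is valid regardless of how the $\hat{f}(s)$ are correlated, which is fortunate since they are built from the same sample $S_l$ and are certainly dependent.

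Finally I would force the right-hand side to be at most $\delta$ and solve for the sample size. Requiring $2|Q_l|\exp(-\frac{\varepsilon^2 a}{12}|S_l|) \leq \delta$ and taking logarithms gives $\frac{\varepsilon^2 a}{12}|S_l| \geq \ln\frac{2|Q_l|}{\delta}$, i.e.\ $|S_l| \geq \frac{12}{\varepsilon^2 a}\ln\frac{2|Q_l|}{\delta}$, which is exactly the hypothesis of the theorem. Passing to the complementary event then delivers $\mathbb{P}(\forall s \in Q_l,\ |\hat{f}(s) - f(s)| < \varepsilon/2) \geq 1 - \delta$, as required.

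There is no substantive obstacle in the probabilistic argument itself—the heavy lifting was already done in establishing the Chernoff-type bound of Theorem~\ref{thm:hoeffding}. The only points demanding care are bookkeeping ones: correctly propagating the $\varepsilon/2$ threshold through the exponent (producing the factor $12$), and ensuring that the cardinality $|Q_l|$ entering the logarithm is indeed the number of length-$l$ patterns over which the uniform guarantee is asserted, so that the union bound is applied to the right collection of events.
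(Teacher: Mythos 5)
Your proposal is correct and follows essentially the same route as the paper's own proof: apply Theorem~\ref{thm:hoeffding} at threshold $\varepsilon/2$ (which turns the constant $3$ into $12$), take a union bound over all patterns in $Q_l$, and solve the resulting exponential tail for $|S_l|$. Your explicit remarks on the $\varepsilon \mapsto \varepsilon/2$ substitution and on the union bound requiring no independence among the $\hat{f}(s)$ are welcome clarifications, but the underlying argument is the same.
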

\begin{proof}
Since $Q_l$ is the set of all length-$l$ patterns in $G$, by Theorem~\ref{thm:hoeffding} and the Union Bound \citep{Bonferroni1936}, it follows
\begin{equation}\nonumber
\mathbb{P}(\exists s\in Q_l, |\hat{f}(s) - f(s)| \geq \frac{\varepsilon}{2}) \leq 2|Q_l|\exp(-\frac{\varepsilon^2a}{12} |S_l|)
\end{equation}

Thus, we have
\nop{
\begin{equation}\nonumber
\mathbb{P}(\nexists s\in Q, |\hat{f}(s) - f(s)| \geq \frac{\varepsilon}{2}) \geq 1 - 2|Q|\exp(-\frac{\varepsilon^2a}{12} |S|)
\end{equation}

This further indicates that
}
\begin{equation}\nonumber
\mathbb{P}(\forall s\in Q_l, |\hat{f}(s) - f(s)| < \frac{\varepsilon}{2}) \geq 1 - 2|Q_l|\exp(-\frac{\varepsilon^2a}{12} |S_l|)
\end{equation}

Since $|S_l|\geq \frac{12}{\varepsilon^2 a} ln\frac{2|Q_l|}{\delta}$, it follows that
\begin{equation}\nonumber
1 - 2|Q_l|\exp(-\frac{\varepsilon^2a}{12} |S_l|) \geq 1-\delta
\end{equation}

Thus, we have
\begin{equation}\nonumber
\mathbb{P}(\forall s\in Q_l, |\hat{f}(s) - f(s)| < \frac{\varepsilon}{2}) \geq 1 - \delta
\end{equation}
The proof follows.
\end{proof}

The bound of $\frac{12}{\varepsilon^2 a} ln\frac{2|Q_l|}{\delta}$ in Theorem~\ref{eq:sampleSize} is an upper bound of sample complexity for the proposed sampling-based algorithm, since the Union Bound \citep{Bonferroni1936} is applied to bound the approximation errors of pattern frequencies of all patterns in $Q_l$.
As demonstrated later in the experiments, in practice we can achieve a good estimation quality with a sample size that is smaller than the bound in Theorem~\ref{eq:sampleSize}. Next, we analyze the relationship between $|S_l|$ and path length $l$ in Theorem~\ref{eq:lemma}.

\begin{theorem}
\label{eq:lemma}
Denote by $I$ the set of all items in $G$, and by $Q_l$ the set of all length-$l$ patterns in $G$.
If $|S_l|\geq \frac{12|I|(l+1)+12}{\varepsilon^2 a}  ln\frac{2}{\delta}$, then $\mathbb{P}(\forall s\in Q_l, |\hat{f}(s) - f(s)| < \frac{\varepsilon}{2}) \geq 1 - \delta$.
\end{theorem}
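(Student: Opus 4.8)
The plan is to derive Theorem~\ref{eq:lemma} directly from Theorem~\ref{eq:sampleSize} by replacing the term $\ln\frac{2|Q_l|}{\delta}$, which depends on the potentially huge count $|Q_l|$, with an explicit expression in $|I|$ and $l$. Since Theorem~\ref{eq:sampleSize} already guarantees the desired high-probability statement whenever $|S_l| \geq \frac{12}{\varepsilon^2 a}\ln\frac{2|Q_l|}{\delta}$, it suffices to show that the more convenient bound $\frac{12|I|(l+1)+12}{\varepsilon^2 a}\ln\frac{2}{\delta}$ is at least as large as that quantity. Everything then reduces to upper bounding $|Q_l|$ and a short monotonicity argument.

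First I would count the length-$l$ patterns. A length-$l$ pattern is a sequence $\langle X_1, \ldots, X_{l+1}\rangle$ of $l+1$ itemsets, each $X_q \subseteq I$. Since there are at most $2^{|I|}$ distinct itemsets over $I$ and the $l+1$ positions are chosen independently, $|Q_l| \leq (2^{|I|})^{l+1} = 2^{|I|(l+1)}$, so that $\ln|Q_l| \leq |I|(l+1)\ln 2$.

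Next I would feed this estimate into the logarithmic term. Writing $\ln\frac{2|Q_l|}{\delta} = \ln\frac{2}{\delta} + \ln|Q_l|$ and using $\ln|Q_l|\leq |I|(l+1)\ln 2$, I would invoke $\delta \in (0,1)$ to obtain $\ln\frac{2}{\delta}\geq \ln 2$, whence $\ln|Q_l| \leq |I|(l+1)\ln\frac{2}{\delta}$. Combining these gives $\ln\frac{2|Q_l|}{\delta} \leq (|I|(l+1)+1)\ln\frac{2}{\delta}$, and multiplying by $\frac{12}{\varepsilon^2 a}$ yields $\frac{12}{\varepsilon^2 a}\ln\frac{2|Q_l|}{\delta} \leq \frac{12|I|(l+1)+12}{\varepsilon^2 a}\ln\frac{2}{\delta}$. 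Consequently, any $|S_l|$ satisfying the hypothesis of Theorem~\ref{eq:lemma} also satisfies the hypothesis of Theorem~\ref{eq:sampleSize}, and the conclusion $\mathbb{P}(\forall s\in Q_l, |\hat{f}(s)-f(s)|<\frac{\varepsilon}{2})\geq 1-\delta$ follows immediately.

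The argument is almost entirely routine; the only point that needs care is the implicit assumption $\delta \in (0,1)$, which is exactly what lets me absorb the stray $\ln 2$ factor coming from the pattern count into $\ln\frac{2}{\delta}$. Were one to attempt a self-contained concentration proof instead, the main obstacle would be re-establishing the union bound over the exponentially many patterns in $Q_l$, but that work is already carried out in Theorem~\ref{eq:sampleSize}, so the present statement is essentially a clean counting-plus-monotonicity corollary of it.
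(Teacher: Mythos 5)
Your proof is correct and follows essentially the same route as the paper: bound $|Q_l|\leq 2^{|I|(l+1)}$ and substitute into the sample-size condition of Theorem~\ref{eq:sampleSize}. In fact you are more careful than the paper, whose proof says only ``substitute and the theorem follows'' --- your explicit monotonicity step using $\delta\in(0,1)$ to absorb the $|I|(l+1)\ln 2$ term into $(|I|(l+1)+1)\ln\frac{2}{\delta}$ is precisely the detail the paper glosses over, and it is needed because the claimed bound is not literally the result of the substitution.
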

\begin{proof}
Since $Q_l$ is the set of all patterns in $G$, we have
\begin{equation}\nonumber
|Q_l|\leq 2^{|I|(l+1)}
\end{equation}

By substituting the above inequality into the condition $|S_l|\geq \frac{12}{\varepsilon^2 a} ln\frac{2|Q_l|}{\delta}$ of Theorem~\ref{eq:sampleSize}, the theorem follows.
\end{proof}

According to Theorem~\ref{eq:lemma}, when $|S_l|\geq \frac{12|I|(l+1)+12}{\varepsilon^2 a}  ln\frac{2}{\delta}$, the estimated pattern frequency $\hat{f}(s)$ for each pattern $s$ will have a high probability to be close to its real frequency $f(s)$. Therefore, we can use a sequential pattern mining method, such as PrefixSpan~\citep{Pei2001}, to extract  sequential patterns from the set of sampled transaction sequences. 
The ranking on the estimated pattern frequencies will be a good approximation of the ranking of real pattern frequencies. As a result, we can obtain high-quality approximation of top-$k$ sequential patterns using the pattern frequencies estimated from the set of sampled transaction sequences.

Comparing to the baseline method that enumerates the exponential number of transaction sequences in $D_l$, the proposed sampling method significantly reduces the number of required transaction sequences and achieves guaranteed approximation quality.

\section{Experiments}
\label{sec:exp}
In this section, we analyze the effectiveness and efficiency of the proposed \emph{sampling-based method} and the \emph{baseline method}.
The algorithms were implemented in Java and compiled with JDK 8.
We use PrefixSpan~\citep{Pei2001} to extract frequent sequential patterns with the set of sampled transaction sequences. 
All experiments were conducted on a PC running Windows 7 operating system with Intel Core i7-3770 3.40 GHz CPU and 16GB main memory.

\subsection{Datasets}
\label{sec:datasets}
We use both synthetic and real datasets in our experiments.

\textbf{Synthetic datasets \mbl{(SYN1 and SYN2)}.} We use two synthetic datasets, denoted by SYN1 and SYN2, which are generated by the IBM Quest Synthetic Data Generator~\footnote{\url{http://www.almaden.ibm.com/cs/quest/syndata.html}}. 
Both SYN1 and SYN2 consist of the same graph structure with 28 edges and 24 vertices. The difference between SYN1 and SYN2 lies in the distribution of transaction databases.
In SYN1, each vertex has 20 transactions. 
In SYN2, each vertex with degree $d$ has $8(d+1)$ transactions in the associated transaction database.
For the path length $l=1$ and $l=2$, the numbers of all transaction sequences in SYN1 are 11,200 and 184,000, respectively, and the numbers of all transaction sequences in SYN2 are 8,828 and 141,312, respectively.
For both SYN1 and SYN2, the average number of items per transaction is 5. The total numbers of items in SYN1 and SYN2 are 96 and 91, respectively.

\textbf{Open flights dataset (Flight).} The open flights dataset is published in OpenFlights website~\footnote{https://openflights.org/data.html}.We built our graph by treating each airport as a vertex and each airline as a directed edge. Each transaction database of a vertex is an entry of routes, where each route contains the information of source airport, destination airport, codeshare, stops, equipment, and so on. The top-$k$ sequential patterns in the dataset show interesting flight services following the airlines.

\begin{table}[t]
\caption{The statistics of the datasets.}
\centering
\label{tab:subgraphs}
	\begin{tabular}{|c|c|c|c|c|}
	\hline
	Datasets & \#Vertices & \#Edges & \#Trans & Average \#items/trans \\
	\hline
	\textbf{SYN1} & 24 & 28 & 480 & 5.00\\
	\hline
	\textbf{SYN2} & 24 & 28 & 420 & 5.00\\
	\hline
	\textbf{Flight} & 28 & 448 & 2,524 & 4.00\\
	\hline
	\textbf{CN1} & 16,198 & 60,856 & 120,741 & 6.65\\
	\hline
	\nop{
	\textbf{CN2} & 105,426 & 516,271 & 1,013,190 & 6.62\\
	\hline}
	\textbf{CN2} & 324,228 & 1,809,469 & 3,340,335 & 6.59\\
	\hline
	\end{tabular}
\end{table}

\textbf{Citation network datasets (CN1 and CN2).} The citation network datasets are built from the ArnetMiner dataset~\citep{Tang2008}. We built our graph datasets by treating each paper as a vertex and each citation as a directed edge. Each transaction database of a vertex is a set of topics, where each topic is represented by a set of keywords extracted from the abstracts of the paper. The top-$k$ sequential patterns in these datasets reveal interesting patterns of topic changes in paper citations and relations among research topics. 
By randomly sampling the original citation network of the ArnetMiner dataset, we built two citation network datasets, denoted by CN1 and CN2, that have different sizes. 

Some statistics of the datasets are listed in Table~\ref{tab:subgraphs}. Since the baseline method is too slow to run on large datasets, we build these small synthetic and real-world datasets to comprehensively assess and compare the efficiency and effectiveness of the sampling-based method with the baseline method.

\subsection{Quality Evaluation Metrics}
Denote by $\mathcal{G}$ the ground truth ranked list of the real top-$k$ patterns and by $\mathcal{L}$ the ranked list of patterns produced by the proposed sampling-based method. \nop{The length of $\mathcal{L}$, denoted by $|\mathcal{L}|$, is larger than $k$.}Let $\mathcal{G}(k)$ and $\mathcal{L}(k)$ be the ranked lists of the top-$k$ patterns in $\mathcal{G}$ and $\mathcal{L}$, respectively.
We use the following evaluation metrics.

\textbf{Mean Estimation Error (ME).} Mean estimation error is the average error between real pattern frequency $f(s)$ and estimated pattern frequency $\hat{f}(s)$ in Equation~\ref{eqn:fhat}. Denote $s_i$ as the $i$-th pattern in ranked list $\mathcal{G}(k)$, the ME at rank-$k$ is computed as
\begin{equation}
ME(k) = \frac{\sum_{i=1}^{k} |f(s_i) - \hat{f}(s_i)|}{k}
\end{equation}

\textbf{Average Precision (AP).} Average precision~\citep{AP2001} is widely used to evaluate the similarity between two ranked lists in the field of information retrieval.
Denote by $R(i)$ an indicator function that equals to 1 if the pattern at rank $i$ in $\mathcal{L}$ is contained in $\mathcal{G}$, the average precision of $\mathcal{L}$ when using $\mathcal{G}(k)$ as the ground truth is
\begin{equation}
AP(k) = \frac{\sum_{i=1}^{|\mathcal{L}|} \frac{|\mathcal{L}(i)\cap \mathcal{G}(k)|}{|\mathcal{L}(i)|} R(i)} {|\mathcal{G}(k)|}
\end{equation}

\textbf{Ranking Similarity (RS).} The ranking similarity~\citep{Kimura2006} quantifies the degree of similarity between two ranked lists at rank $k$. The ranking similarity of $\mathcal{G}$ and $\mathcal{L}$ at rank $k$ is
\begin{equation}
RS(k) = \frac{|\mathcal{G}(k) \cap \mathcal{L}(k)|}{k}
\end{equation}


\subsection{Results on Synthetic Datasets}

Since SYN1 and SYN2 are small, we are able to run the baseline method to obtain the exact pattern frequencies of all patterns and the exact ranked lists of the top-$k$ sequential patterns with $k=50$, $100$, $200$, $300$, and $1000$, respectively. 
The exact pattern frequencies are used to calculate the ME of the proposed sampling-based method. 
The exact ranked lists of patterns are used as the ground truth to evaluate the performance of the  sampling-based method in AP and RS.

Tables~\ref{tab:error_syn1}-\ref{tab:error_syn2} show the ME at rank-$1000$ of the sampling-based method in SYN1 and SYN2, respectively.
For both $l=1$ and $l=2$, the ME of the sampling-based method monotonically decreases when the sample size $|S_l|$ increases.
This is because, according to Theorems~\ref{thm:hoeffding} and~\ref{eq:sampleSize}, a larger sample size leads to a smaller estimation error of the pattern frequencies.

As shown in Table~\ref{tab:error_syn1}, the sampling method achieves an estimation error of $2.96\times 10^{-3}$ with a sample size $|S_l|=1,000$, which is much smaller than the \nop{lower }bound of $|S_l|$ given in Theorem~\ref{eq:sampleSize}.
The difference between the \nop{lower }bound of $|S_l|$ and the practical sample size is not really surprising, because the \nop{lower }bound in Theorem~\ref{eq:sampleSize} is \nop{a maximal lower bound of}an upper bound of sample complexity for the proposed sampling-based algorithm.

\begin{table}[t]
\caption{ME of pattern frequency in dataset SYN1.}
\centering
\label{tab:error_syn1}
\begin{tabular}{|p{1cm}<{\centering}|p{1cm}<{\centering}|p{1cm}<{\centering}|p{1cm}<{\centering}|p{1cm}<{\centering}|p{1cm}<{\centering}|} 
\hline
\multicolumn{1}{|c|}{\multirow {2}{*}{$l=1$}}& \multicolumn{1}{c|}{$|S_l|$} & \multicolumn{1}{c|}{1,000}& \multicolumn{1}{c|}{3,000}& \multicolumn{1}{c|}{5,000}& \multicolumn{1}{c|}{$8,000$}\\
\cline{2-6}
		& ME &  $2.96\times 10^{-3}$   &  $1.73\times 10^{-3}$   &  $1.33\times 10^{-3}$ &  $1.05\times 10^{-3}$\\
\hline		     
\multicolumn{1}{|c|}{\multirow {2}{*}{$l=2$}} & \multicolumn{1}{c|}{$|S_l|$} &\multicolumn{1}{c|}{10,000}& \multicolumn{1}{c|}{30,000}& \multicolumn{1}{c|}{50,000}& \multicolumn{1}{c|}{$80,000$}\\
\cline{2-6}
		& ME &  $6.52 \times 10^{-4}$   &  $3.85 \times 10^{-4}$  &  $2.88 \times 10^{-4}$ &  $2.34 \times 10^{-4}$ \\
\hline 
\end{tabular}
\end{table}

\begin{table}[t]
\caption{ME of pattern frequency in dataset SYN2.}
\centering
\label{tab:error_syn2}
\begin{tabular}{|p{1cm}<{\centering}|p{1cm}<{\centering}|p{1cm}<{\centering}|p{1cm}<{\centering}|p{1cm}<{\centering}|p{1cm}<{\centering}|} 
\hline
\multicolumn{1}{|c|}{\multirow {2}{*}{$l=1$ \centering}}& \multicolumn{1}{c|}{$|S_l|$} & \multicolumn{1}{c|}{1,000}& \multicolumn{1}{c|}{3,000}& \multicolumn{1}{c|}{5,000}& \multicolumn{1}{c|}{$8,000$}\\
\cline{2-6}
		& ME &  $3.21\times 10^{-3}$   &  $1.74\times 10^{-3}$   &  $1.49\times 10^{-3}$ &  $1.15\times 10^{-3}$ \\
\hline		     
\multicolumn{1}{|c|}{\multirow {2}{*}{$l=2$ \centering}} & \multicolumn{1}{c|}{$|S_l|$} &\multicolumn{1}{c|}{10,000}& \multicolumn{1}{c|}{30,000}& \multicolumn{1}{c|}{50,000}& \multicolumn{1}{c|}{$80,000$}\\
\cline{2-6}
		& ME &  $6.93 \times 10^{-4}$   &  $4.27 \times 10^{-4}$  &  $3.11 \times 10^{-4}$ &  $2.36 \times 10^{-4}$ \\
\hline 
\end{tabular}
\end{table}

Figures~\ref{fig:exp_syn1}(a)-(d) and~\ref{fig:exp_syn2}(a)-(d) show the AP and RS of the baseline method and the sampling-based method in SYN1 and SYN2, respectively.
The baseline always has AP and RS of $1.0$ since it produces the exact results.
For the sampling-based method, when the sample size increases, the estimation error of pattern frequency decreases. 
As a result, when the sample size is not too small, the sampling-based method can produce a stable and accurate ranked list of patterns, which is comparable with the baseline in AP and RS.

For the same path length $l$, to achieve the similar performance in AP and RS, the sampling-based method requires a larger sample size in SYN1 than in SYN2. The reason is that, we have $a=1$ in SYN1; however, in SYN2, since the numbers of transactions in different transaction databases are different, we have $a<1$. Therefore, according to Theorem~\ref{eq:sampleSize}, to achieve the same estimation quality as in SYN1, the sampling-based method needs a larger sample size in SYN2.

Figures~\ref{fig:exp_syn1}(e)-(f) and Figures~\ref{fig:exp_syn2}(e)-(f) show the runtime of the baseline method and the sampling-based method. Since the baseline method is independent from the sample size of transaction sequences, the runtime of the baseline method does not change when the sample size increases. For the sampling-based method, since a larger sample size leads to a larger time cost in transaction sequence sampling and sequential pattern mining, the runtime of the sampling-based method increases when the sample size increases.  We will test more on the scalability on the larger citation network data sets in the next subsection. 

In summary, on the synthetic datasets the sampling-based method significantly reduces the number of required transaction sequences and achieves good approximation quality.

\newcommand{\parawidth}{45mm}

\begin{figure}[t]
\centering
	\subfigure[AP, SYN1, $l=1$]{\includegraphics[width=\parawidth]{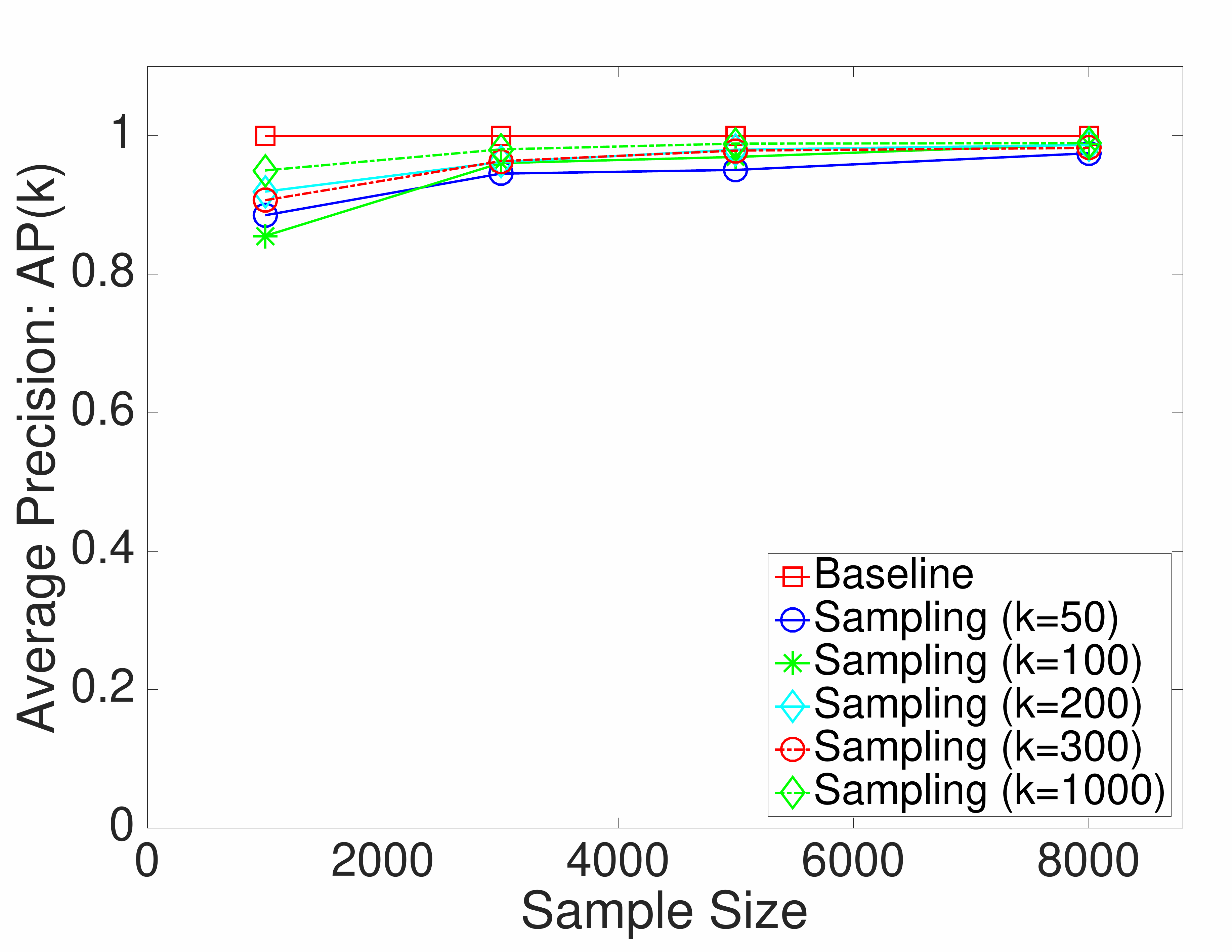}}
	\subfigure[AP, SYN1, $l=2$]{\includegraphics[width=\parawidth]{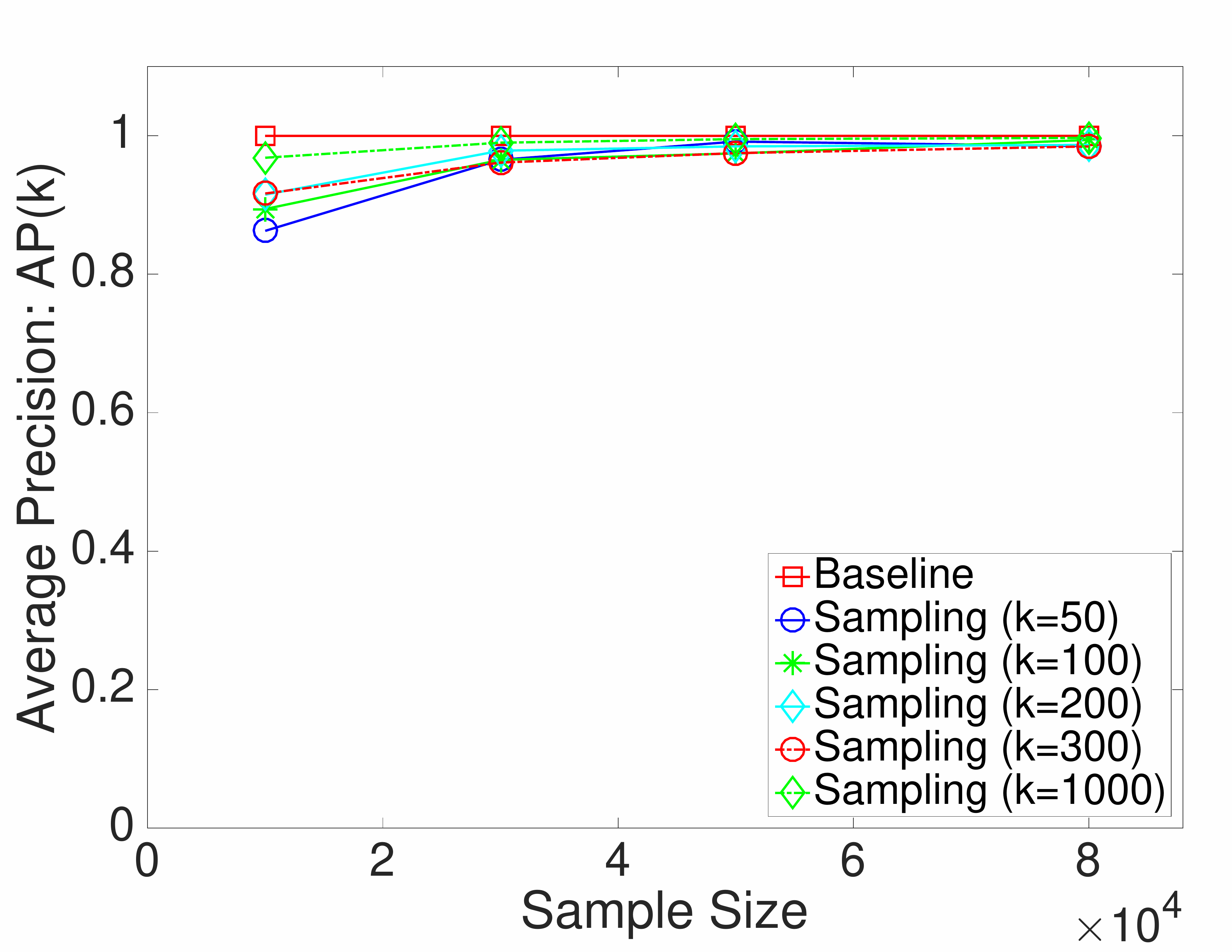}}

	\subfigure[RS, SYN1, $l=1$]{\includegraphics[width=\parawidth]{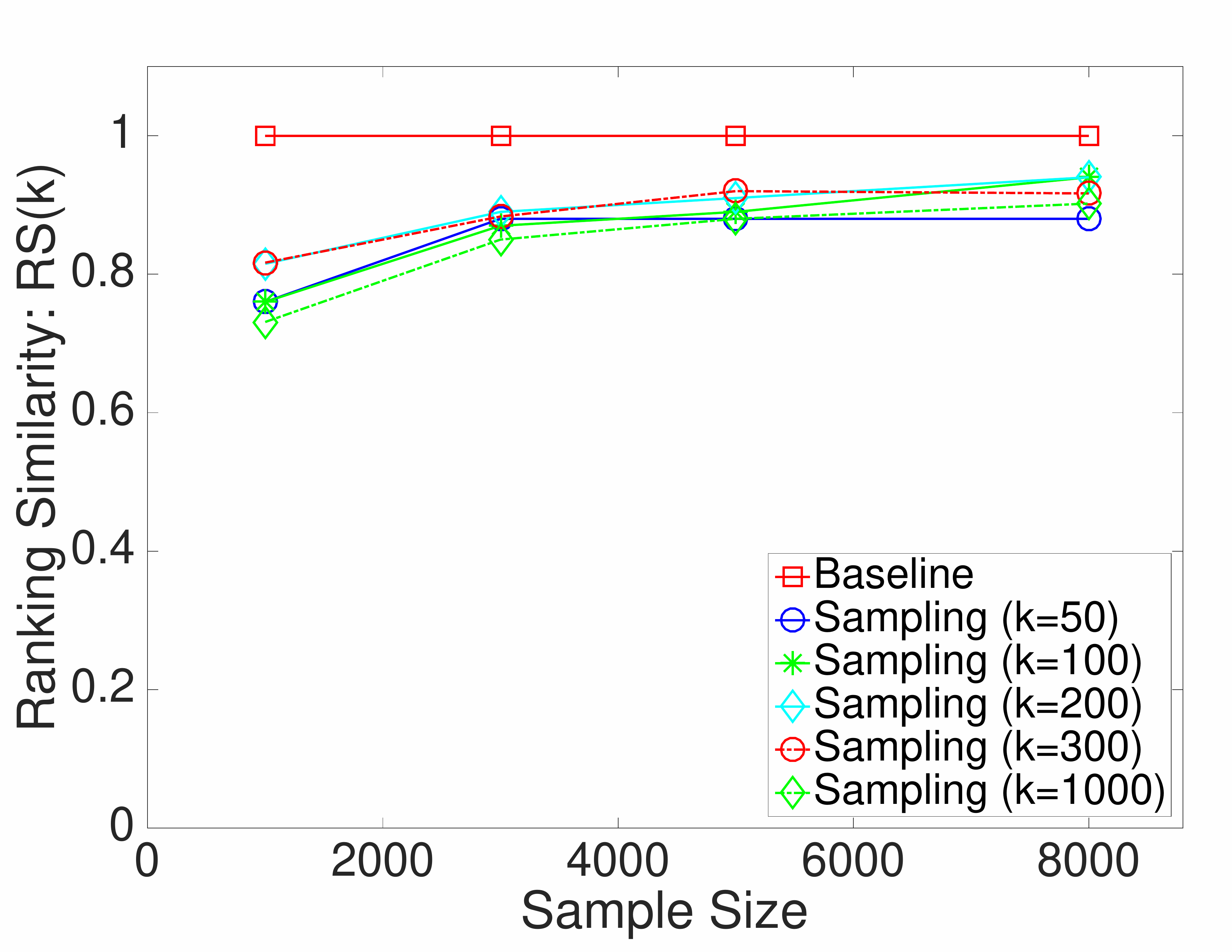}}
	\subfigure[RS, SYN1, $l=2$]{\includegraphics[width=\parawidth]{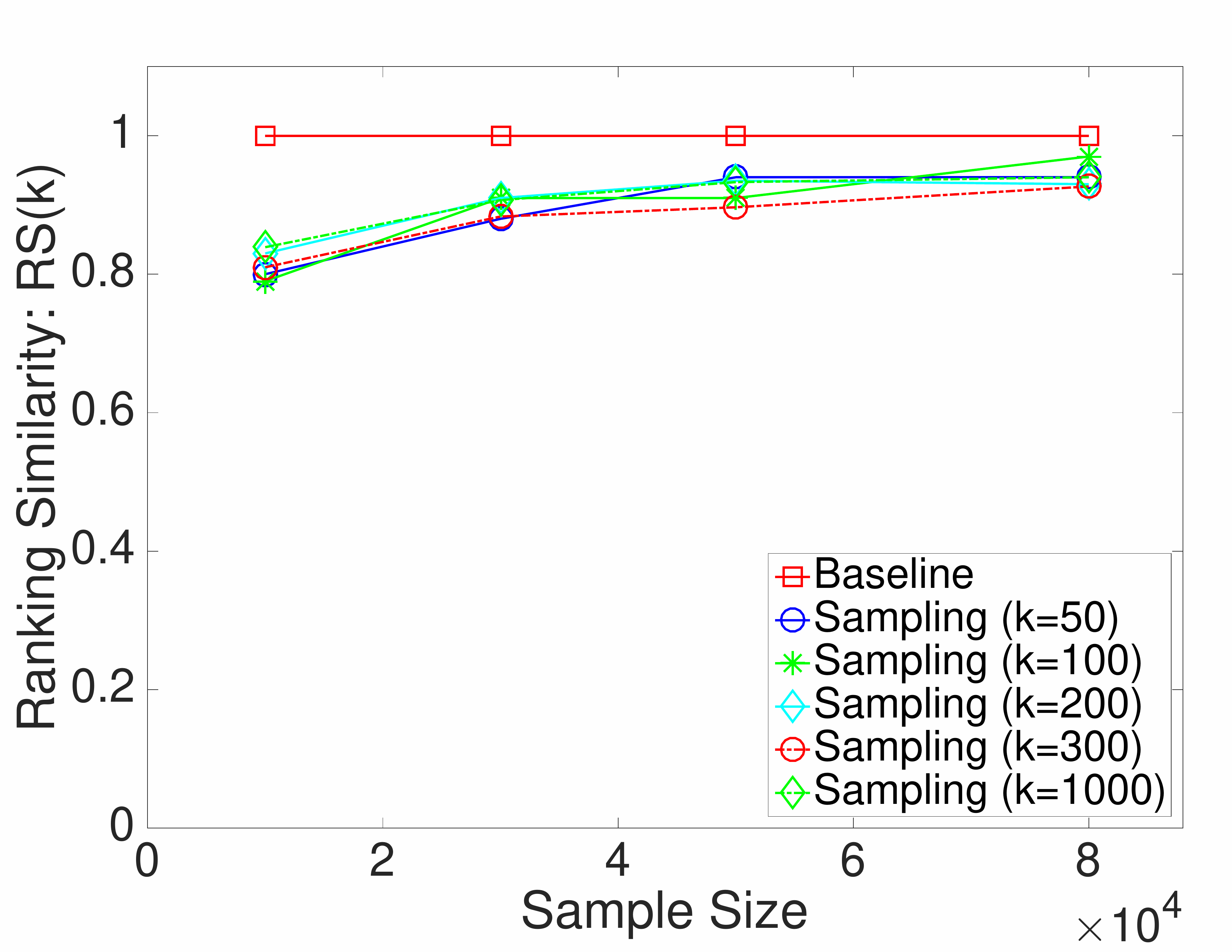}}
	
	\subfigure[RT, SYN1, $l=1$]{\includegraphics[width=\parawidth]{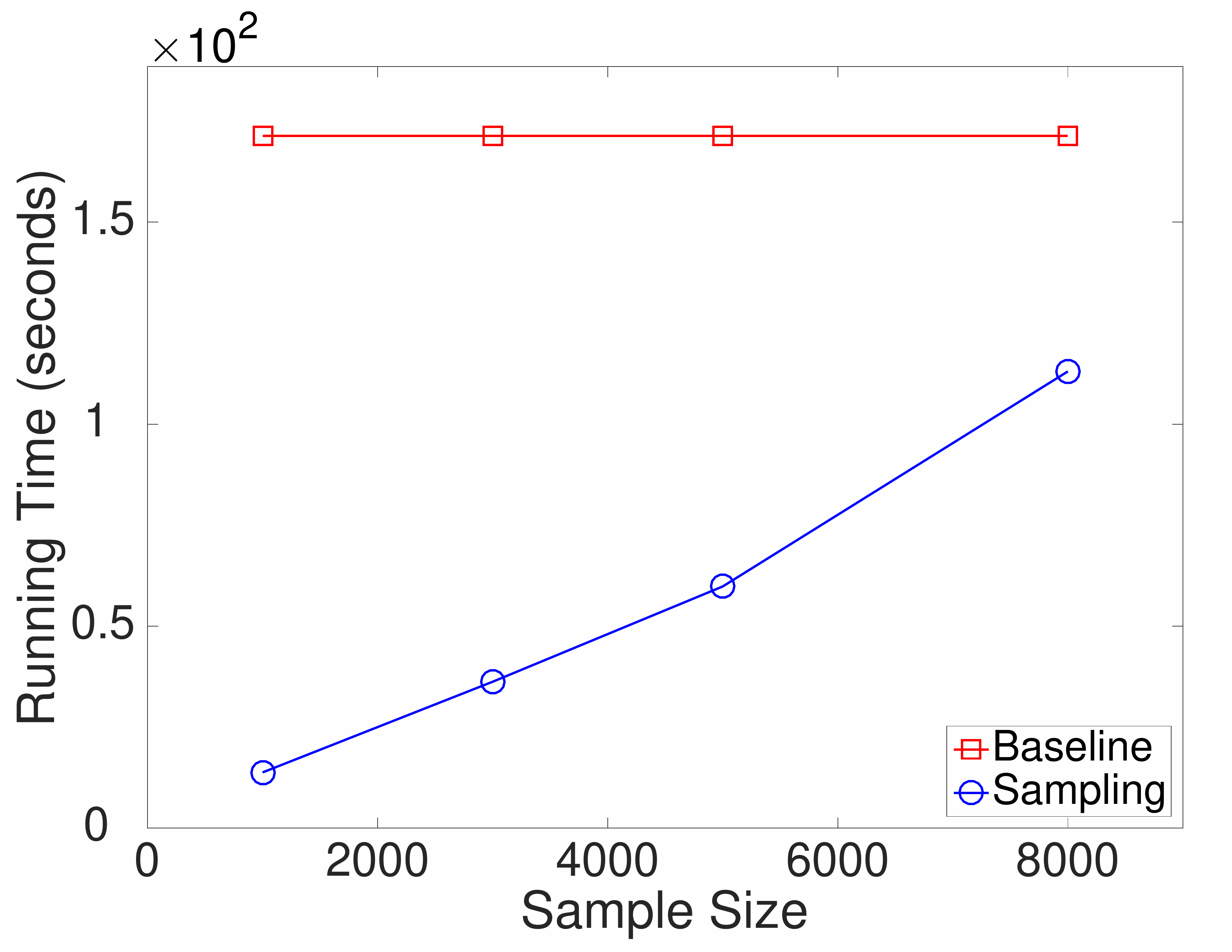}}
	\subfigure[RT, SYN1, $l=2$]{\includegraphics[width=\parawidth]{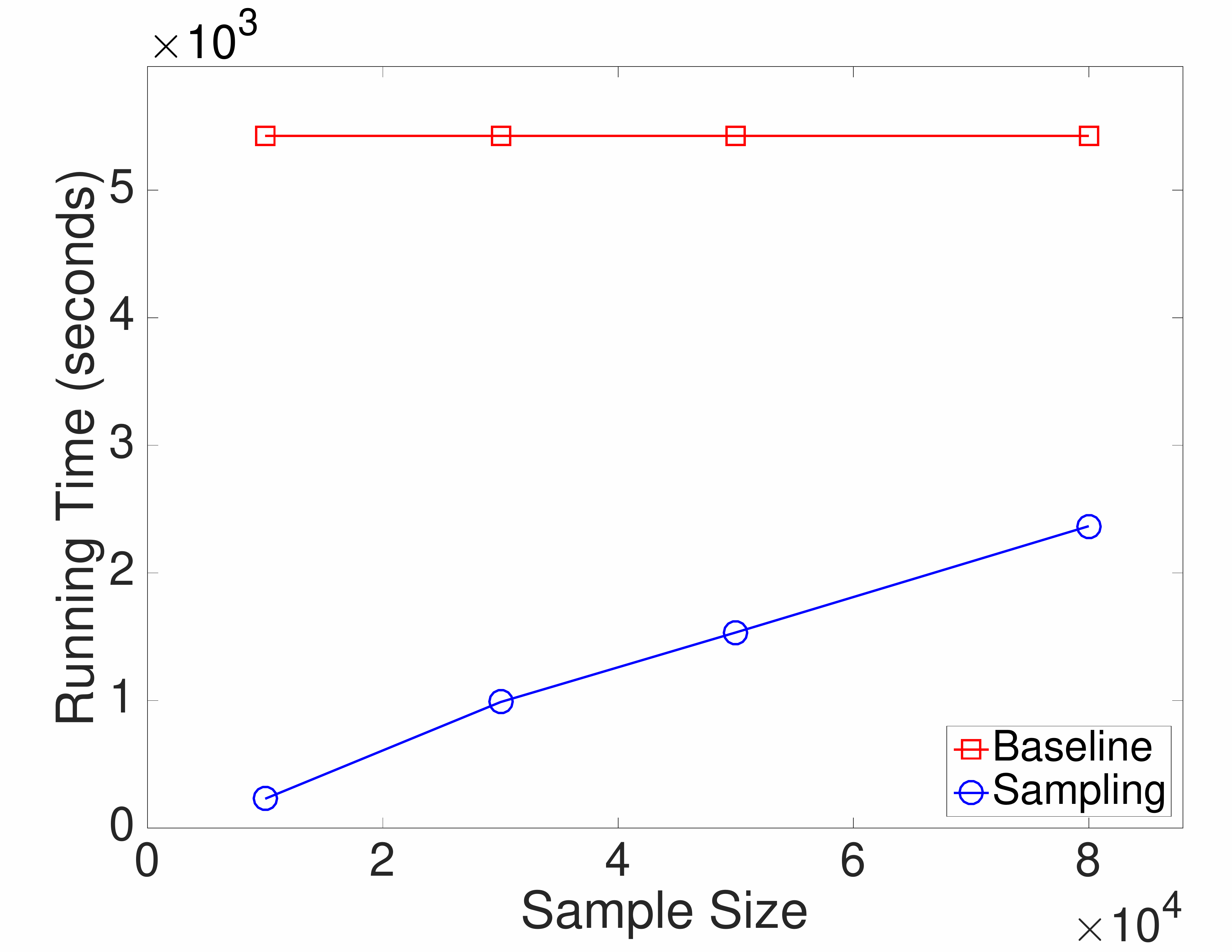}}
\caption{The AP, RS and runtime of the sampling-based method and the baseline method on the SYN1 dataset. The sample sizes are $|S_l|=1000, 3000, 5000, 8000$.}
\label{fig:exp_syn1} 
\end{figure}

\begin{figure}[t]
\centering
	\subfigure[AP, SYN2, $l=1$]{\includegraphics[width=\parawidth]{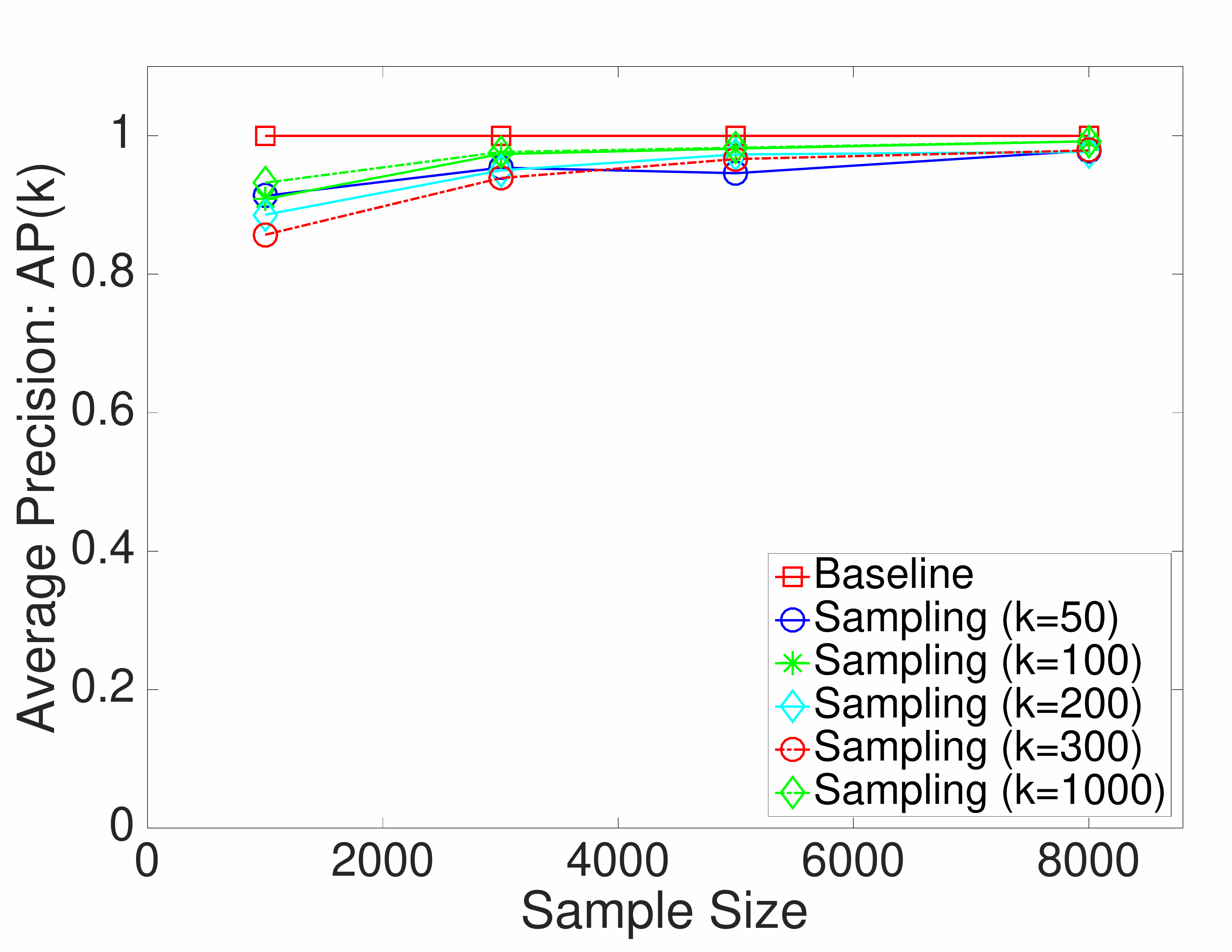}}
	\subfigure[AP, SYN2, $l=2$]{\includegraphics[width=\parawidth]{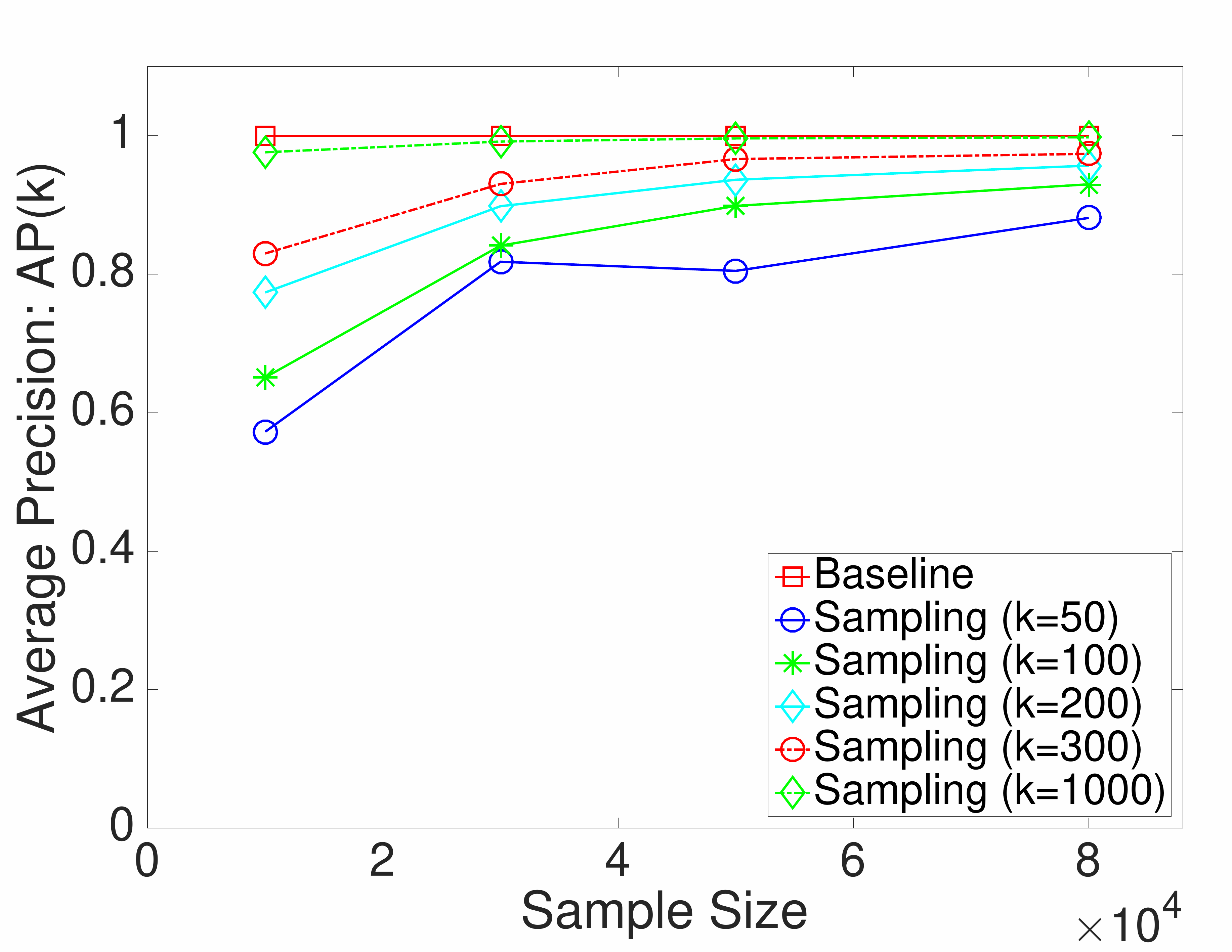}}
	
	\subfigure[RS, SYN2, $l=1$]{\includegraphics[width=\parawidth]{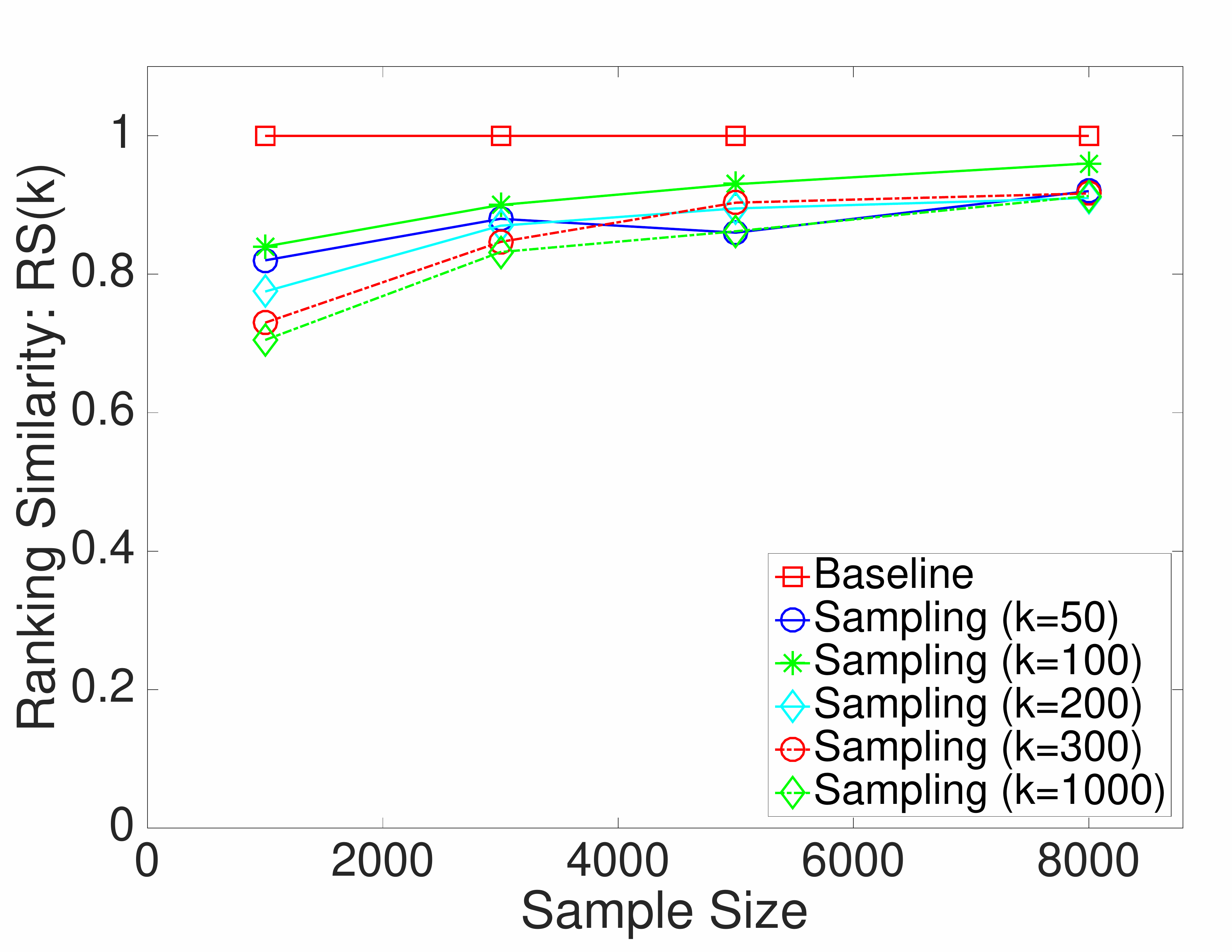}}
	\subfigure[RS, SYN2, $l=2$]{\includegraphics[width=\parawidth]{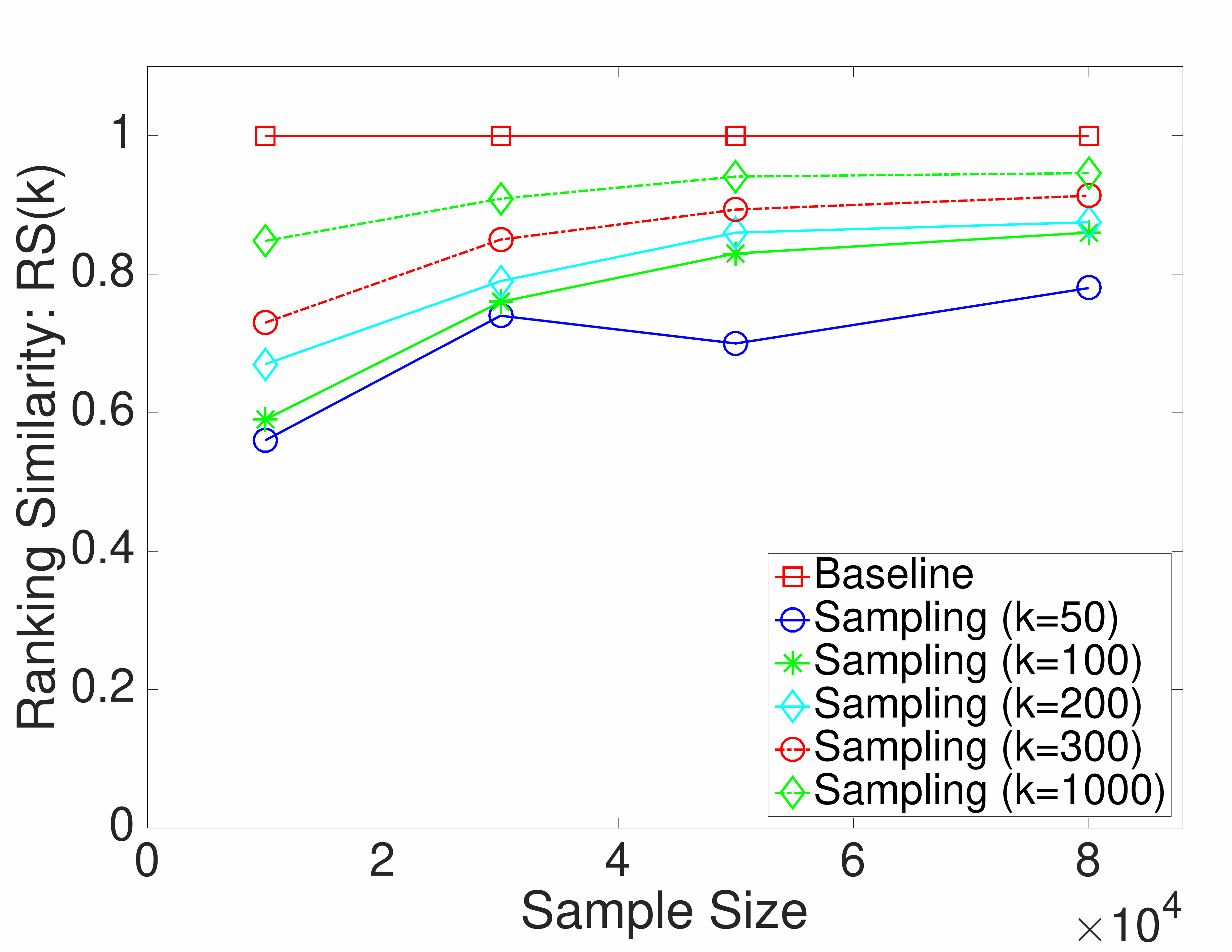}}
	
	\subfigure[RT, SYN2, $l=1$]{\includegraphics[width=\parawidth]{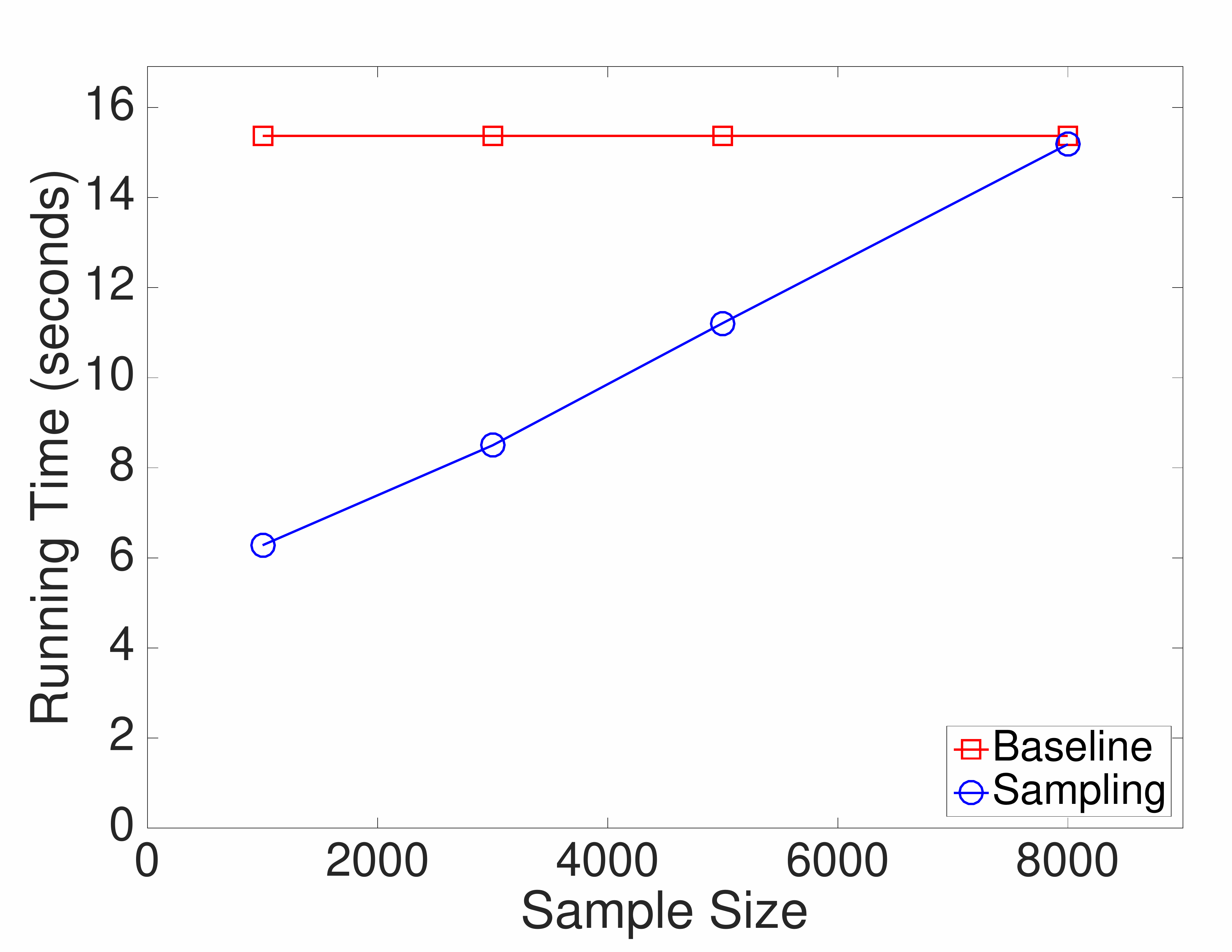}}	
	\subfigure[RT, SYN2, $l=2$]{\includegraphics[width=\parawidth]{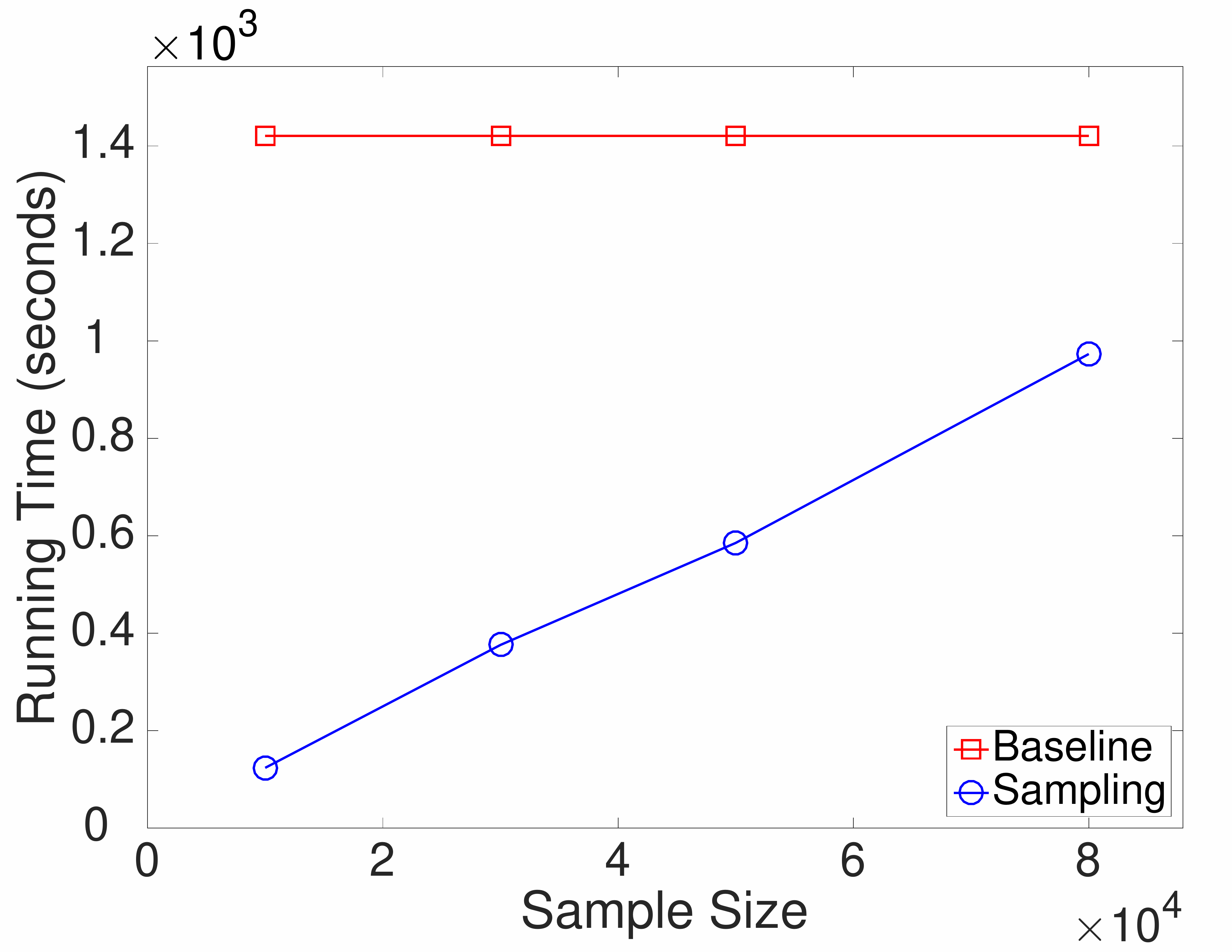}}
\caption{The AP, RS and runtime of the sampling-based method and the baseline method on the SYN2 dataset. The sample sizes are $|S_l|=1\times 10^4, 3\times 10^4, 5\times 10^4, 8\times 10^4$.}
\label{fig:exp_syn2} 
\end{figure}

\begin{figure}[t]
\centering
	\subfigure[AP, Flight, $l=1$]{\includegraphics[width=\parawidth]{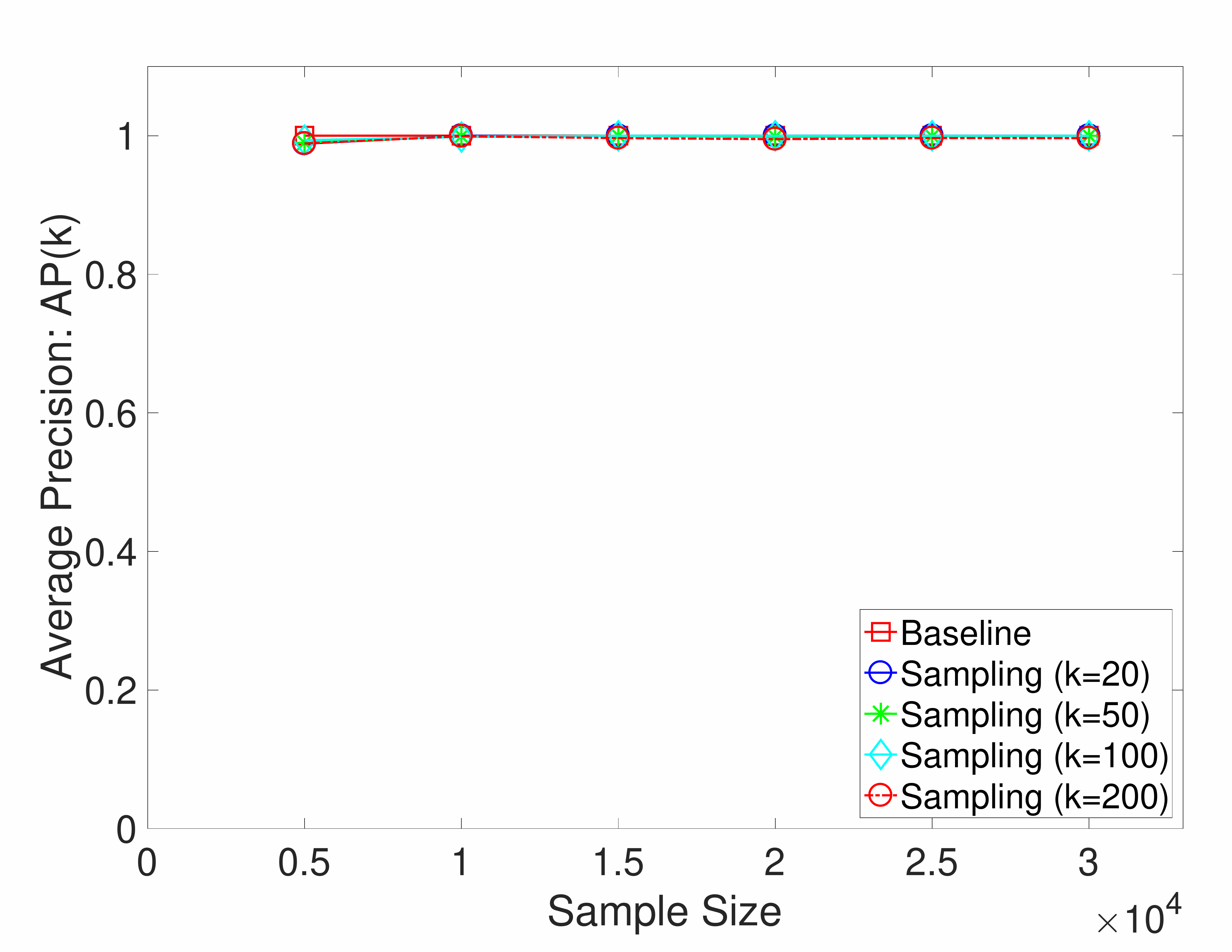}}
	\subfigure[AP, Flight, $l=2$]{\includegraphics[width=\parawidth]{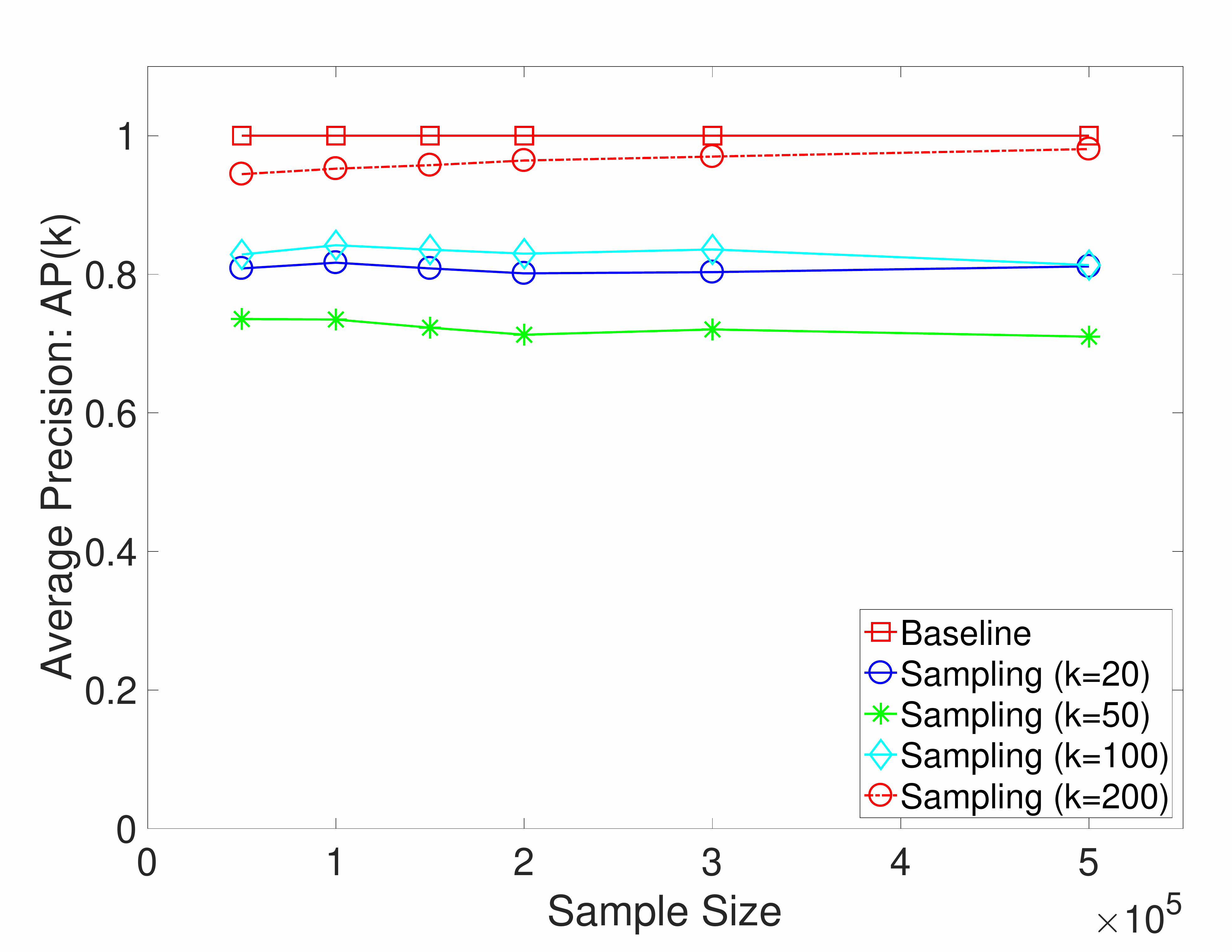}}
	
	\subfigure[RS, Flight, $l=1$]{\includegraphics[width=\parawidth]{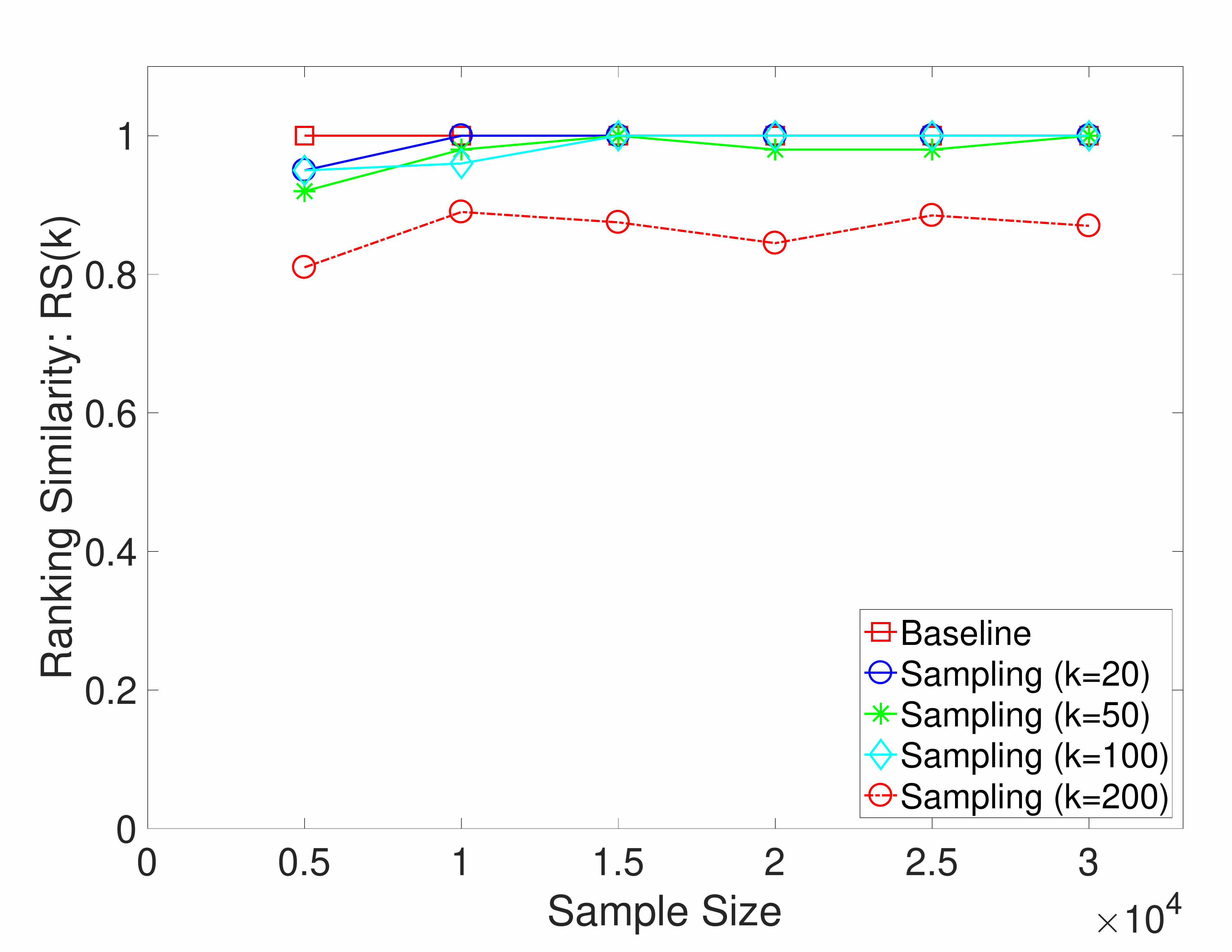}}
	\subfigure[RS, Flight, $l=2$]{\includegraphics[width=\parawidth]{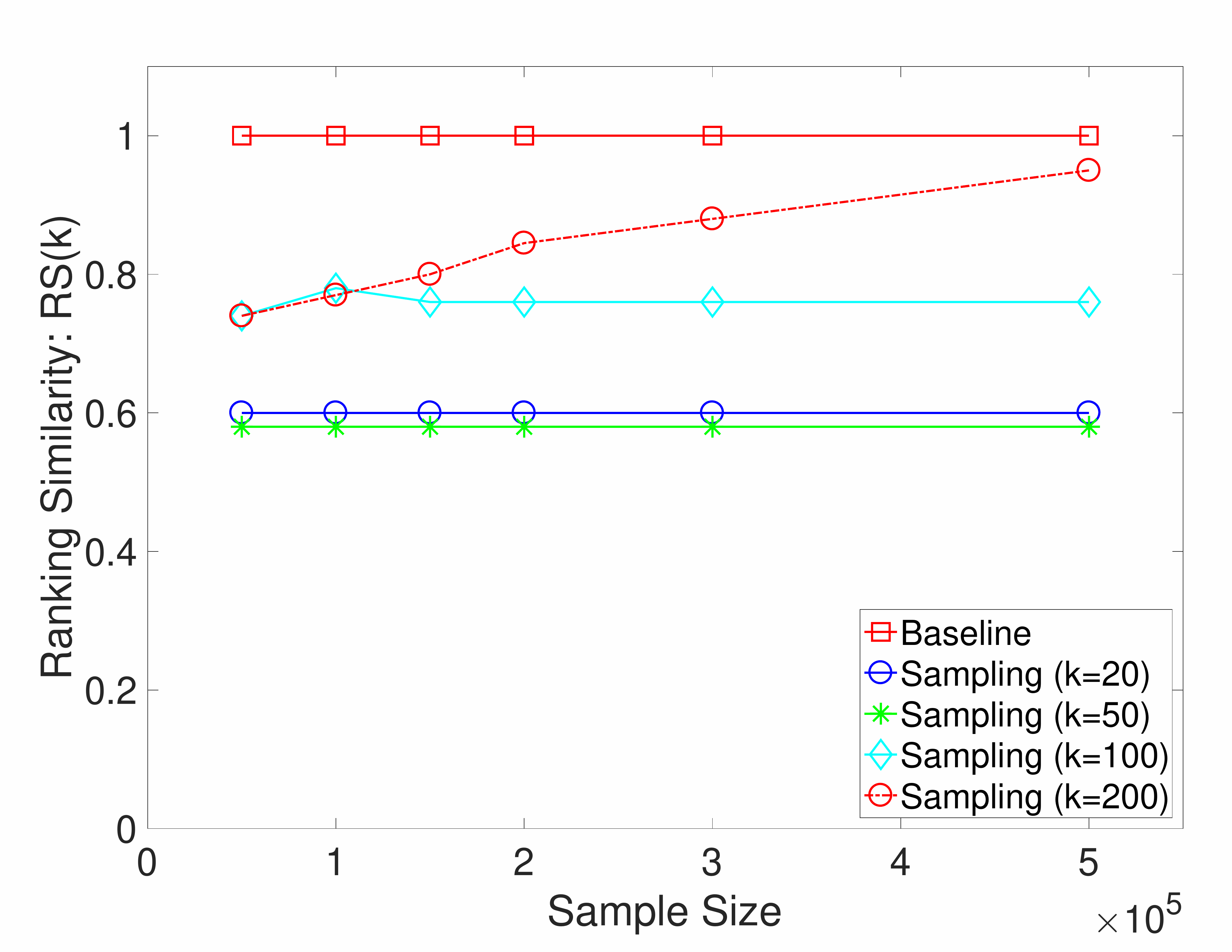}}

	\subfigure[RT, Flight, $l=1$]{\includegraphics[width=\parawidth]{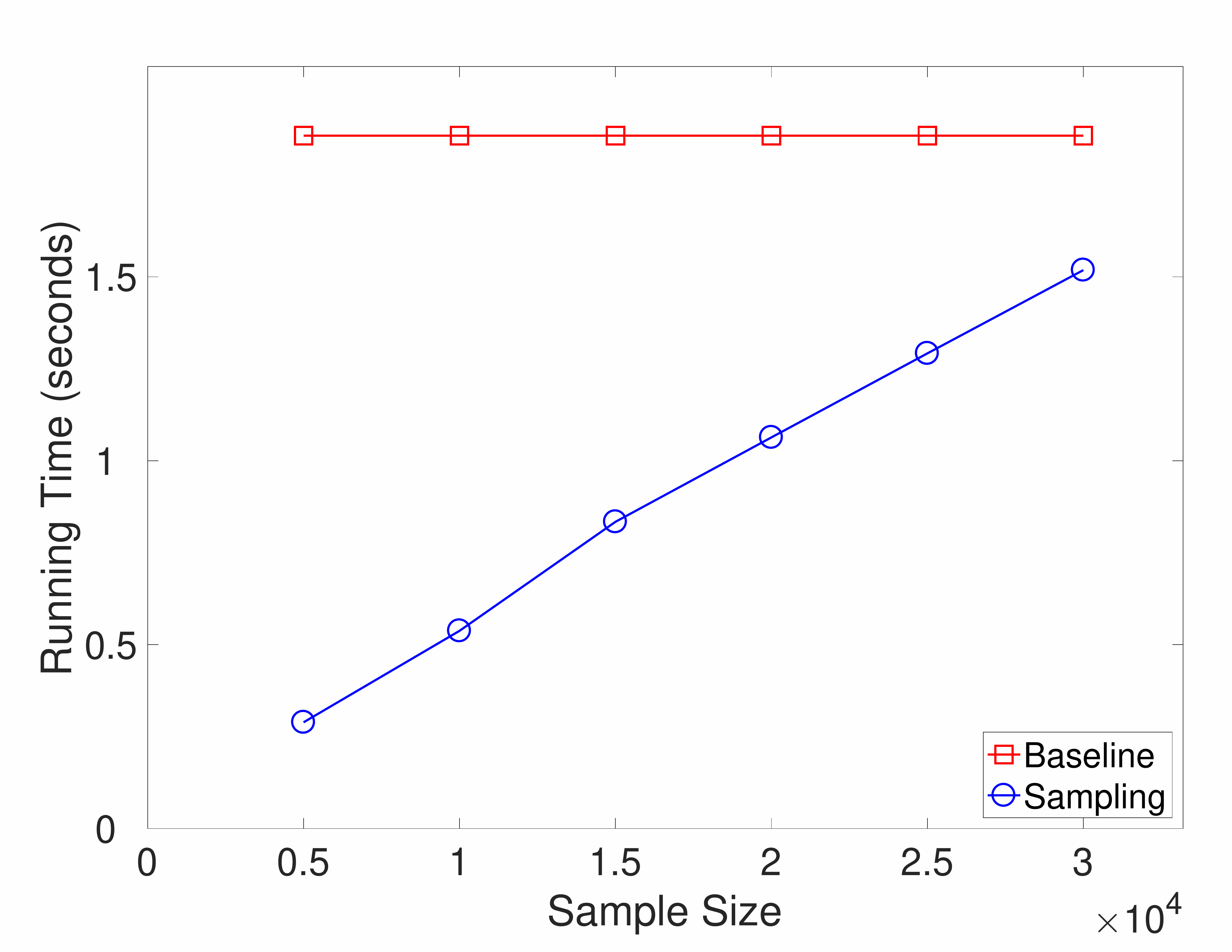}}
	\subfigure[RT, Flight, $l=2$]{\includegraphics[width=\parawidth]{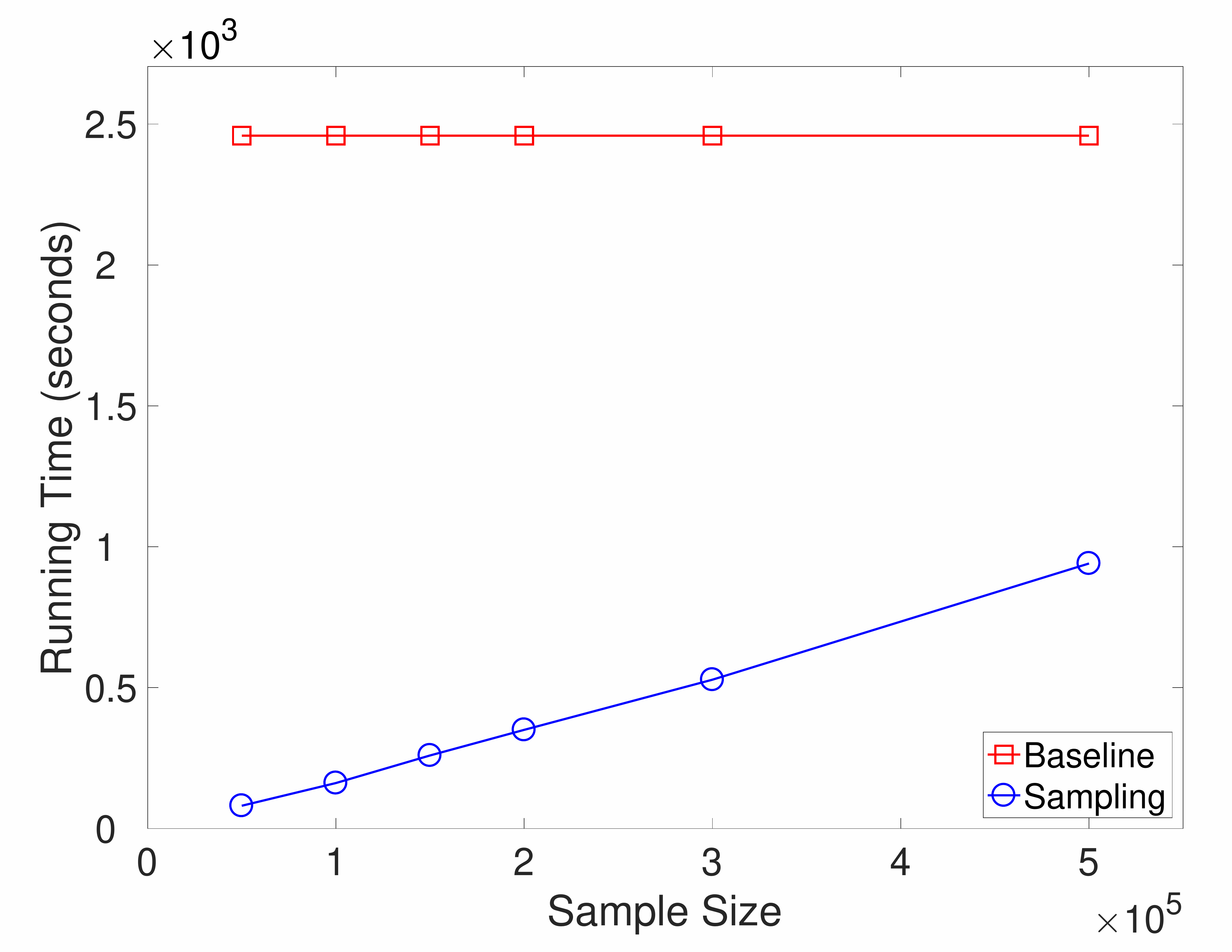}}
\caption{The AP, RS and runtime of the sampling-based method on the Flight dataset.}
\label{fig:flight} 
\end{figure}

\begin{figure}[t]
\centering
	\subfigure[AP, CN1, $l=1$]{\includegraphics[width=\parawidth]{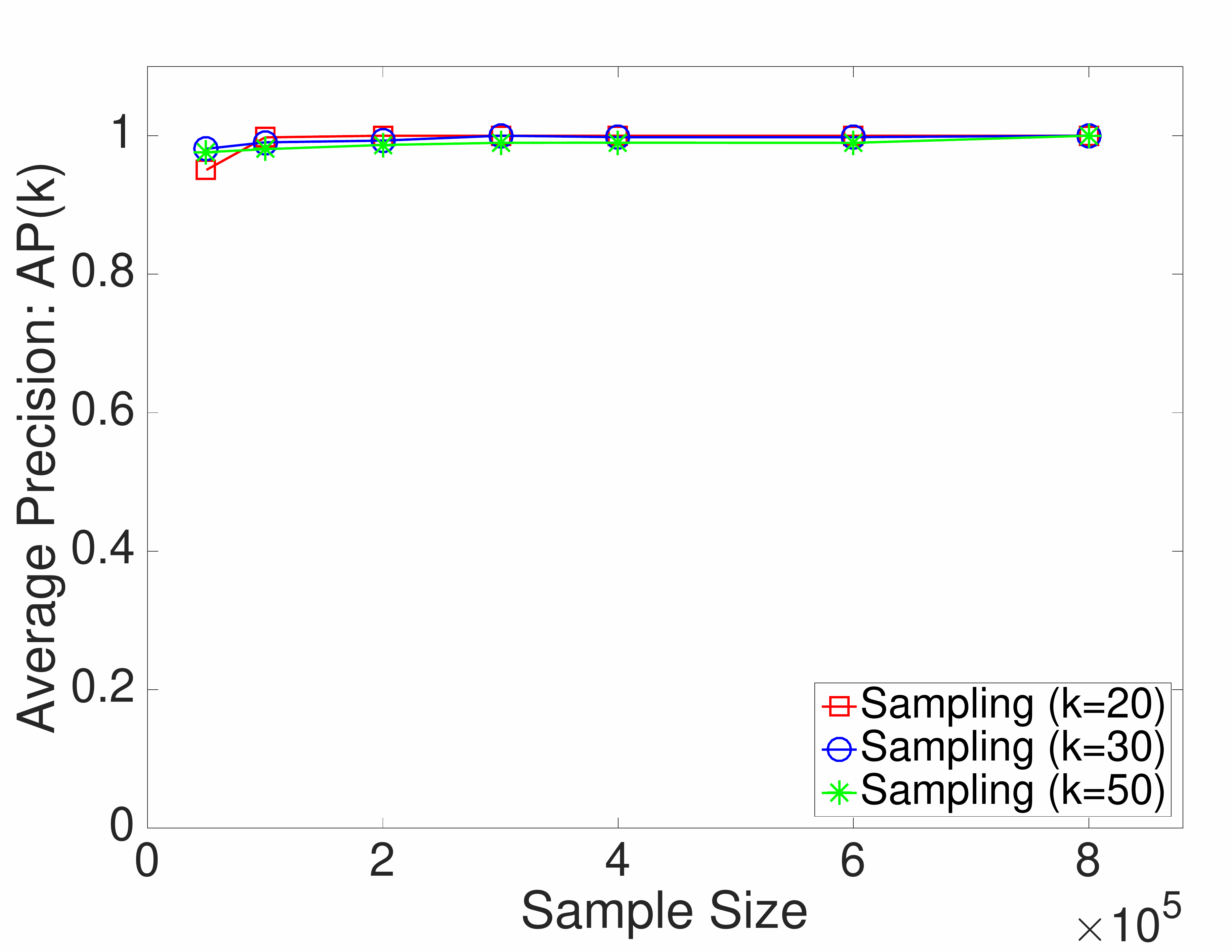}}
	\subfigure[AP, CN1, $l=2$]{\includegraphics[width=\parawidth]{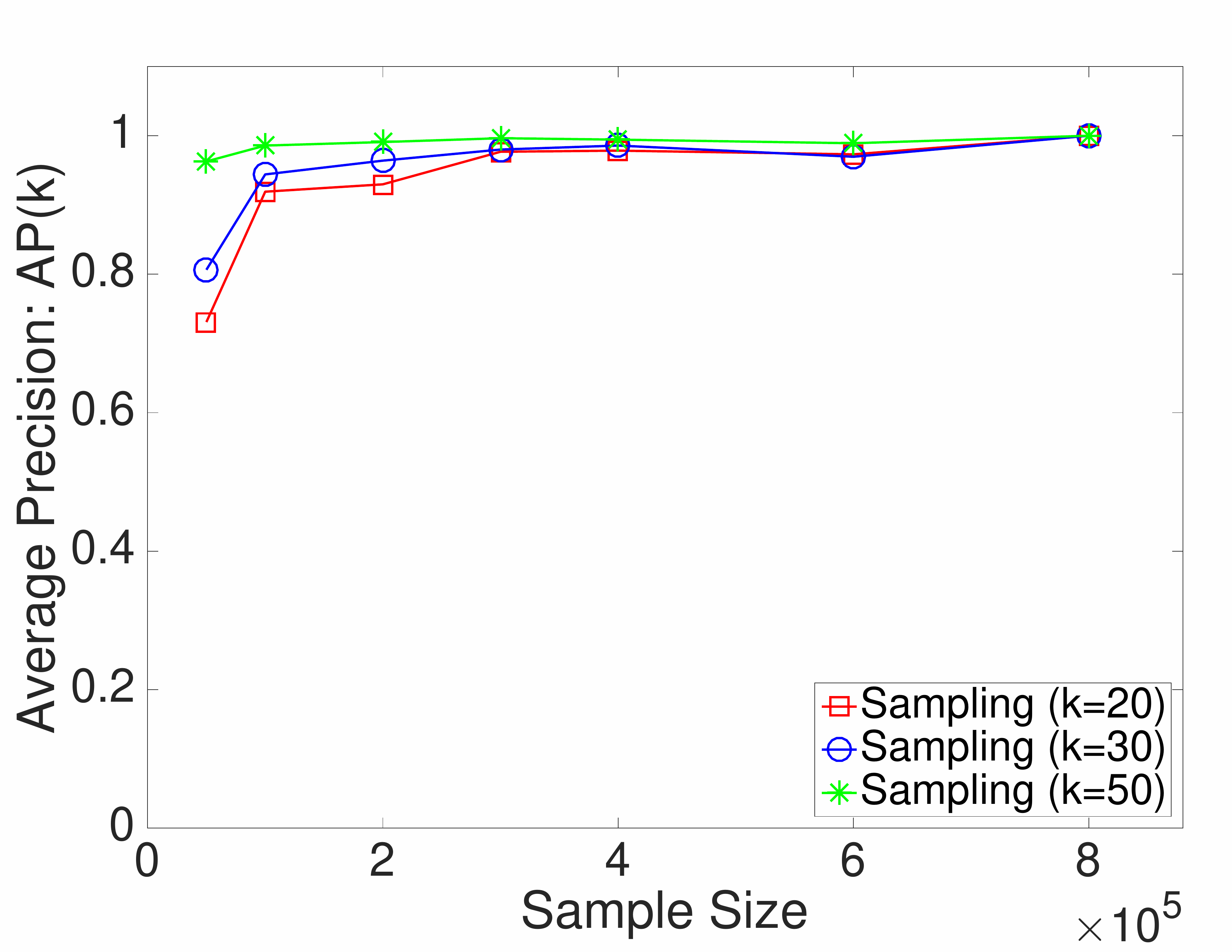}}
	
	\subfigure[RS, CN1, $l=1$]{\includegraphics[width=\parawidth]{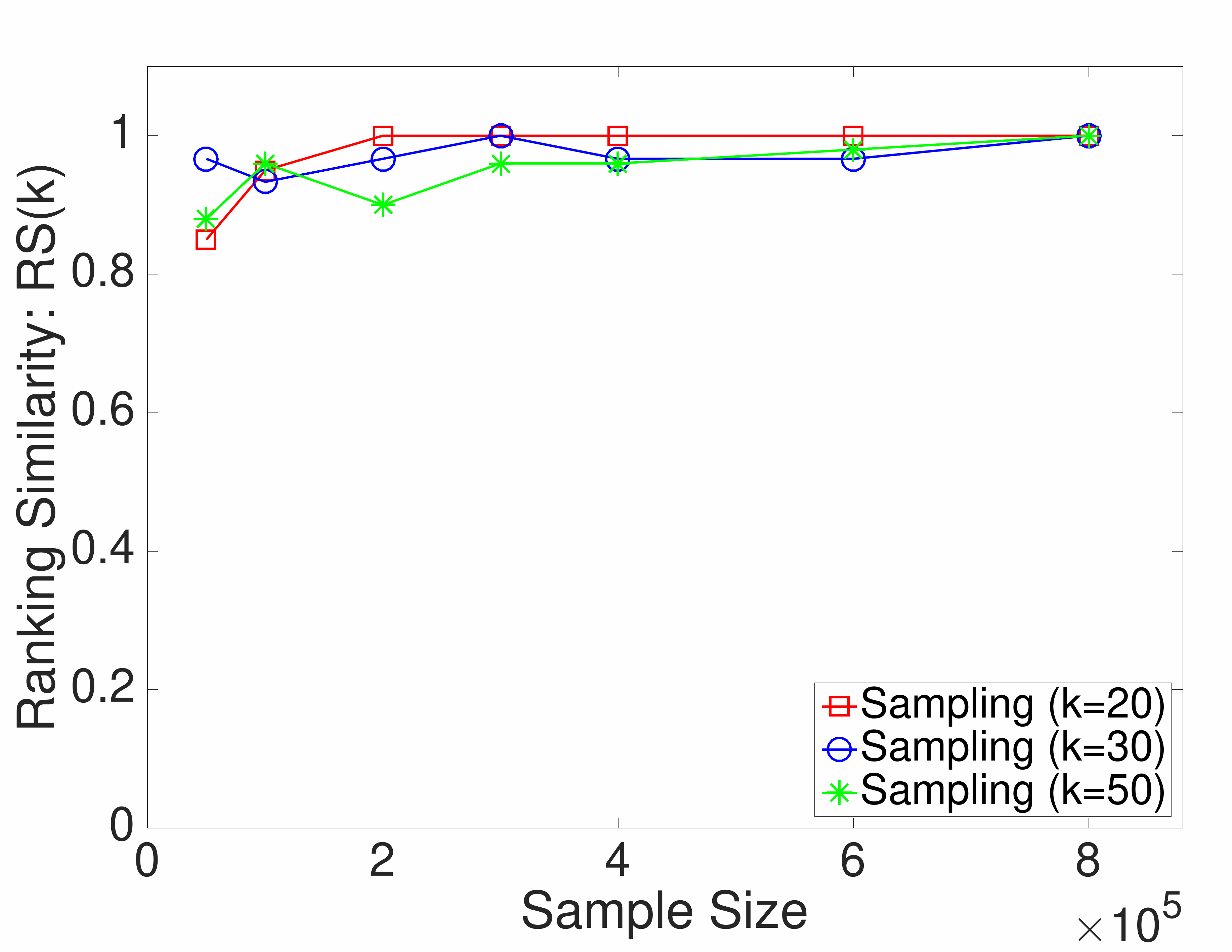}}
	\subfigure[RS, CN1, $l=2$]{\includegraphics[width=\parawidth]{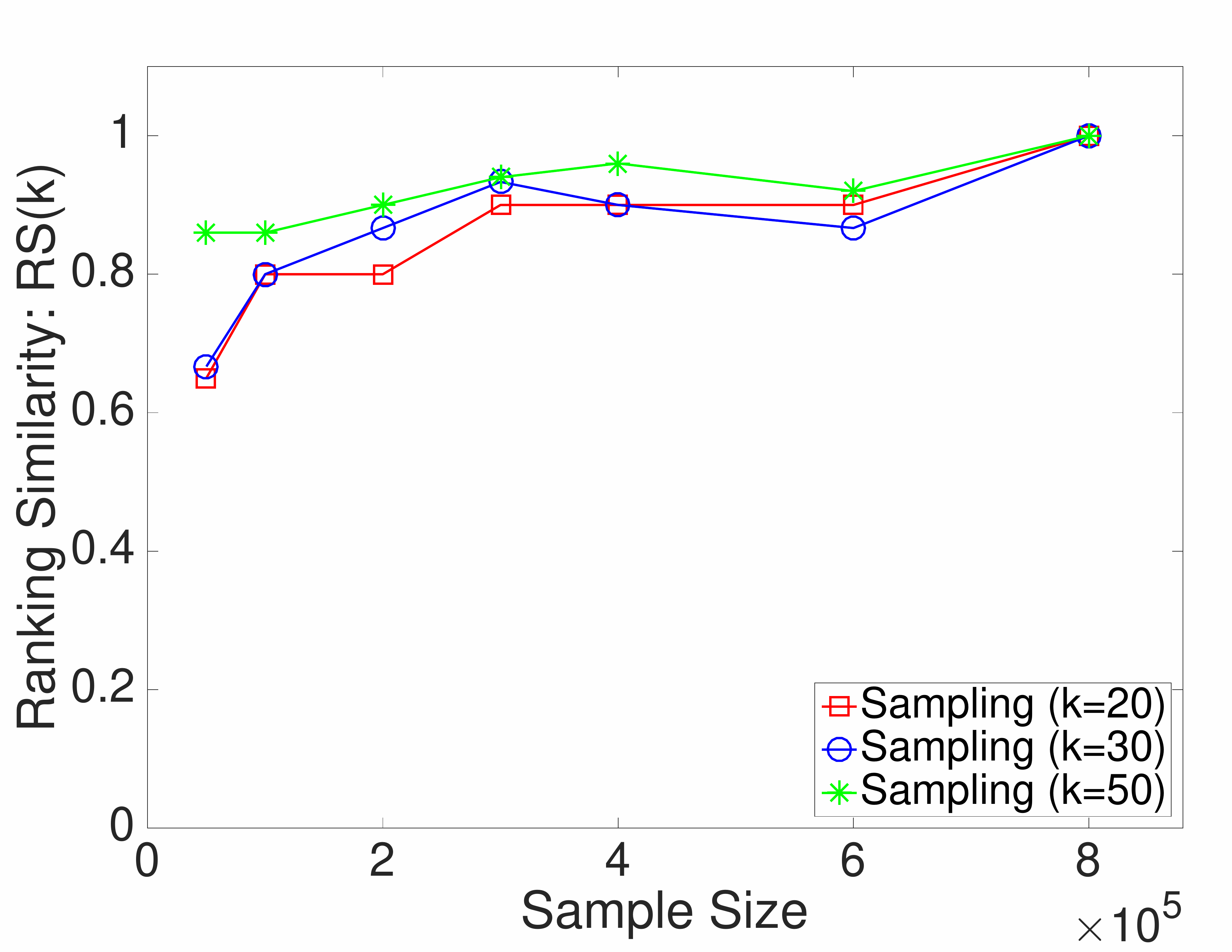}}

	\subfigure[RT, CN1, $l=1$]{\includegraphics[width=\parawidth]{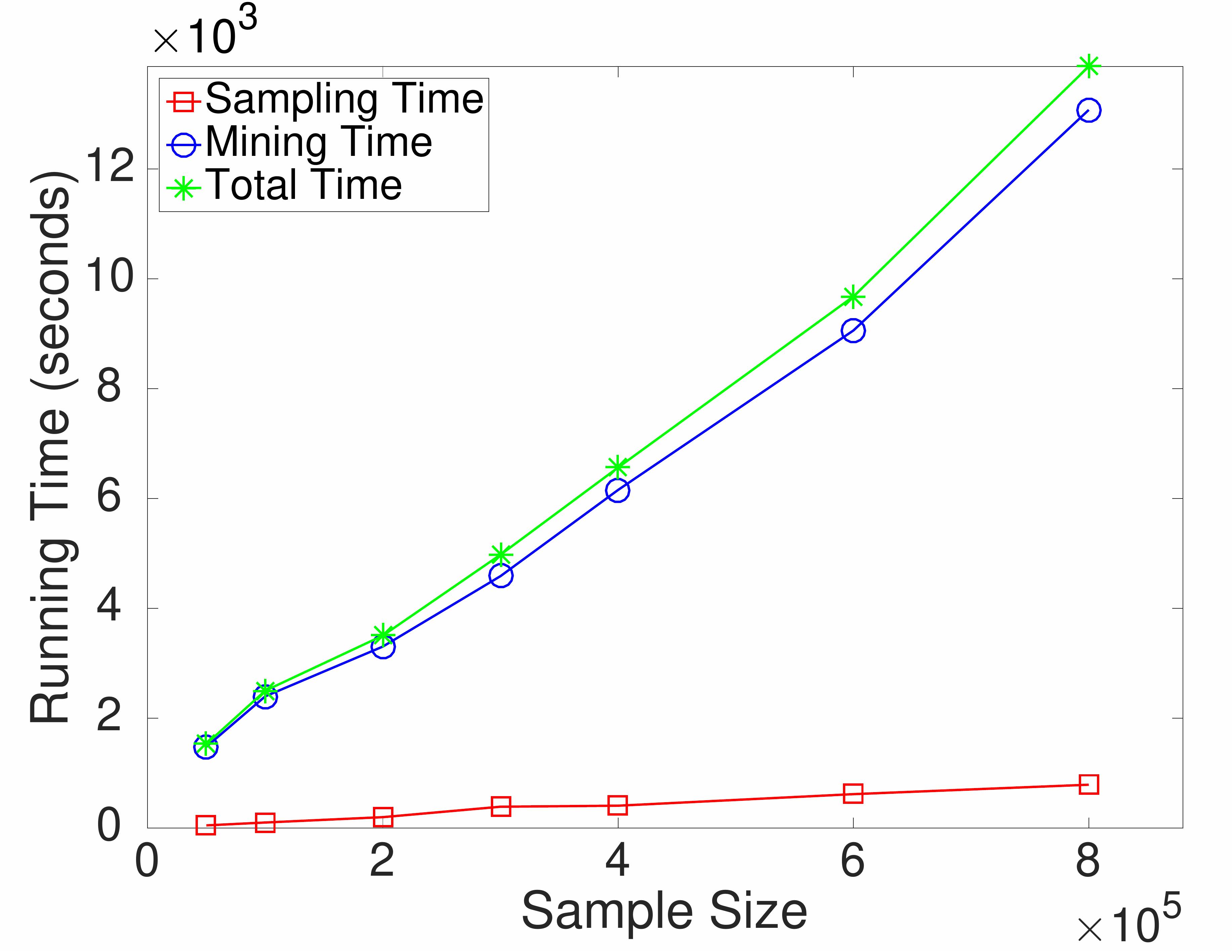}}
	\subfigure[RT, CN1, $l=2$]{\includegraphics[width=\parawidth]{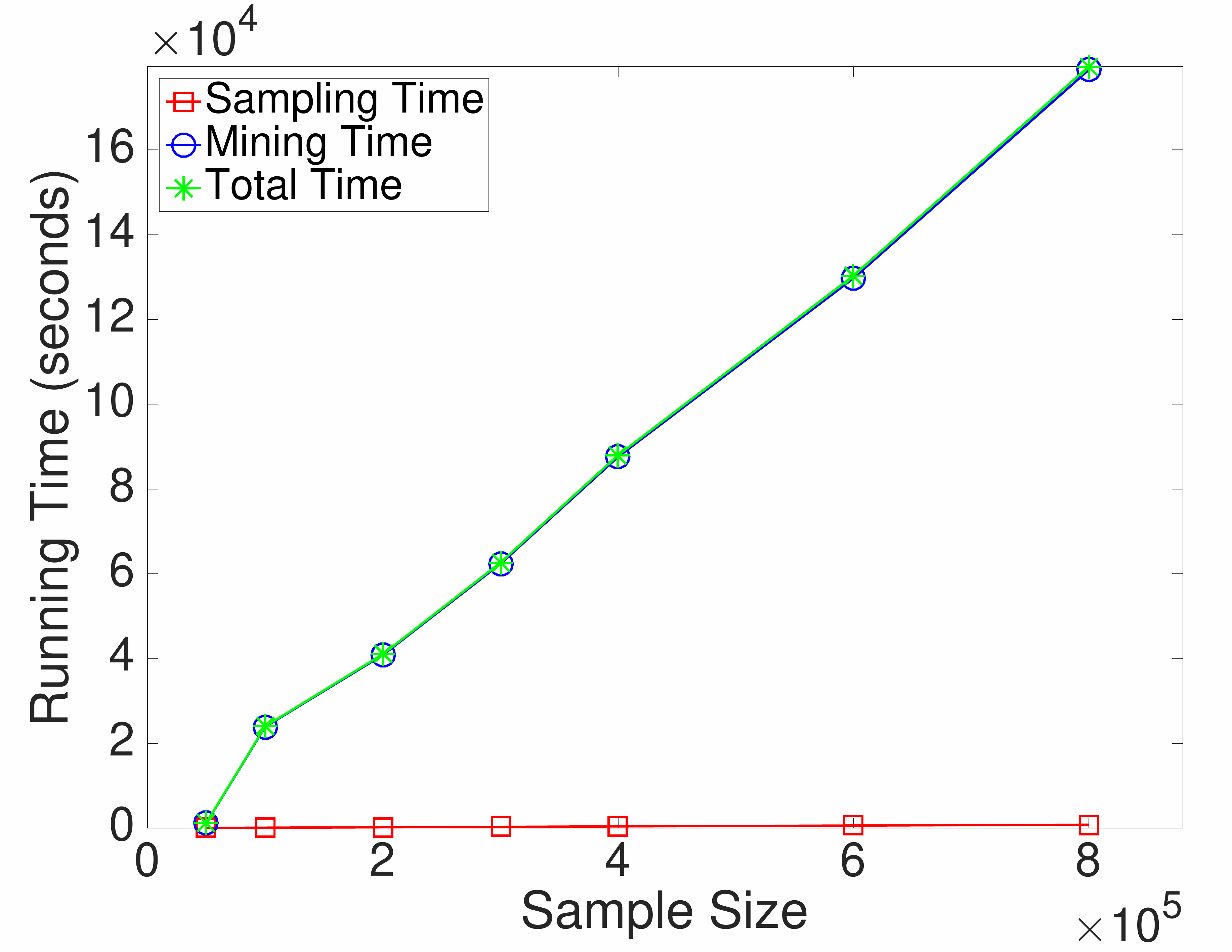}}
\caption{The AP, RS and runtime of the sampling-based method on the CN1 dataset.}
\label{fig:exp_cn1} 
\end{figure}

\begin{figure}[t]
\centering
	\subfigure[AP, CN2, $l=1$]{\includegraphics[width=\parawidth]{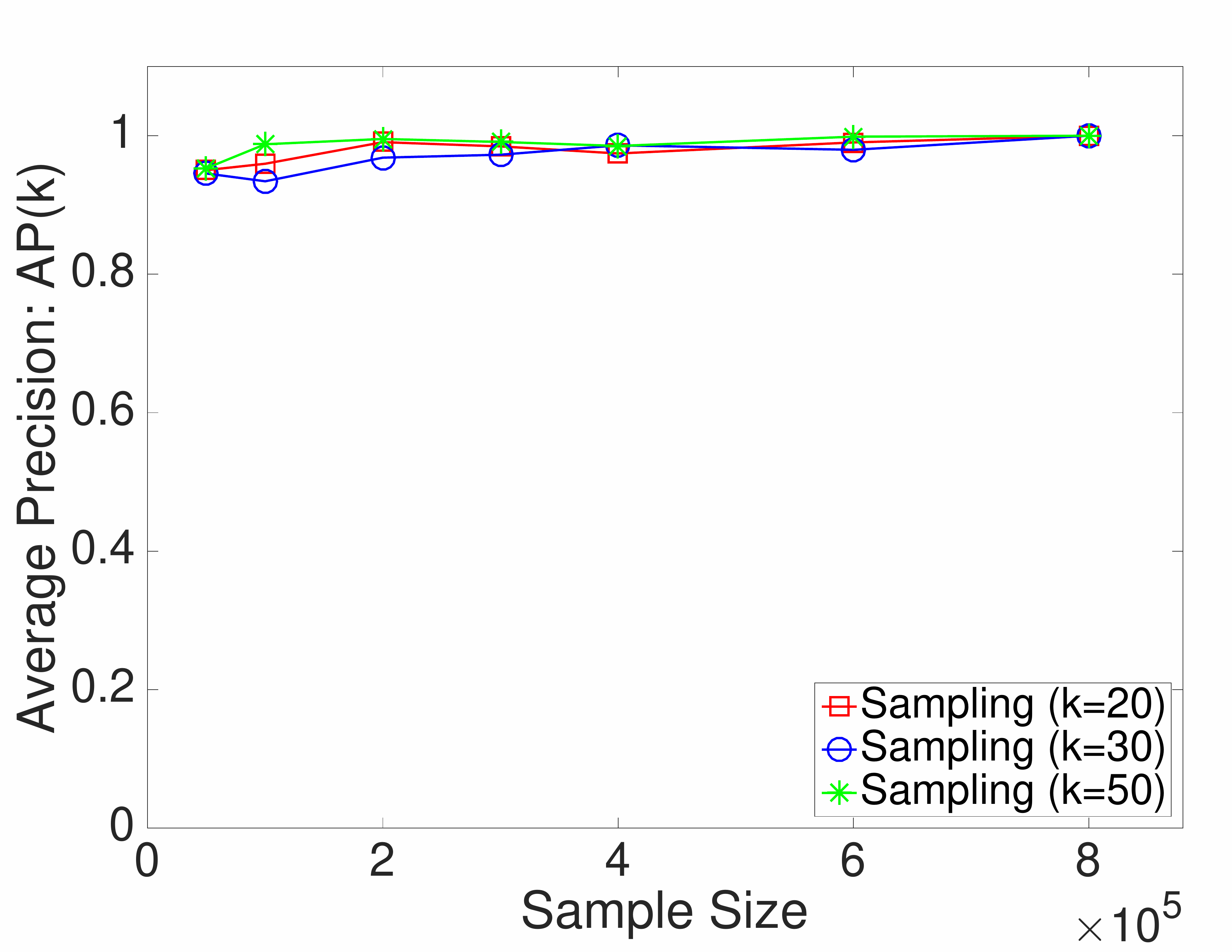}}
	\subfigure[AP, CN2, $l=2$]{\includegraphics[width=\parawidth]{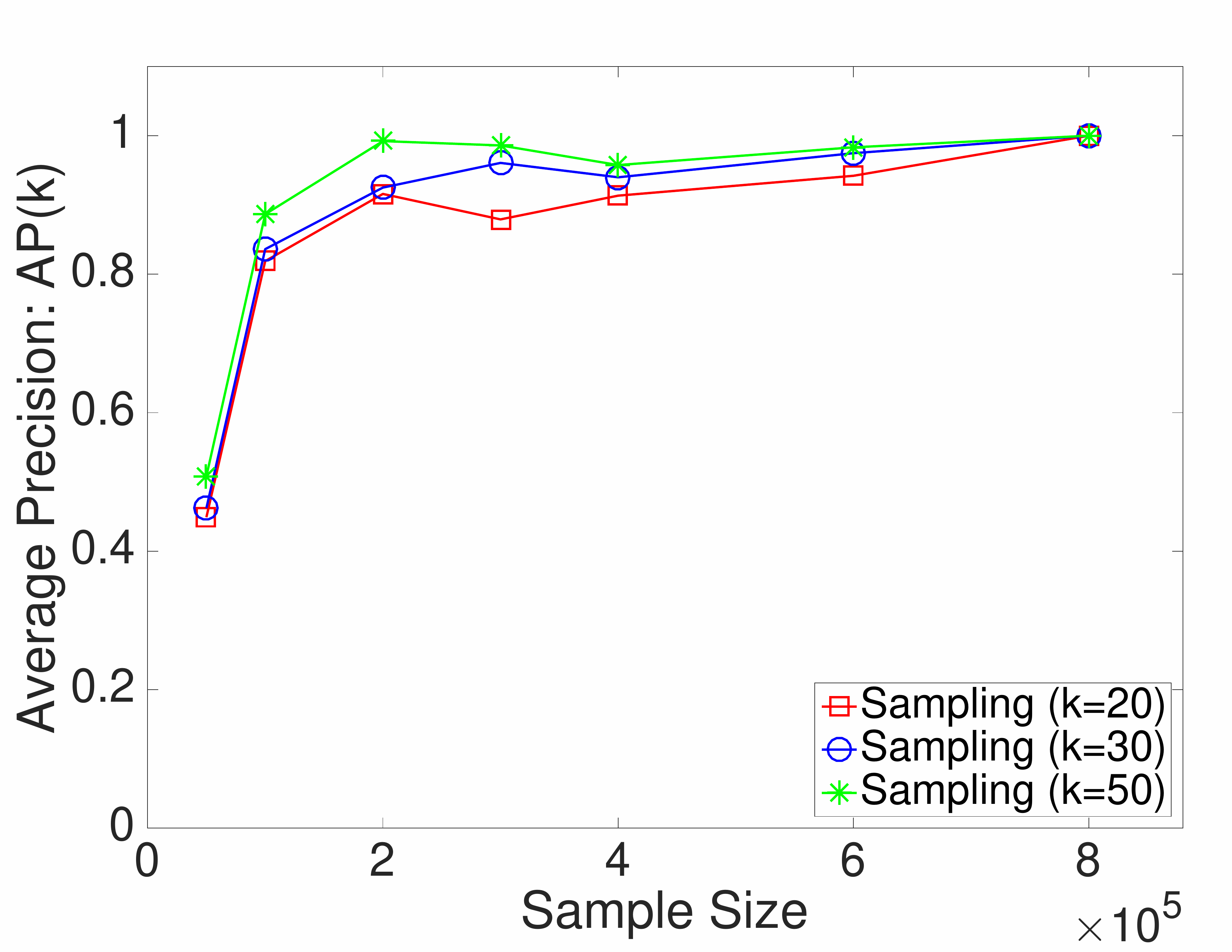}}
	
	\subfigure[RS, CN2, $l=1$]{\includegraphics[width=\parawidth]{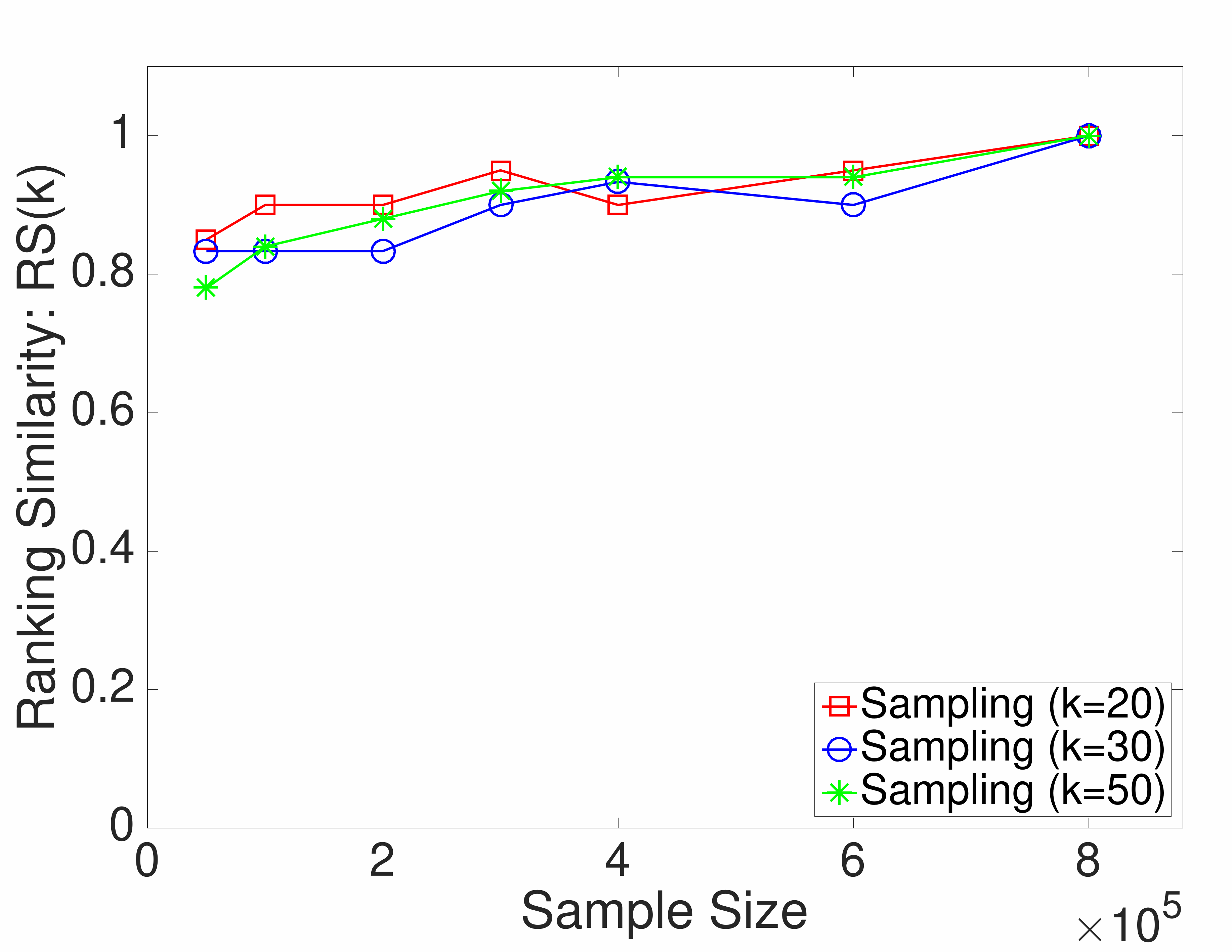}}
	\subfigure[RS, CN2, $l=2$]{\includegraphics[width=\parawidth]{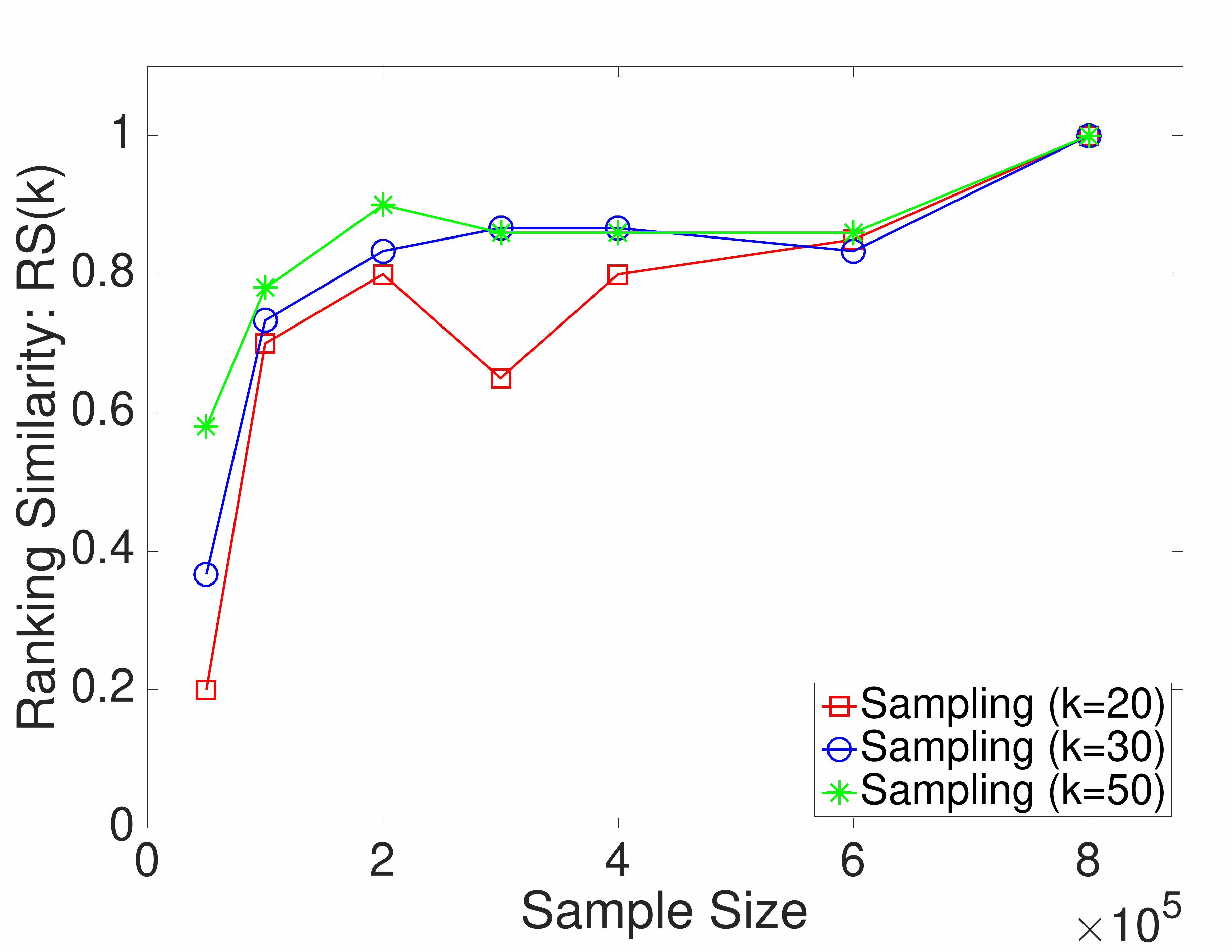}}
	
	\subfigure[RT, CN2, $l=1$]{\includegraphics[width=\parawidth]{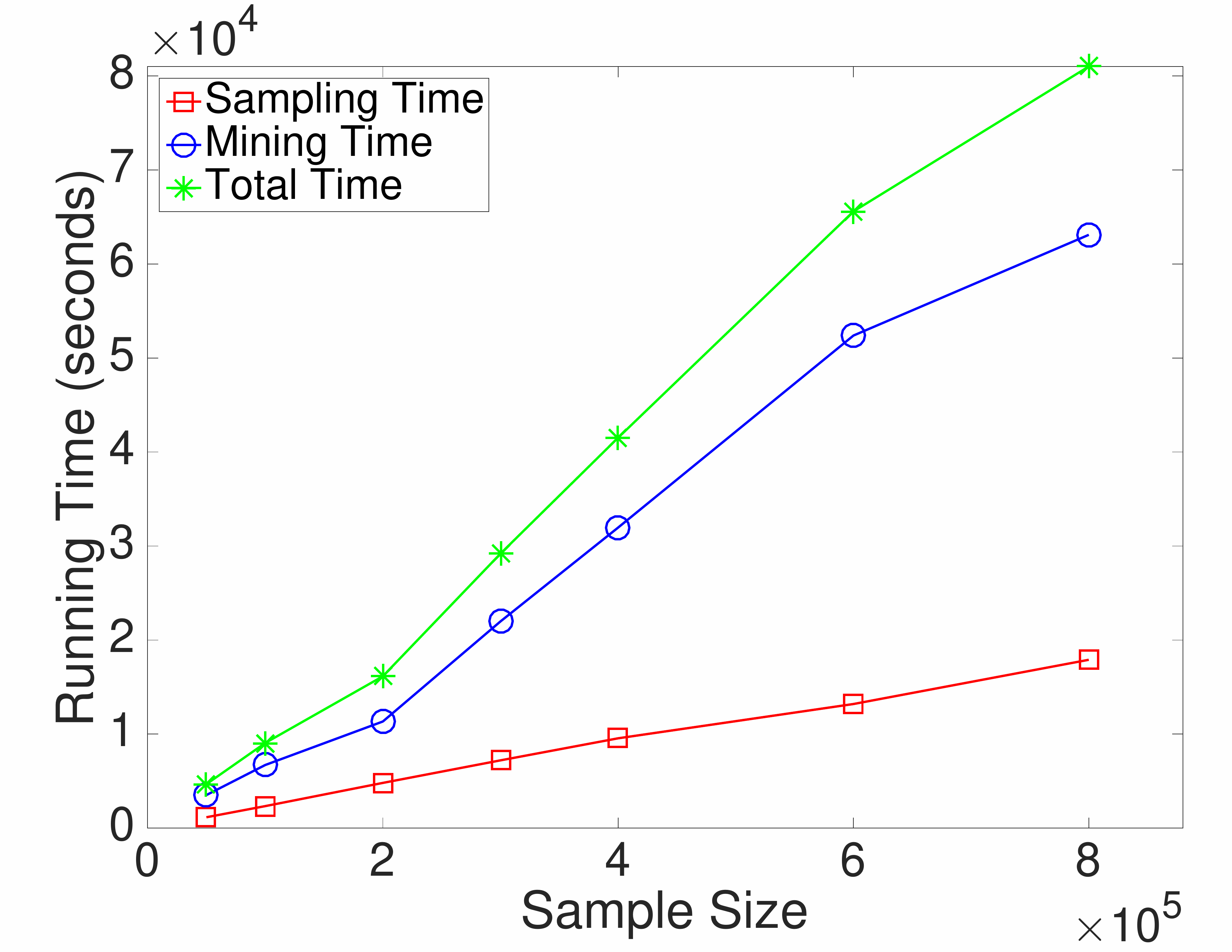}}	
	\subfigure[RT, CN2, $l=2$]{\includegraphics[width=\parawidth]{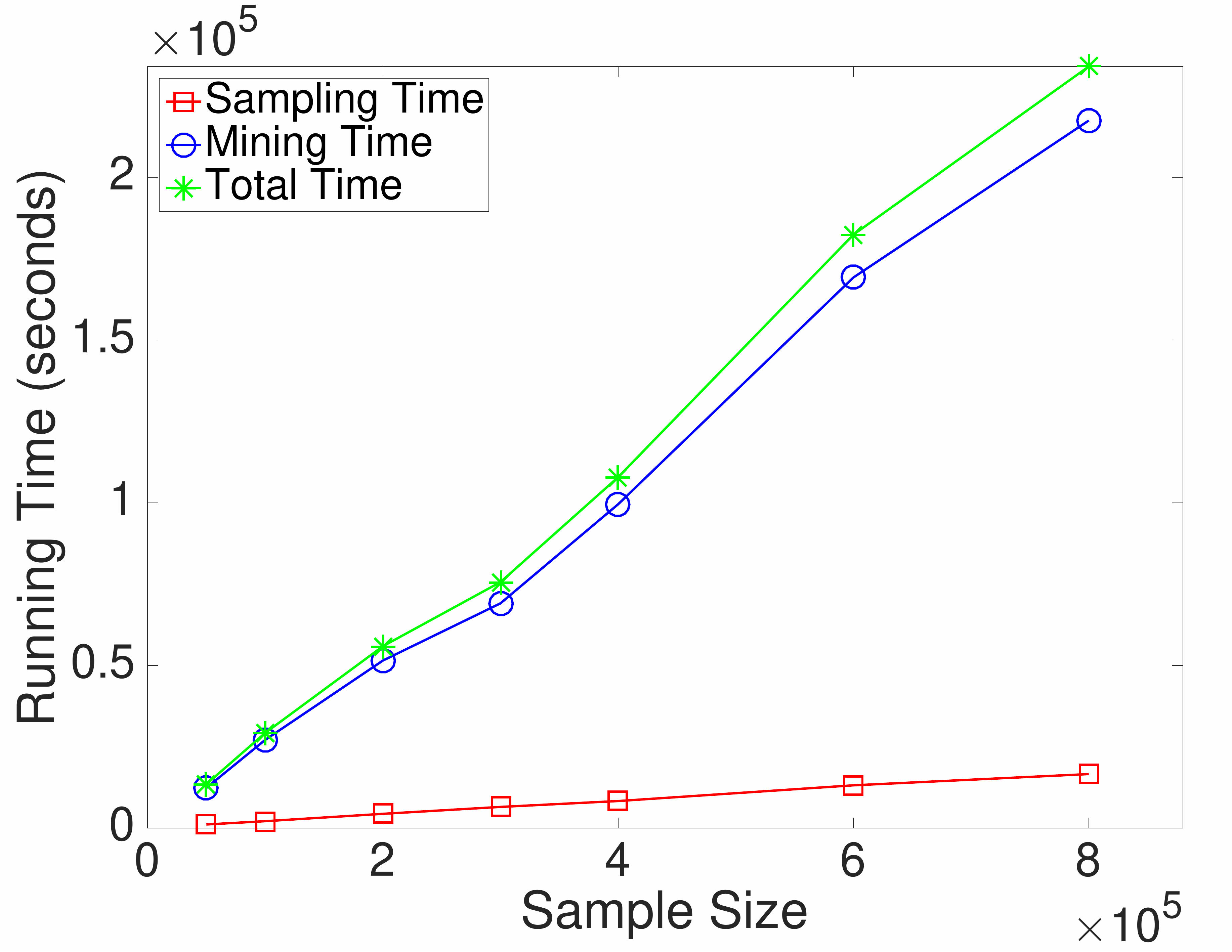}}
\caption{The AP, RS and RT performances of the sampling-based method on the CN2 dataset.}
\label{fig:exp_cn3} 
\end{figure}

\subsection{Results on Real-world Datasets}
In this section, we introduce the experiment results on the real-world datasets. We focus on analyzing how the AP, RS and runtime of the baseline and sampling-based methods change when the sample size $|S_l|$ increases. Since the baseline method cannot complete in a practical amount of time on the large datasets, we use the small real-world dataset Flight to demonstrate the performances of baseline and sampling-based methods, and test the scalability of our method on CN1 and CN2.

We first introduce the performance on the small dataset Flight, where the ground truth ranked list of patterns can be obtained by running the exact baseline method. For length $l=1$ and $l=2$, the numbers of all transaction sequences in Flight are 44,013 and 2,369,477, respectively. The sample sizes are set to $|S_1|=5\times 10^3, 1\times 10^4, 1.5\times 10^4, 2\times 10^4, 2.5\times 10^4, 3\times 10^4$ and $|S_2|=5\times 10^4, 1\times 10^5, 1.5\times 10^5, 2\times 10^5, 3\times 10^5, 5\times 10^5$ and $k=20,50,100,200$.

Figures~\ref{fig:flight}(a)-(d) show the AP and RS of the baseline method and the sampling-based method on Flight.
The baseline always has AP and RS of $1.0$ since it produces the exact results.
For the sampling-based method, when the sample size increases, the estimation error of pattern frequency decreases. 
As a result, when the sample size is not too small, the sampling-based method can produce a stable and accurate ranked list of patterns, which is comparable with the baseline in AP and RS.

Notice that, the AP and RS don't converge fast on Flight. When $k$ is large, the performances are not always good. The reason is that, the patterns in the long tail always have similar frequencies. This leads to that, if we want to distinguish them, we need much more samples, which may exceed the total number of transaction sequences. In other words, as we have explained above, for decreasing the estimation error of pattern frequency, we must increase the sample size.

Figures~\ref{fig:flight}(e)-(f) show the runtime of the baseline method and the sampling-based method. Since the baseline method is independent from the sample size of transaction sequences, the runtime of the baseline method does not change when the sample size increases. For the sampling-based method, since a larger sample size leads to a larger time cost in transaction sequence sampling and sequential pattern mining, the runtime of the sampling-based method increases when the sample size increases.  We will test more on the scalability on the larger citation network data sets next. 

We now introduce the experiment results on the CN1 and CN2 datasets. The sample sizes are set to $|S_l|=5\times 10^4, 1\times 10^5, 2\times 10^5, 3\times 10^5, 4\times 10^5, 6\times 10^5, 8\times 10^5$. To test the stability of the sampling-based method, we use the ranked list $\mathcal{G}'$ generated by the sampling-based method using sample size $|S_l|=8\times 10^5$ as the pseudo ground truth to evaluate the AP and RS of the sampling-based method using different sample sizes.  

Figures~\ref{fig:exp_cn1}(a)-(b) and Figures~\ref{fig:exp_cn3}(a)-(b) show the AP of the sampling-based method on CN1 and CN2, respectively.
For both $l=1$ and $l=2$, when the sample size increases, the AP performance approaches $1$ quickly.
This demonstrates that the estimated ranked list of patterns becomes more stable and accurate when the sample size increases.
The AP for $l=1$ converges faster than the AP for $l=2$. The reason is that increasing $l$ from $l=1$ to $l=2$ increases the bound of the sample complexity(See Theorem~\ref{eq:lemma}). Thus, for the same sample size $|S_l|$, the AP with $l=1$ is better than that with $l=2$.
\nop{The reason is that increasing $l$ from $l=1$ to $l=2$ increases the bound of the sample size $|S_l|$ (i.e., $|S_l|\geq \frac{12|I|(l+1)+12}{\varepsilon^2 a}  ln\frac{2}{\delta}$ in Theorem~\ref{eq:lemma})}\nop{(Since the bound is a maximum value, we can not say that here. The reason is actually the variation of $a$, $\varepsilon$ and $\delta$)}

By comparing Figure~\ref{fig:exp_cn1}(b) and Figure~\ref{fig:exp_cn3}(b), we can see that the AP of the sampling-based method converges faster on CN1 than on CN2. 
CN2 is much larger than CN1 and CN1 is a subset of CN2. When the size of the data set increases, the total number of transaction sequences increases faster than the number of sequences that share a specific pattern. Thus, the pattern frequencies of highly ranked patterns in CN2 are smaller than those in CN1. Consequently, to keep the estimated ranked list of patterns stable and accurate, the error bound $\varepsilon$ for CN2 needs to be smaller than that for CN1.
As a result, we need a larger sample size in CN2 to achieve the same level of result quality as in CN1.

Figures~\ref{fig:exp_cn1}(c)-(d) and~\ref{fig:exp_cn3}(c)-(d) show the RS of the sampling-based method.
Similar to the trend of AP in Figures~\ref{fig:exp_cn1}(a)-(b) and~\ref{fig:exp_cn3}(a)-(b), for $l=1$ and $l=2$, the RS of the sampling-based method converges fast to the best results as well. Due to the same reasons in the situation of AP, the RS for $l=1$ converges faster than that for $l=2$ and the RS on CN1 converges faster than that on CN2.

Figures~\ref{fig:exp_cn1}(e)-(f) and Figures~\ref{fig:exp_cn3}(e)-(f) show the runtime of the sampling-based method on CN1 and CN2. 
The \emph{sampling time} is the time cost of sampling the set of transaction sequences $S_l$. 
The \emph{mining time} is the time cost of mining top-$k$ sequential patterns from $S_l$.
The \emph{total time} is the overall time cost of the sampling-based method, which is the sum of the sampling time and the mining time. 
For both $l=1$ and $l=2$, the runtime of the sampling-based method increases almost linearly when the sample size increases. This demonstrates the superior scalability of the proposed sampling-based method.

In summary, the sampling-based method can obtain stable top-$k$ sequential patterns when the sample size is not too small, and is much more scalable than the baseline method in mining frequent sequential patterns from large graphs.

\begin{table}[t]
\caption{Some interesting patterns in CN1 ($l=1$)}
\centering
\label{tab:casePattern_1}
	\begin{tabular}{|p{6.9cm}|r|}
	\hline
	\centering  Sequential Patterns & \#RP\\
	\hline
	$\langle$(machine learning), (social network)$\rangle$ & 21,200\\
	\hline
	$\langle$(social network), (random walk)$\rangle$ & 12,900\\
	\hline
	$\langle$(deep learning), (clustering)$\rangle$ & 12,600\\
	\hline
	$\langle$(graph embedding), (classification)$\rangle$ & 5,390\\
	\hline
	$\langle$(social network), (anomaly detection)$\rangle$ & 4,920\\
	\hline		
	$\langle$(deep learning), (reinforcement learning)$\rangle$  & 3,870\\				
	\hline
	$\langle$(graph partitioning), (community detection)$\rangle$ & 3,500\\
	\hline
	$\langle$(dynamic network), (game theory)$\rangle$ & 3,340\\
	\hline
	$\langle$(deep learning), (anomaly detection)$\rangle$ & 1,430\\	
	\hline
	$\langle$(network evolution), (game theory)$\rangle$ & 1,170\\
	\hline
	\end{tabular}
\end{table}

\begin{table}[t]
\caption{Some interesting patterns in CN1 ($l=2$)}
\centering
\label{tab:casePattern_2}
	\begin{tabular}{|p{6.9cm}|r|}
	\hline
	\centering Sequential Patterns & \#RP\\
	\hline
	$\langle$(data mining), (machine learning), (information retrieval)$\rangle$ & 37,700\\
	\hline
	$\langle$(machine learning), (social network), (approximation algorithm)$\rangle$ & 10,900\\
	\hline
	$\langle$(data mining), (machine learning, social network), (web search)$\rangle$ & 7,450\\
	\hline
	$\langle$(deep learning), (clustering), (prediction)$\rangle$ & 7,430\\
	\hline	
	$\langle$(dynamic network), (game theory), (clustering)$\rangle$ & 4,410\\
	\hline
	$\langle$(social network), (random walk), (recommendation)$\rangle$ & 4,180\\
	\hline
	$\langle$(social network), (anomaly detection), (classification)$\rangle$  & 3,240\\
	\hline
	$\langle$(graph embedding), (classification), (clustering)$\rangle$ & 2,980\\
	\hline
	$\langle$(deep learning), (reinforcement learning), (classification)$\rangle$ & 2,660\\	
	\hline
	$\langle$(social network), (anomaly detection), (pattern mining)$\rangle$ & 518\\
	\hline
	\end{tabular}
\end{table}

\subsection{A Case Study}

In this section, we report a case study on the CN1 dataset.
Mining frequent sequential patterns in CN1 finds interesting patterns of topic relations in paper citations. 

Tables~\ref{tab:casePattern_1}-\ref{tab:casePattern_2} show some interesting patterns obtained by the sampling-based method with $l=1$ and $l=2$, respectively. 
Those frequent patterns show significant trends of cross-field citations, which reveal active interdisciplinary research hotspots among different research fields.
The activeness of those research hotspots is also confirmed by \emph{the number of related publications} (\#RP) obtained by searching \emph{Google Scholar} (\emph{scholar.google.com}) using the keywords in the mined sequential patterns. 
Specifically, the patterns in Tables~\ref{tab:casePattern_1}-\ref{tab:casePattern_2} are sorted by the estimated pattern frequencies obtained by our sampling-based method. At the same time, the ranked lists are highly consistent with descending order in \#RP value. This consistency demonstrates that the results produced by the sampling-based method matches the real world patterns, and verifies the effectiveness of mining sequential patterns in database graphs.


We also look into the specific patterns and find they are meaningful. Take $\langle$(deep learning), (reinforcement learning)$\rangle$ as an example. By extracting from Google Scholar all paper citations related to this pattern, we find that many deep learning papers cite reinforcement learning papers in recent years.  As we know, deep reinforcement learning has been one of the hottest research trends in the field of deep learning.

The relationship between machine learning and social network, captured by some patterns found, is also interesting. In Table~\ref{tab:casePattern_1}, the top pattern $\langle$(machine learning), (social network)$\rangle$ reveals the hot research trend of applying machine learning algorithms to solve social network problems.
However, it is an open problem that classic machine learning algorithms are not scalable enough to handle large scale social networks, thus many researchers use approximation techniques to improve the scalability of machine learning algorithms on social networks.
This trend is captured by pattern $\langle$(machine learning), (social network), (approximation algorithm)$\rangle$ in Table~\ref{tab:casePattern_2}.

\section{Conclusion}
\label{sec:con}
In this paper, we tackled the novel problem of finding top-$k$ sequential patterns in database graphs. We designed a fast sampling-based top-$k$ sequential pattern mining algorithm. Our experiments on both synthetic data sets and real data sets showed the superior effectiveness and efficiency of the proposed sampling method in finding meaningful sequential patterns. As future work, we will further improve the scalability and extend the sampling method to handle dynamic graphs.


\bibliographystyle{spbasic}
\bibliography{reference}

\begin{thebibliography}{34}
\providecommand{\natexlab}[1]{#1}
\providecommand{\url}[1]{{#1}}
\providecommand{\urlprefix}{URL }
\expandafter\ifx\csname urlstyle\endcsname\relax
  \providecommand{\doi}[1]{DOI~\discretionary{}{}{}#1}\else
  \providecommand{\doi}{DOI~\discretionary{}{}{}\begingroup
  \urlstyle{rm}\Url}\fi
\providecommand{\eprint}[2][]{\url{#2}}

\bibitem[{Agrawal and Srikant(1995)}]{Agrawal1995}
Agrawal R, Srikant R (1995) Mining sequential patterns. In: Proceedings of the
  11th International Conference on Data Engineering, ICDE'95, pp 3--14

\bibitem[{Bartlett et~al(2002)Bartlett, Boucheron, and Lugosi.}]{Bartlett2002}
Bartlett PL, Boucheron S, Lugosi G (2002) Model selection and error estimation.
  Machine Learning 48(1-3):85--113

\bibitem[{Bonferroni(1936)}]{Bonferroni1936}
Bonferroni CE (1936) Teoria statistica delle classi e calcolo delle
  probabilita. Libreria internazionale Seeber

\bibitem[{Calders et~al(2010)Calders, Garboni, and Goethals}]{Calders2010}
Calders T, Garboni C, Goethals B (2010) Efficient pattern mining of uncertain
  data with sampling. In: Proceedings of the 14th Pacific-Asia Conference on
  Advances in Knowledge Discovery and Data Mining, PAKDD'10, pp 480--487

\bibitem[{Cherkassky(1997)}]{Cherkassky1997}
Cherkassky V (1997) The nature of statistical learning theory. IEEE
  Transactions on Neural Networks 8(6):1564

\bibitem[{Chernoff(1952)}]{Chernoff1952}
Chernoff H (1952) A measure of asymptotic efficiency for tests of a hypothesis
  based on the sum of observations. The Annals of Mathematical Statistics
  23(4):493--507

\bibitem[{Cochran(1977)}]{Cochran1977}
Cochran WG (1977) Sampling techniques, 3rd Edition. John Wiley \& Sons, Inc.

\bibitem[{Dong and Pei(2007)}]{Dong2007}
Dong G, Pei J (2007) Sequence data mining. Springer-Verlag

\bibitem[{Dutta et~al(2017)Dutta, Nayek, and Bhattacharya}]{Dutta2017}
Dutta S, Nayek P, Bhattacharya A (2017) Neighbor-aware search for approximate
  labeled graph matching using the chi-square statistics. In: Proceedings of
  the 26th International Conference on World Wide Web, WWW'17, pp 1281--1290

\bibitem[{Fournier-Viger et~al(2013)Fournier-Viger, Gomariz, Gueniche, and
  Mwamikazi}]{Fournier2013}
Fournier-Viger P, Gomariz A, Gueniche T, Mwamikazi T E (2013) Tks: efficient
  mining of top-k sequential patterns. In: Proceedings of the 9th International
  Conference on Advanced Data Mining and Applications, ADMA'13, pp 109--120

\bibitem[{Ge and Xia(2016)}]{Ge2016}
Ge J, Xia Y (2016) Distributed sequential pattern mining in large scale
  uncertain databases. In: Proceedings of the 20th Pacific-Asia Conference on
  Advances in Knowledge Discovery and Data Mining, PAKDD'16, pp 17--29

\bibitem[{Han et~al(2000)Han, Pei, Mortazavi-Asl, Chen, Dayal, and
  Hsu}]{Pei2000}
Han J, Pei J, Mortazavi-Asl B, Chen Q, Dayal U, Hsu M (2000) Freespan: Frequent
  pattern-projected sequential pattern mining. In: Proceedings of the 6th ACM
  SIGKDD international conference on Knowledge discovery and data mining,
  KDD'00, pp 355--359

\bibitem[{Kimura and Saito(2006)}]{Kimura2006}
Kimura M, Saito K (2006) Tractable models for information diffusion in social
  networks. In: Proceedings of the 10th European Conference on Principles and
  Practice of Knowledge Discovery in Databases, PKDD'06, pp 259--271

\bibitem[{Liu et~al(2014)Liu, Zhang, Xiong, Jiang, and Yang}]{Liu2014}
Liu C, Zhang K, Xiong H, Jiang G, Yang Q (2014) Temporal skeletonization on
  sequential data: Patterns, categorization, and visualization. In: Proceedings
  of the 20th ACM SIGKDD International Conference on Knowledge Discovery and
  Data Mining, KDD'14, pp 1336--1345

\bibitem[{Pei et~al(2001)Pei, Han, Mortazavi-Asl, Pinto, Chen, Dayal, and
  Hsu}]{Pei2001}
Pei J, Han J, Mortazavi-Asl B, Pinto H, Chen Q, Dayal U, Hsu M (2001)
  Prefixspan: Mining sequential patterns by prefix-projected growth. In:
  Proceedings of the 17th International Conference on Data Engineering,
  ICDE'01, pp 215--224

\bibitem[{Pfeiffer et~al(2014)Pfeiffer, Moreno, Fond, Neville, and
  Gallagher}]{Pfeiffer2014}
Pfeiffer JJ, Moreno S, Fond TL, Neville J, Gallagher B (2014) Attributed graph
  models: Modeling network structure with correlated attributes. In:
  Proceedings of the 23rd International Conference on World Wide Web, WWW'14,
  pp 831--842

\bibitem[{Pietracaprina et~al(2010)Pietracaprina, Riondato, Upfal, and
  Vandin}]{Pietracaprina2010}
Pietracaprina A, Riondato M, Upfal E, Vandin F (2010) Mining top-k frequent
  itemsets through progressive sampling. Data Mining and Knowledge Discovery
  21(2):310--326

\bibitem[{Ra{\"{i}}ssi and Poncelet(2007)}]{Rassi2007}
Ra{\"{i}}ssi C, Poncelet P (2007) Sampling for sequential pattern mining: From
  static databases to data streams. In: Proceedings of the 7th IEEE
  International Conference on Data Mining, ICDM'07, pp 631--636

\bibitem[{Ribeiro et~al(2012)Ribeiro, Wang, Murai, and Towsley}]{Ribeiro2012}
Ribeiro BF, Wang P, Murai F, Towsley D (2012) Sampling directed graphs with
  random walks. In: Proceedings of the IEEE International Conference on
  Computer Communications, INFOCOM'12, pp 1692--1700

\bibitem[{Riondato and Upfal(2012)}]{Riondato2012}
Riondato M, Upfal E (2012) Efficient discovery of association rules and
  frequent itemsets through sampling with tight performance guarantees. In:
  Proceedings of the European Conference on Machine Learning and Knowledge
  Discovery in Databases, ECML PKDD'12, pp 25--41

\bibitem[{Riondato and Upfal(2015)}]{Riondato2015}
Riondato M, Upfal E (2015) Mining frequent itemsets through progressive
  sampling with rademacher averages. In: Proceedings of the 21st ACM SIGKDD
  International Conference on Knowledge Discovery and Data Mining, KDD'15, pp
  1005--1014

\bibitem[{Shang et~al(2016)Shang, Peng, and Han}]{Shang2016}
Shang J, Peng J, Han J (2016) Macfp: Maximal approximate consecutive frequent
  pattern mining under edit distance. In: Proceedings of the 2016 SIAM
  International Conference on Data Mining, SDM'16, pp 558--566

\bibitem[{Singhal(2001)}]{AP2001}
Singhal A (2001) Modern information retrieval: A brief overview. Bulletin of
  the IEEE Computer Society Technical Committee on Data Engineering
  24(4):35--43

\bibitem[{Srikant and Agrawal(1996)}]{Srikant1996}
Srikant R, Agrawal R (1996) Mining sequential patterns: Generalizations and
  performance improvements. In: Proceedings of the 5th International Conference
  on Extending Database Technology, EDBT'96, pp 3--17

\bibitem[{Tang et~al(2008)Tang, Zhang, Yao, Zhang, and Su}]{Tang2008}
Tang J, Zhang J, Yao L, Zhang L, Su Z (2008) Arnetminer: Extraction and mining
  of academic social networks. In: Proceedings of the 14th ACM SIGKDD
  International Conference on Knowledge Discovery and Data Mining, KDD'08, pp
  990--998

\bibitem[{Thompson(2012)}]{Thompson2012}
Thompson SK (2012) Sampling, 3rd Edition. John Wiley \& Sons, Inc.

\bibitem[{Toivonen(1996)}]{Toivonen1996}
Toivonen H (1996) Sampling large databases for association rules. Proceedings
  of the Vldb Endowment 96:134--145

\bibitem[{Tong et~al(2007)Tong, Faloutsos, Gallagher, and
  Eliassi-Rad}]{Tong2007}
Tong H, Faloutsos C, Gallagher B, Eliassi-Rad T (2007) Fast best-effort pattern
  matching in large attributed graphs. In: Proceedings of the 13th ACM SIGKDD
  international conference on Knowledge discovery and data mining, KDD'07, pp
  737--746

\bibitem[{Tzvetkov et~al(2005)Tzvetkov, Yan, and Han}]{Tzvetkov2005}
Tzvetkov P, Yan X, Han J (2005) Tsp: Mining top-k closed sequential patterns.
  Knowledge and Information Systems 7(4):438--457

\bibitem[{Wang et~al(2016)Wang, Lin, Senin, Tim, Sunil, P., Crystal, and
  Susan}]{Wang2016}
Wang X, Lin J, Senin P, Tim O, Sunil G, P BA, Crystal C, Susan F (2016) Rpm:
  Representative pattern mining for efficient time series classification. In:
  Proceedings of the 19th International Conference on Extending Database
  Technology, EDBT'16, pp 185--196

\bibitem[{Ye et~al(2017)Ye, Zhou, Mautz, Plant, and B{\"{o}}hm}]{Ye2017}
Ye W, Zhou L, Mautz D, Plant C, B{\"{o}}hm C (2017) Learning from labeled and
  unlabeled vertices in networks. In: Proceedings of the 23rd ACM SIGKDD
  International Conference on Knowledge Discovery and Data Mining, KDD'17, pp
  1265--1274

\bibitem[{Zaki(2001)}]{Zaki2001}
Zaki MJ (2001) Spade: An efficient algorithm for mining frequent sequences.
  Machine Learning 42(1/2):31--60

\bibitem[{Zhang et~al(2015)Zhang, Tang, Ma, Tong, Jing, and Li}]{Zhang2015}
Zhang J, Tang J, Ma C, Tong H, Jing Y, Li J (2015) Panther: Fast top-k
  similarity search on large networks. In: Proceedings of the 21st ACM SIGKDD
  International Conference on Knowledge Discovery and Data Mining, KDD'15, pp
  1445--1454

\bibitem[{Zheng et~al(2016)Zheng, Wei, Liu, Cao, Cao, and Bhatia}]{Zheng2016}
Zheng Z, Wei W, Liu C, Cao W, Cao L, Bhatia M (2016) An effective contrast
  sequential pattern mining approach to taxpayer behavior analysis. In:
  Proceedings of the 25th International Conference on World Wide Web, WWW'16,
  pp 633--651

\end{thebibliography}

\end{document}